\newif\ifnotes
\newif\iflncs
\definecolor{DarkBlue}{RGB}{0,0,150}
\numberwithin{algorithm}{section}
\renewcommand{\paragraph}[1]{\vspace{1.5mm}\noindent \textbf{#1}}
\newcommand{\bb}{\mathbb}
\newcommand{\abs}[1]{\left|#1\right|}
\newcommand{\pST}{\; \middle\vert \;}
\newcommand{\zo}{\{0,1\}}
\newcommand{\secp}{\lambda}
\def\binset{\zo}
\newcommand{\poly}{\mathrm{poly}}
\newcommand{\Sim}{\mathcal{S}}
\newcommand{\tr}{\mathsf{tr}}
\newcommand{\bbE}{\mathbb{E}}
\newcommand{\bbN}{\mathbb{N}}
\newcommand{\cA}{\mathcal{A}}
\newcommand{\cB}{\mathcal{B}}
\newcommand{\cH}{\mathcal{H}}
\newcommand{\cM}{\mathcal{M}}
\newcommand{\cP}{\mathcal{P}}
\newcommand{\cQ}{\mathcal{Q}}
\newcommand{\cR}{\mathcal{R}}
\newcommand{\cS}{\mathcal{S}}
\newcommand{\cY}{\mathcal{Y}}
\newcommand{\cZ}{\mathcal{Z}}
\newcommand{\textabbrevstyle}[1]{\mbox{#1}}
\newcommand{\textabbrevstylebol}[1]{\mbox{\textbf{#1}}}
\newcommand{\newtextabbrev}[1]{\expandafter\newcommand\csname #1\endcsname{\textabbrevstyle{#1}\xspace}}
\newcommand{\newtextabbrevbol}[1]{\expandafter\newcommand\csname #1\endcsname{\textabbrevstylebol{#1}\xspace}}
\newcommand{\renewtextabbrevbol}[1]{\expandafter\renewcommand\csname
	#1\endcsname{\textabbrevstylebol{#1}\xspace}}
\spnewtheorem{construction}[theorem]{Construction}{\bfseries}{\itshape}
\spnewtheorem{assumption}[theorem]{Assumption}{\bfseries}{\itshape}
\spnewtheorem{Informal Theorem}[theorem]{Informal Theorem}{\bfseries}{\itshape}
\spnewtheorem{fact}[theorem]{Fact}{\bfseries}{\itshape}
\spnewtheorem{clm}[theorem]{Claim}{\bfseries}{\itshape} %
\newtheorem{theorem}{Theorem}[section]
\newtheorem{definition}[theorem]{Definition}
\newtheorem{lemma}[theorem]{Lemma}
\newtheorem{corollary}[theorem]{Corollary}
\newtheorem{claim}[theorem]{Claim}
\newtheorem{clm}[theorem]{Claim}
\newtheorem{proposition}[theorem]{Proposition}
\theoremstyle{remark}
\newtheorem{remark}{Remark}[section]
\newenvironment{boxfig}[2]{\begin{figure}\fbox{\begin{minipage}{\linewidth}
				\vspace{0.2em}
				\makebox[0.025\linewidth]{}
				\begin{minipage}{0.95\linewidth}
					{{
							#2 }}
				\end{minipage}
				\vspace{0.2em}
	\end{minipage}}}{\end{figure}}
\newcommand{\nir}[1]{$\ll$\textsf{\color{red} Nir: { #1}}$\gg$}
\newcommand{\zvika}[1]{$\ll$\textsf{\color{blue} Zvika: { #1}}$\gg$}
\newcommand{\yael}[1]{$\ll$\textsf{\color{purple} Yael: { #1}}$\gg$}
\newcommand{\nir}[1]{}
\newcommand{\zvika}[1]{}
\newcommand{\yael}[1]{}
\newcommand{\Ex}{\mathop{\bbE}}
\newcommand{\ketbra}[1]{\ket{{#1}}\bra{{#1}}}
\newcommand{\td}{\mathop{\mathrm{TD}}}
\newcommand{\pk}{\mathsf{pk}}
\newcommand{\sk}{\mathsf{sk}}
\newcommand{\tk}{\mathsf{tk}}
\newcommand{\Gen}{\mathsf{Gen}}
\newcommand{\TGen}{\mathsf{TokenGen}}
\newcommand{\Sig}{\mathsf{Sig}}
\newcommand{\TSig}{\mathsf{TokenSig}}
\newcommand{\ver}{\mathsf{Ver}}
\title{Constructive Post-Quantum Reductions}
\author{}
\author{Nir Bitansky\thanks{Tel Aviv University, Israel, \texttt{nbitansky@gmail.com}. Member of the checkpoint institute of information security. Supported by ISF grants 18/484 and 19/2137, by Len Blavatnik and the Blavatnik Family Foundation, by the Blavatnik Interdisciplinary Cyber Research Center at Tel Aviv University, and by the European Union Horizon 2020 Research and Innovation Program via ERC Project REACT (Grant 756482).} \and Zvika Brakerski\thanks{Weizmann Institute of Science, Israel,  \texttt{zvika.brakerski@weizmann.ac.il}. Supported by the Israel Science Foundation (Grant No.\ 3426/21), and by the European Union Horizon 2020 Research and Innovation Program via ERC Project REACT (Grant 756482) and via Project PROMETHEUS (Grant 780701).} \and Yael Tauman Kalai\thanks{Microsoft Research and Massachusetts Institute of Technology, MA, USA, \texttt{yael@microsoft.com}.  This material is based upon work supported by DARPA under Agreement No.\ HR001120200.  Any opinions, findings and conclusions or recommendations expressed in this material are those of the author(s) and do not necessarily reflect the views of the United States Government or DARPA.
 }
}
\date{}
\begin{document}

\maketitle

\begin{abstract}
Is it possible to convert classical reductions into post-quantum ones? It is customary to argue that while this is problematic in the interactive setting, non-interactive reductions do carry over. However, when considering quantum auxiliary input, this conversion results in a \emph{non-constructive} post-quantum reduction that requires duplicating the quantum auxiliary input, which is in general inefficient or even impossible. This violates the win-win premise of provable cryptography: an attack against a cryptographic primitive should lead to an algorithmic advantage.

We initiate the study of constructive quantum reductions and present positive and negative results for converting large classes of classical reductions to the post-quantum setting in a constructive manner. We show that any non-interactive non-adaptive reduction from assumptions with a polynomial solution space (such as decision assumptions) can be made post-quantum constructive. In contrast, assumptions with super-polynomial solution space (such as general search assumptions) cannot be generally converted.

Along the way, we make several additional contributions:
\begin{enumerate}
	\item We put forth a framework for reductions (or general interaction) with \emph{stateful} solvers for a computational problem, that may change their internal state between consecutive calls. We show that such solvers can still be utilized. This framework and our results are meaningful even in the classical setting.
	
	\item A consequence of our negative result is that quantum auxiliary input that is useful against a problem with a super-polynomial solution space cannot be generically ``restored'' post-measurement. This shows that the novel rewinding technique of Chiesa et al.\ (FOCS 2021) is tight in the sense that it cannot be extended beyond a polynomial measurement space.
\end{enumerate}

\end{abstract}
\thispagestyle{empty}
\newpage
\tableofcontents
\thispagestyle{empty}
\newpage

\newcommand{\qaux}{\ket{s}}

\section{Introduction}

The notion of provable security in cryptography has had a great impact on the field and has become a de-facto gold standard in evaluating the security of cryptographic primitives. A provably secure cryptographic primitive is stated in the form of a computational problem $P$, whose hardness is related by means of \emph{reduction} to that of another problem $Q$ which is either by itself considered intractable or in turn can be further reduced down the line. The reduction is an algorithm that solves the problem $Q$ provided that it is given access to an algorithm that solves the problem $P$.

This gives rise to the ``win-win principle'' which stands as one of the main motivations for using provably secure cryptography. The logic is the following. Either an algorithmic solution for $P$ cannot be found, i.e.\ the cryptographic primitive $P$ is secure for all intents and purposes, 
or one can find an algorithmic solution for $P$ which would imply an algorithmic solution for $Q$, thus contributing to the state of the art in algorithms design. 
Indeed, cryptographic reductions are the main working tool for the theoretical cryptographer. Numerous reductions between cryptographic primitives are known and hundreds of such reductions are published in the cryptographic literature every year.

The emergence of the quantum era in computing poses a new challenge to provable security and the win-win principle. Many existing reductions in the ``pre-quantum'' world implicitly or explicitly relied on the $P$-algorithm being classical. These reductions are thus a-priori invalid when considering quantum algorithms. A central line of investigation in the domain of post-quantum security is thus dedicated to the following question.
\begin{center}
	\emph{To what extent can pre-quantum reductions be ported to the post-quantum setting?}
\end{center}

Such conversion may not always be possible. This is particularly a concern when considering \emph{interactive} problems, i.e.\ ones where the solution to $P$ involves multiple messages being exchanged with the solver algorithm. Indeed, one of the most prominent techniques for proving security in the interactive setting, namely the notion of \emph{rewinding}, does not directly translate to the quantum setting and moreover one can explicitly show cases where pre-quantum reductions exist but post-quantum ones do not. In fact, this property was actually used to construct \emph{proofs of computational quantumness} \cite{BCMVV18} in which a party proves that it is quantum by succeeding in a task for which there is a classical impossibility result (under computational assumptions). In a nutshell, the reason is that a quantum algorithm may keep a quantum state between rounds of interaction, and this quantum state is measured and thus potentially destroyed in order to produce the next message of interaction. It is therefore not possible to naively ``rewind'' the interaction back to a previous step as is customary in many classical proofs.

The focus of this work, therefore, is on \emph{non-interactive cryptographic assumptions}. These are problems~$P$ whose syntax contains a (randomized) instance generator which generates some instance $x$, and a verifier that checks whether solutions $y$ are valid (with respect to $x$ or more generally the randomness that was used to generate $x$). The role of the solver algorithm in this case is simply to take $x$ as input and produce a $y$ that ``verifies well'' (we avoid getting into the exact formalism at this point).

Contrary to the interactive case, it is customary to postulate (often without proof) that classical reductions to non-interactive cryptographic assumptions carry over straightforwardly to the post-quantum setting since there is no rewinding. There is a simple challenge-response interface that on the face of it ``does not care'' whether the underlying $P$-solver is implemented classically or quantumly. This viewpoint, however, is overly simplistic, since the $P$ solver may use \emph{quantum auxiliary input}: a quantum state $\qaux$ that is used as a resource for solving $P$. The state $\qaux$ can be the result of some natural process upon which we have no control, or a result of some exhaustive preprocessing, or generated in the course of execution of some protocol. At any rate, the means to produce $\qaux$ are often not at our disposal, we just get a copy of the state. 

In this case, similarly to the interactive setting, the quantum state is measured whenever the $P$-solver is called, and therefore, it potentially precludes us from calling the $P$ solver more than once. This issue is often addressed in the literature by noticing that providing many copies of $\qaux$ would allow to call the $P$ solver multiple times -- namely there exists a quantum state $\qaux^{\otimes t}$ that allows to solve $Q$ given access to the $P$ solver. Therefore, the existence of a classical reduction still implies that if $Q$ is intractable \emph{even given arbitrary auxiliary input}, then the same holds for $P$.

We argue that the aforementioned common ``solution'' for post-quantum reductions in the presence of quantum auxiliary input is unsatisfactory. First and foremost, this solution strictly violates the win-win principle. While the argument above indeed implies that (some form of) intractability for $P$ follows from (some form of) intractability for $Q$, it \emph{does not} allow to convert an auxiliary-input algorithm for $P$ into an auxiliary-input algorithm for $Q$ in a constructive manner, since the transformation $\qaux \to \qaux^{\otimes t}$ is not an efficient one. An additional related concern is the \emph{durability} of such reductions. Namely, that if we wish to execute the reduction more than once (i.e.\ solve multiple instances of $Q$) then we need to duplicate the state $\qaux$ an a-priori unbounded number of times.

Given this state of affairs, the question we are facing is the following.
\begin{center}
	\emph{To what extent can pre-quantum reductions to non-interactive assumptions be ported to the post-quantum setting constructively and durably?}
\end{center}
Naturally, we do not wish to redo decades of cryptographic work in re-proving each result individually. Instead, we would like to identify the broadest class of pre-quantum reductions that can be generically converted into the post-quantum regime, and at the same time characterize the limitations where such generic conversion is not possible. This is the focus of this work, and indeed we show a generic transformation for a very broad class of reductions. Along the way we develop an adversarial model for \emph{stateful} adversaries that may be of interest in its own right, even in the \emph{classical} setting.

\subsection{Our Main Results}

We prove a general positive result for converting classical reductions into post-quantum ones. In particular we consider {\em non-adaptive reductions.} In such reductions, the set of queries to the oracle is determined before any query is made. It turns out that an important parameter in our positive as well as our negative result is the size of the \emph{solution space} of the computational problem $P$ (``the cryptographic primitive''). Our positive results apply to cases where the solution space is polynomial. One notable example the case where $P$ is a ``decision assumption'', namely the $P$ solver is a distinguisher that returns a single bit as output. Another notable example is the case where $P$ is an NP search problem, with unique solutions (e.g., injective one-way functions or unique signatures). An informal result statement follows. 

\begin{theorem}[Positive result, informal]
	There exists an efficient transformation for converting any classical non-adaptive black-box reduction from assumption $Q$ to assumption $P$, where $P$ is a non-interactive assumption with a polynomial solution space, into a  \emph{constructive and durable} 
	post-quantum reduction from $Q$ to $P$.
\end{theorem}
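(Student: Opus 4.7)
The plan is to leverage the state-repair (``small-range rewinding'') technique of Chiesa et al.\ (FOCS 2021), which shows that a quantum measurement with polynomially many outcomes can be performed while approximately restoring the pre-measurement state. Since $P$ has polynomial solution space, every call to the quantum $P$-solver $A$ is exactly such a small-range measurement, so a single copy of the auxiliary state $\qaux$ should suffice to answer polynomially many queries.

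First I put $A$ in canonical form: on query $x$, it applies a unitary $U_x$ to $\qaux$ together with fresh ancillas and then measures an answer register of size $\log N(x)$, with $N(x)=\mathrm{poly}(\lambda)$. Using the Chiesa et al.\ procedure, I build a \emph{repairing oracle} $\widetilde A$ that, on query $x$ with input state $\rho$, coherently applies $U_x$, measures the answer register to obtain a sample $y$, then runs $\mathrm{poly}(N(x),1/\epsilon)$ coherent repair operations and returns a state that is $\epsilon$-close to $\rho$ while leaving the marginal distribution of $y$ identical to one honest call of $A$ on $\rho$. To simulate the classical non-adaptive reduction $R$ on a $Q$-instance $q$, I first run $R$ to generate all queries $x_1,\ldots,x_t$ before producing any answer (here non-adaptivity is used in an essential way); then, starting from the single copy of $\qaux$, I sequentially invoke $\widetilde A$ on $x_1,\ldots,x_t$, carrying the approximately preserved state from one call to the next and collecting answers $y_1,\ldots,y_t$; finally, I feed $(y_1,\ldots,y_t)$ back to $R$ to extract a $Q$-solution. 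Setting $\epsilon = \negl(\lambda)/t$ and using a hybrid argument, the joint distribution of $(y_1,\ldots,y_t)$ produced this way is $O(t\epsilon)=\negl(\lambda)$-close in total variation to the ideal distribution in which $A$ is applied to $t$ independent copies of $\qaux$. Thus the simulated reduction succeeds with essentially the same probability as the original classical one, and the construction is black-box in $A$. Durability follows from the same bound: the final state after the $t$ queries is still $\negl$-close to $\qaux$, so the transformed reduction can be re-invoked polynomially many times on fresh $Q$-instances.

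The hard part will be the hybrid error accounting, which has two subtle pieces. First, the repair guarantee of Chiesa et al.\ is most naturally phrased for an exact input $\qaux$, while in the hybrid we must apply $\widetilde A$ to a state that is already $\delta$-close to $\qaux$; I will need a \emph{robustness} statement showing that the resulting output state is then $(\delta+\epsilon)$-close to $\qaux$, so that errors accumulate additively rather than multiplicatively across the $t$ calls. Second, I must argue that the answer statistics of each $\widetilde A$-call on the evolving state match those of an independent call on a pristine $\qaux$; this is where the polynomial solution space is used quantitatively, since the repair cost scales polynomially with $N(x)$, and where non-adaptivity is indispensable, since adaptive queries could correlate with previously measured answers in a way that the repair procedure has no means to cancel. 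Handling these two points cleanly---ideally by abstracting the repair procedure as a stateful solver in the framework developed earlier in the paper---is what turns the per-query guarantee into a constructive and durable post-quantum reduction.
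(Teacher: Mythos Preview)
Your proposal has a genuine gap: you are attributing to the Chiesa et al.\ repair procedure a guarantee it does not have. The procedure does \emph{not} return a state that is $\epsilon$-close (in trace distance) to the pre-measurement state $\qaux$. What it guarantees is weaker: the post-repair state has approximately the same \emph{value} (success probability on a freshly sampled random instance) as the pre-measurement state. The state itself can change drastically. In particular, after answering query $x_1$ and repairing, the solver may ``remember'' $x_1$ in its internal state and answer the correlated query $x_2$ adversarially, while still having high success probability on a \emph{random} instance. So your claim that the joint distribution of $(y_1,\ldots,y_t)$ is close to that obtained from $t$ independent copies of $\qaux$ is simply false, and the hybrid argument breaks at the first step. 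The paper discusses exactly this failure mode (e.g.\ the Goldreich--Levin example where the solver refuses further queries with the same $f(x)$).

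The paper's proof has an entire second phase that your proposal is missing. After obtaining a \emph{persistent} solver via CMSZ (value preserved, but answers possibly adversarially correlated across calls), one must still convert it into something that behaves like a \emph{stateless} oracle. This is done by \emph{flooding}: the real queries $x_1,\ldots,x_k$ are planted at random positions inside a long stream of i.i.d.\ dummy queries drawn from the marginal of the reduction's query distribution. Because the solver carries only $\ell$ qubits of state, an information-theoretic argument (the Plug-In Lemma, a quantum Pinsker/chain-rule bound) shows it cannot distinguish the planted real queries from the dummies, and hence must answer them as if they were independent fresh samples. A further permutation step turns the resulting memoryless solver into a stateless one. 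Non-adaptivity is used here---to permute and plant queries---rather than in the way you suggest. Without this flooding step, persistence alone is not enough to run an arbitrary non-adaptive classical reduction.
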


We prove a complementary negative result, for the case where $P$ has a large solution space. The negative result relies on the existence of classical indistinguishability obfuscation which is secure against quantum adversaries.

\begin{theorem}[Negative result, informal]
	Assume the existence of post-quantum secure indistinguishability obfuscation. Then there exist non-interactive assumptions $P$, $Q$, where $P$ has a super-polynomial solution space and the following hold. There exists a classical non-adaptive black-box reduction from assumption $Q$ to assumption $P$, but there is no such constructive post-quantum reduction.
\end{theorem}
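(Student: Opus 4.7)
The plan is to construct $P$ and $Q$ around a \emph{one-shot signing token}: a primitive, buildable from post-quantum iO together with standard one-way functions, consisting of a classical verification key $\pk$ and a quantum state $\qaux$ such that (i) given $\qaux$, one can efficiently produce a valid signature $\sigma$ on any single message $m$ of one's choice (verifiable against $\pk$); but (ii) no efficient quantum algorithm, starting from a single copy of $\qaux$, produces valid signatures on two distinct messages with more than negligible probability. The classical setting admits arbitrarily reusable $P$-solvers, while the quantum one-shot token captures precisely the phenomenon of advice that is destroyed by invocation, and this gap is what drives the separation.

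Concretely, I would define $P$ as the non-interactive assumption whose instance is a pair $(\pk, m)$ with $\pk$ freshly sampled and $m$ uniformly random, and whose valid solution is any signature $\sigma$ passing verification against $(\pk, m)$. The solution space is the signature space, taken to be super-polynomial --- either by the design of the underlying scheme or by padding signatures with a random salt ignored by the verifier --- so that the hypothesis on $P$ is met. Define $Q$ analogously with two independent challenges $m_1 \neq m_2$, whose valid solutions are pairs $(\sigma_1, \sigma_2)$ verifying against $(\pk, m_1)$ and $(\pk, m_2)$, respectively. The classical non-adaptive reduction from $Q$ to $P$ is then immediate: on input $(\pk, m_1, m_2)$, fix and submit the queries $(\pk, m_1)$ and $(\pk, m_2)$ to the $P$-oracle in one non-adaptive batch and return the answers.

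For the negative direction I would exhibit the canonical quantum $P$-solver $\Adv = (U_\Adv, \qaux)$ that runs the one-shot signing procedure and, on input $(\pk, m)$, outputs a valid signature with probability~$1$. Suppose toward contradiction that some efficient constructive post-quantum reduction $R$, given a \emph{single} copy of $(U_\Adv, \qaux)$ and oracle access to $\Adv$, solves $Q$ with noticeable probability. Unrolling the oracle queries, the composite quantum circuit consisting of $R$ together with its invocations of $U_\Adv$ on the single copy of $\qaux$ is itself an efficient quantum algorithm that, starting from $\qaux$, outputs valid signatures on two distinct messages --- directly contradicting property (ii) of the one-shot signing token.

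The main obstacle is establishing the one-shot property against the strong access model arising from the reduction: $R$ may invoke $U_\Adv$ and $U_\Adv^\dagger$ on arbitrary quantum inputs (possibly in superposition over instances), interleave measurements at will, and adaptively manipulate the joint state. The one-shot security must remain robust under such coherent, adaptive access; achieving this typically requires a hybrid argument that punctures the obfuscated signing and verification programs at the (adversarially chosen) messages and invokes iO-security together with an uncomputability-style quantum bound on extracting two outcomes from one copy of the token. A secondary point in the design is checking that the super-polynomial solution space is \emph{essential}: a polynomial-size signature space would fall under the positive result of the paper, which effectively re-invokes the $P$-solver by rewinding over the poly-size answer space, so the super-polynomial structure of the signature space is precisely what makes the negative result bite.
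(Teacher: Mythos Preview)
Your overall approach matches the paper's: build $P$ and $Q$ around tokenized signatures (what you call one-shot signing tokens), with $P$ asking for one signature and $Q$ for two, and use the token as quantum auxiliary input to a $P$-solver. However, there is a genuine gap in how you place the public key.

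You let the instance generator of $P$ freshly sample $\pk$ as part of each instance $(\pk, m)$. But then your quantum $P$-solver, whose auxiliary state $\qaux$ is a token for some \emph{fixed} key pair $(\sk^*, \pk^*)$, does \emph{not} solve $P$ with probability~$1$ --- or even with noticeable probability. On a random instance $(\pk, m)$ the token is useless unless $\pk = \pk^*$, which happens with negligible probability. So the premise of the negative direction (``there exists a quantum $P$-solver with noticeable one-shot advantage'') fails, and the contradiction never gets off the ground.

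The paper's fix is to hardwire $\pk$ into the (non-uniform) verifier of the assumption itself, so that $P = P_{\pk}$ and $Q = Q_{\pk}$ form a \emph{collection} of assumption pairs indexed by $\pk$. The generator of $P_{\pk}$ samples only the message; the token for that specific $\pk$ then gives a perfect $P_{\pk}$-solver. The impossibility is accordingly stated for \emph{explicit} reductions --- ones that receive the description of $(P,Q)$ (hence $\pk$) as part of their input but get no further non-uniform advice about the assumption. You should adopt this formulation; with $\pk$ inside the instance, the construction does not work.

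A smaller point: your ``main obstacle'' paragraph is over-engineered. Once you observe that the composite of the reduction together with its solver invocations is a single efficient quantum algorithm holding one copy of the token, the security definition of tokenized signatures --- which is stated against \emph{arbitrary} QPT adversaries given $(\pk, \qaux)$ --- finishes the argument directly. No puncturing or iO hybrids are needed at this stage; those are already internal to the construction of tokenized signatures from iO (Coladangelo et~al.), which you may cite as a black box.
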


As explained above, in order to address the question of constructiveness, we need to develop a new adversarial model and a host of tools to address this question. An account of these intermediate contributions appears in the technical overview below.

\subsection{Our Techniques and Additional Contributions}

Known approaches fall short of achieving constructiveness and durability since they regard quantum auxiliary input similarly to its classical counterpart, despite the inherent difference of the inability to duplicate or reuse quantum information. We assert that the process of making multiple calls to an algorithm with quantum side information is inherently \emph{stateful}. Namely, the internal state of the ``oracle'' changes and evolves over time. In this work we put forth a framework for stateful solvers, namely algorithms that change their internal state and thus their behavior over time.

In the post-quantum setting, reductions start from \emph{one-shot} solvers. That is, ones that have an initial state that allows them to provide an answer for a single instance of $P$ successfully, but afterwards all bets are off. It seems natural (and, as we show, turns out to be useful) to consider stateful solvers that propagate their $P$-solving property throughout an execution, we call this property \emph{persistence}. Persistent solvers evolve their state in an arbitrary way subject to being able, at any point in their evolution, to successfully answer a $P$-query (with some noticeable advantage).

\paragraph{A Framework for Stateful Solvers.} Section~\ref{sec:modeling} is dedicated to formally defining the notion of a (potentially stateful) solver and quantifying its success probability in solving a problem $P$. We accordingly provide definitions for a post-quantum reduction in this setting, and more specifically the notion of a post-quantum black-box reduction. The standard notion of a classical black-box reduction is recovered as a special case of our definition, when specializing to so-called \emph{stateless} $P$-solvers. 

Using our new formalism, the task at hand is to convert a reduction that expects to be interacting with a \emph{stateless} solver, into one that is successful even when given a \emph{one-shot stateful} solver.

\paragraph{One-Shot Solvers Imply Persistent Solvers.} One-shot solvers may seem quite useless, since on the face of it they may only successfully respond to a single query. However, our first technical result, in Section~\ref{sec:restoration}, is that they can in fact be converted generically (but in a non-black-box manner) into persistent solvers. Namely, ones that can answer an \emph{a-priori unbounded} number of queries and maintain roughly the same success probability. The persistent solver has a state of length that is polynomially related to that of the one-shot solver. The running time of the persistent solver increases with each query it is being asked. That is, the time complexity of answering the $t$-th query scales with $\poly(t)$ for a fixed polynomial. This still ensures that for any polynomial-length sequence of queries, the total time to answer all queries is bounded by a fixed polynomial. The persistent value of the resulting solver (i.e.\ the value that is maintained for an a-priori unbounded number of times) is itself a random variable that is determined during the conversion process. The expectation of the persistent value is equal to the one-shot value of the solver we start from. (We note that it is inherently impossible to achieve a non-probabilistic behavior, i.e.\ to ensure a persistent value that is always above some threshold.\footnote{To see this, consider the case where the one-shot auxiliary input $\qaux$ is a superposition giving weight $\sqrt{1-\varepsilon}$ to a value $\ket{\bot}$  that always makes the $P$-solver fail, and giving weight $\sqrt{\varepsilon}$ to a state that makes the $P$-solver perfectly successful. Then, by trace-distance considerations, any processing of $\qaux$ must be $\varepsilon$-statistically-indistinguishable from a case where $\qaux = \ket{\bot}$. Therefore, with probability at least $1-\varepsilon$ the persistent value will be trivial. Nevertheless, using a Markov argument, if we start from a one-shot solver with a non-negligible advantage, we recover, with a non-negligible probability, a many-shot solver with a non-negligible persistent advantage.})

Our transformation is an extension of the techniques in the recent work of Chiesa, Ma, Spooner and Zhandry~\cite{CMSZ21}, that can be interpreted as showing such a transformation for ``public-coin'' cryptographic assumptions (ones where the instances are uniformly distributed and the verification requires only the instance and the solution, and not the randomness that was used to generate the instance). It is only in this step that we have the restriction that the solution space of the problem needs to be polynomial, due to limitations of the \cite{CMSZ21} technique. Our negative result (further discussed below) proves that these limitations are inherent.

The conversion from one-shot to persistent is the only transformation that uses the solver in a non-black-box manner. In the rest of our (positive) results we take a persistent $P$-solver and a bound on the length of its auxiliary quantum state and only make black-box use of this solver, i.e., provide instances as input and receive solutions as output.  We do not further intervene with the evolution of the state between consecutive calls to the solver.

Once we transformed our solver to being persistent, we are guaranteed that we can make multiple $P$ queries, and each one will be answered by a ``successful'' solver. It may seem that our mission is complete. However, this is far from being the case. While all queries are answered by a successful solver, these solvers may be arbitrarily correlated. For example, thinking about a simple linearity test where a reduction queries $x_1, x_2, x_3 = x_1\oplus x_2$ and checks whether a linear relation holds. It may be the case that for each query $x_i$ we get a response $y_i$ from an approximately-linear function, and yet the solver ``remembers'' that $x_1, x_2$ were previously made as queries, and deliberately fails on $x_1 \oplus x_2$ in the next query. Another example, that will be quite useful to illustrate our transformation is that of the Goldreich-Levin (GL) hardcore bit \cite{GL89}, where queries take the form $(f(x), r_i)$, always with the same $f(x)$, and with additional correlations between the $r_i$ values across different queries. In particular, it may be the case that once a query with some value $f(x)$ has been made, the solver refuses to meaningfully answer any additional queries with the same $f(x)$.\footnote{We note that while the \emph{classical} GL reduction, falls under our umbrella of non-adaptive reductions, in this specific case, it is in fact known how to devise a single-query \emph{quantum} reduction~\cite{AC02}. This, however, does not resolve the question of durability, and more importantly does not provide a general framework for all non-adaptive reductions.}

We note that attributing adversarial behavior to the solver is done for purposes of analysis. Our transformation from one-shot to persistent appears quite ``innocent'' and we do not know whether it can actually generate such pathological behavior that will prevent reductions from running. However, we cannot rule it out and therefore we consider a worst-case adversarial model.

When described in this way, it seems that only very specialized reductions can be carried over to the post quantum setting. For example, ones that employ a strong form of random self reduction when making solver queries. One such case is the search-to-decision reduction for the learning with errors problem \cite{Regev05}. However, as the GL hardcore bit example demonstrates, this doesn't even extend to all search to decision reductions. We must therefore find a new way to utilize stateful solvers. Indeed, the handle that we use is that while the solver may change its behavior adversarially, its adversarial behavior is constrained by the length of the auxiliary state $\qaux$ that it uses. We will indeed leverage the fact that this state is polynomailly bounded to limit the adversarial powers of the solver and handle more general reductions.

Before moving on to describe our techniques in this context, we notice that while this adversarial model (of black-box access to a persistent solver) emerged as a by-product of our work on quantum reductions, it is nevertheless a valid model in its own right in both the quantum and classical setting. We may consider interacting with an adversary/solver that is \emph{only} guaranteed to be noticeably successful at every point in time but, unlike the standard notion of an ``oracle'', may change its behavior over time. In our case, we allow the behavior to change arbitrarily, so long that the amount of information carried over between executions is bounded (in our case, by the length of the state, which is polynomially bounded).

\paragraph{Memoryless Persistent Solvers.} Our next step, in Section~\ref{sec:memoryless}, is to show that a persistent solver, even with adversarial behavior, can be effectively converted into a more predictable form of solver that we call \emph{memoryless} (note that this is different from our final goal which is to achieve a \emph{stateless} solver). A memoryless solver keeps track of the sequence number of the question it is asked (e.g.\ it knows that it is now answering query number $4$) but it is not allowed to remember any information about the actual content of the previous queries that were made.

We show that a combination of a non-adaptive reduction and a persistent solver induce a memoryless (persistent) solver (more accurately a distribution over memoryless solvers). These memoryless solvers are accessible using a simulator that, given access to the reduction and the original solver, efficiently simulates the interaction of the reduction with the induced memoryless solver, up to inverse-polynomial statistical distance. 
Note that we require that the reduction is non-adaptive. Namely, its queries to the solver can be arbitrarily correlated (as in the GL case), but the identity of the queries must not depend on the answers to previous queries.

The transformation relies on the fact that the solver has a bounded amount of memory, say $\ell$ qubit of state that is propagated through the execution. Our strategy is to dazzle the solver with an abundance of i.i.d dummy queries, that are sampled from the marginal distribution of the ``real'' queries (for example, in the GL case, each dummy query will have the form $(f(x_i), r_i)$ where $x_i, r_i$ are both random). In between the dummy queries, in random locations, we plant our real queries, in random order. We prove that the solver, having only $\ell$ qubits of state, must answer our real queries as if they were dummy queries. This requires us to develop a proper formalism and to prove a new lemma (Plug-In Lemma) using tools from quantum information theory. See Section~\ref{apx:pluginproof} for the full details.

\paragraph{Stateless Solvers at Last.} Finally, we show in Section~\ref{sec:memlesstostateless} that memoryless solvers imply stateless solvers. This is again shown by means of simulation via a similar formalism to the previous result. Recall that a stateless solver must answer all queries according to the same distribution. This transformation again relies on the non-adaptive nature of the reduction, namely on the ability to generate all solver-queries ahead of time. To do this, we notice that we can think of a memoryless solver simply as a sequence of stateless solvers that can be queried one at a time. Therefore, we can consider the induced stateless solver that at every query picks a random solver from this collection and executes it on the query. This indeed will result in a stateless solver. The solving probability of the induced stateless solver is simply the average success probability of solvers in the collection, which is concentrated due to persistence. Moreover, this behavior can be simulated by randomly permuting the \emph{queries}, while still calling the solvers according to their order in the sequence.\footnote{Remember that we have access to the memoryless solver which only allows to make queries in order.} 

This way, asking the queries in a permuted order to the memoryless solver will (almost) mimic the action of sampling a solver from the collection independently for each query. The only reason why this mimic is not perfect is that permuted queries are sampling ``without repetition'', i.e.\ none of the solvers in the sequence defined by the memoryless solver will be queried twice, whereas in the ideal strategy we described above, it is possible that the same solver from the sequence will be sampled more than once. We deal with this by making the number of solvers in the sequence so big, that the probability of hitting the same solver twice becomes very small (inversely polynomial for a polynomial of our choice). We simply add to our queries of interest a large number of dummy ``0 queries'', and perform a random permutation on this extended set of queries. %

\paragraph{Putting Things Together.} In Section~\ref{sec:classicalna} we put all of the components together and prove our main positive result, that any classical non-adaptive reduction which relies on a non-interactive polynomial-solution-space assumption can be made post quantum. This requires putting together the components in a careful manner.

The fact that the first step in our transformation was to produce a persistent $P$, allows us to continue using it even after having solved a $Q$ instance. In particular, this means that we can solve additional instances of $Q$, or use it to solve additional instances of $P$ or any other problem $Q'$ for which a non-adaptive reduction to $P$ exists. In particular, this property implies that our reduction is \emph{durable}.

\paragraph{A Negative Result for Search Assumptions.} We show in Section~\ref{sec:search}, that a generic conversion from classical to constructive quantum reductions is not always possible, even for the case of non-adaptive reductions to non-interactive assumptions. In particular, if $P$ is an assumption with a large solution space (intuitively, a search assumption) this may not be possible.

We show our negative result by relying on a recently introduced primitive known as tokenized signatures \cite{Ben-DavidS16}. These are signature schemes with the standard classical syntax, but for which it is possible to produce a quantum \emph{signature token}. The signature token allows to generate a single classical signature for a message of the signer's choice, but only one such signature can be created. Tokenized signatures have been constructed relative to a classical oracle \cite{Ben-DavidS16} or based on cryptographic assumptions \cite{ColadangeloLLZ21}.

We can define an ``assumption'' which is essentially the task of signing a random message using a tokenized signature scheme.\footnote{The assumption is instantiated by a verification key which we can think of as non-uniformity of the assumption, see discussion in Section~\ref{sec:search}.} In the classical world, there is a trivial reduction between the task of signing one random message and the task of signing two random messages. However, if we consider a quantum solver that holds the token as auxiliary input, then by definition it should not be possible to use it to obtain two signatures for two different messages. Our negative result holds for any conversion process that is constructive, and in particular does not obtain any implicit non-uniform advice about the assumption.

\subsection{Other Related Work}

The question of which reductions can be translated from the classical to the post-quantum setting also received significant attention in the context of the random-oracle model (ROM), starting from the work of Boneh et~al.~\cite{BonehDFLSZ11}. The question asked in these works is whether it is possible to convert reductions in the classical ROM into ones the quantum ROM (QROM, where the adversary is allowed to make quantum queries to the oracle).
There are several results proving that specific schemes that are secure in the ROM are also secure in the QROM~\cite{Zhandry12,TarghiU16,JiangZCWM18,Katsumata0Y18,Zhandry19,DonFMS19,LiuZ19,DonFM20,KramerS20}.  Recently, a more general ``lifting theorem" was given in \cite{YamakawaZ21}, showing how to convert a proof in the ROM to one in the QROM for any ``search-type game" where a challenger makes only a {\em constant} number of queries to the random oracle.  This work also presented a negative result, showing that there are schemes that are secure in the ROM yet are insecure in the QROM.
While the general motivation in these works is similar to ours, the question they ask is quite different from ours. In the ROM/QROM, the solver is allowed to make queries to the oracle (which is simulated by the reduction), which is more similar to the setting where interactive-assumptions are used.

Our memoryless transformation (Section~\ref{sec:memoryless}) relies heavily on the state of the solver being bounded in length. The idea that bounded quantum memory can be used to restrict an otherwise all powerful adversary is at the core of the bounded quantum storage model. It can be shown (see, e.g., \cite{DamgardFSS08}) that it is possible to achieve cryptographic abilities against strong adversaries while relying only on a limit on the amount of quantum storage they can use. This setting is quite different from ours, though, since the quantum bounded storage model allows an unbounded amount of classical memory, which in our setting would make it impossible to achieve any result. Indeed, the bounded storage model requires quantum communication (whereas our reduction-solver communication is completely classical), and thus the set of tools and techniques that are used in both settings are completely different.

\section{Preliminaries and Tools}\label{sec:prel}

\newcommand{\qv}[1]{\boldsymbol{{#1}}}
\newcommand{\qsts}[1]{\mathbf{S}(#1)}

We say that a given function $f(x_1,\dots,x_k)$ is  $\poly(x_1,\dots,x_k)$, if there exist constants $c,C$ such that $(x_1\cdot x_2\cdot\ldots\cdot x_k)^c\leq f\leq (x_1\cdot x_2\cdot\ldots\cdot x_k)^C$.

We denote by $\td$ the trace distance between two matrices. 

\paragraph{Algorithms.} By default, when referring to an {\em algorithm} we mean a classical probabilistic (resp. quantum) algorithm. Algorithms may be uniform or non-uniform, meaning that they have {\em classical} advice related to the input size (we specify when uniformity matters). An {\em efficient} algorithm is also polynomial time. 

\paragraph{Quantum Notation.}  We use standard quantum information in Dirac notation. We denote quantum variables in boldface $\qv{x}$ and classical variables in lowercase $x$. The density matrix of $\qv{x}$ is denoted $\rho_{\qv{x}}$. Classical variables may also have (diagonal) density matrices. Quantum variables $\qv{x}, \qv{y}$ have a joint density matrix $\rho_{\qv{x}, \qv{y}}$ if they can be jointly produced by an experiment. As usual, $\qv{x}, \qv{y}$ are independent if $\rho_{\qv{x}, \qv{y}} = \rho_{\qv{x}}\otimes\rho_{\qv{y}}$. We never assume that quantum variables are independent unless we explicitly say so.
Quantum registers are denoted in capital letters. We also sometimes use capital letters to denote distributions, where it is clear from the context.
For a finite Hilbert space $\cH$ we denote by $\qsts{\cH}$ the set of density matrices over quantum states in $\cH$.

A quantum procedure is a general quantum algorithm that can apply unitaries, append ancilla registers in $0$ state, perform measurements in the computational basis and trace out registers. The complexity of $F$ is the number of \emph{local} operations it performs (say, operations on up to $3$ qubits are considered local). If $F$ is a quantum procedure then we denote by $F(\qv{x})$ the application of $F$ on $\qv{x}$.
   Any unitary induces a quantum procedure that implements this unitary, which does not perform measurements or trace out registers, we call this procedure ``a unitary quantum circuit''.

\paragraph{Purification of Quantum Procedures and States.} 
A quantum procedure may introduce new ancilla qubits, perform intermediate measurement throughout its computation and discard registers or parts thereof. However, any quantum procedure can be \emph{purified} into unitary form without much loss in complexity \cite{QCQI}. This is formally stated below.

\begin{proposition}\label{prop:purification}
	Let $C$ be a general quantum procedure of complexity $s$. Then it is possible to efficiently generate a \emph{unitary} quantum circuit  $\widehat{C}$ of size $O(s)$, such that for any quantum state $(\qv{x}, \qv{a})$,  setting $(\qv{y}, \qv{z}) = \widehat{C}(\qv{x}, \qv{0})$, it holds that $(\qv{y},\qv{a})$ has identical density matrix to $(C(\qv{x}), \qv{a})$.
\end{proposition}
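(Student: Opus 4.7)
The plan is to realize the standard Stinespring dilation at the circuit level, gate-by-gate, so that the size overhead is only a constant factor rather than exponential. I would begin by parsing $C$ as a sequence of elementary operations of four types: (i) local unitary gates, (ii) appending a fresh ancilla register in state $\ket{0}$, (iii) measurements in the computational basis, and (iv) tracing out (discarding) a register. Types (i) and (ii) are already unitary and can be carried over to $\widehat{C}$ unchanged. The construction of $\widehat{C}$ is essentially about how to handle (iii) and (iv) coherently.

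For each intermediate measurement on a qubit $q$, I would apply the principle of deferred measurement: introduce a fresh ancilla $q'$ initialized to $\ket{0}$ and apply a single $\mathrm{CNOT}$ from $q$ to $q'$, leaving $q$ in place. Any later operation in $C$ that was classically controlled by the measurement outcome can then be replaced by the same operation quantum-controlled on $q'$, which costs only $O(1)$ additional size per control. For each discard of a register, $\widehat{C}$ simply retains that register and routes it to the output register $\qv{z}$. Designating the qubits that would have been the output of $C$ as the $\qv{y}$ register, and all measurement-record ancillas together with all retained ``discarded'' registers as $\qv{z}$, this produces a unitary circuit $\widehat{C}$ of size $O(s)$, since every one of the $s$ elementary operations of $C$ is replaced by $O(1)$ unitary gates.

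Correctness I would argue by induction on the non-unitary steps of $C$: at every point, the reduced density matrix of $\widehat{C}$'s current state on the ``live'' wires matches the state of $C$ at the corresponding point, because (a) tracing out a $\mathrm{CNOT}$-copied ancilla has the same effect on the control qubit as measuring it in the computational basis, and (b) not discarding a register and then tracing it at the end is identical to discarding it immediately. Applied at the final step, this gives that tracing out $\qv{z}$ from $\widehat{C}(\qv{x},\qv{0})$ yields exactly $C(\qv{x})$. The only subtlety, which I would handle explicitly, is the presence of the bystander register $\qv{a}$ that may be arbitrarily entangled with $\qv{x}$: since both $C$ and $\widehat{C}$ act as the identity on $\qv{a}$, the induction applies to the \emph{joint} state, giving that the density matrix of $(\qv{y},\qv{a})$ after $\widehat{C}$ equals that of $(C(\qv{x}),\qv{a})$. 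This is where one has to be careful to work with joint density matrices throughout rather than collapsing to marginals prematurely, but it is not a true obstacle, only the one place where a formal argument (via $(\cE\otimes\mathrm{id})$-compatibility of Stinespring dilations) is needed.
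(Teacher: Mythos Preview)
Your proposal is correct and is precisely the standard deferred-measurement / Stinespring-at-the-gate-level argument. The paper, however, does not give its own proof of this proposition: it is stated as a preliminary fact with a citation to the textbook \cite{QCQI} and then used as a black box. So there is nothing to compare against; your sketch is essentially the argument behind that citation, and the care you take with the bystander register $\qv{a}$ (working with the joint density matrix throughout) is exactly what is needed for the form in which the proposition is stated here.
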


Likewise, any quantum state can be viewed as a reduced density matrix of the output of a unitary (which may be inefficient to implement)  . 
\begin{proposition}\label{prop:purifystate}
	Let $\qv{x}$ be a variable with density matrix $\rho_{\qv{x}}$. Then there exists a unitary $U$ over registers $XY$ such that applying $U(\qv{0},\qv{0})$, the reduced density matrix of the value in the $X$ register has density matrix $\rho_{\qv{x}}$.
\end{proposition}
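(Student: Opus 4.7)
The plan is to invoke the standard purification construction from quantum information theory, which is essentially a restatement of the Schmidt decomposition. First I would take a spectral decomposition of the density matrix $\rho_{\qv{x}} = \sum_i p_i \ket{\psi_i}\bra{\psi_i}$, where $\{\ket{\psi_i}\}$ is an orthonormal basis of the $X$ register and $p_i \geq 0$ satisfy $\sum_i p_i = 1$. Then I would introduce an auxiliary register $Y$ of dimension at least as large as the number of nonzero $p_i$, with a distinguished orthonormal basis $\{\ket{i}\}$.

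Next, I would define the pure state
\[
\ket{\phi}_{XY} \;=\; \sum_i \sqrt{p_i}\,\ket{\psi_i}_X\otimes \ket{i}_Y,
\]
and verify by direct computation that
\[
\Tr_Y\bigl(\ket{\phi}\bra{\phi}\bigr) \;=\; \sum_{i,j}\sqrt{p_i p_j}\,\ket{\psi_i}\bra{\psi_j}\,\langle j|i\rangle \;=\; \sum_i p_i \ket{\psi_i}\bra{\psi_i} \;=\; \rho_{\qv{x}},
\]
so $\ket{\phi}$ is a purification of $\rho_{\qv{x}}$ on register $XY$.

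Finally, I would extend the single-vector map $\ket{0}_X\ket{0}_Y \mapsto \ket{\phi}_{XY}$ to a unitary $U$ on the whole $XY$ space. This is standard: pick any orthonormal basis of $XY$ whose first element is $\ket{0}\ket{0}$ and any orthonormal basis whose first element is $\ket{\phi}$ (both exist by Gram--Schmidt), and let $U$ be the unitary that maps one basis to the other in order. Then $U(\ket{0},\ket{0}) = \ket{\phi}$, so applying $U$ to $(\qv{0},\qv{0})$ and restricting attention to the $X$ register yields a state whose reduced density matrix is exactly $\rho_{\qv{x}}$, as required. There is no real obstacle here; the only subtle point worth highlighting is that the proposition makes no efficiency claim about $U$ (indeed, generic $\rho_{\qv{x}}$ need not admit an efficient preparation), which is consistent with the parenthetical remark in the statement that $U$ ``may be inefficient to implement''.
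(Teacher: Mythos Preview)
Your proposal is correct and is precisely the standard purification argument (spectral decomposition, Schmidt-form purifier, extension to a unitary via completing an orthonormal basis). The paper itself does not supply a proof of this proposition; it states it as a standard fact from quantum information (citing \cite{QCQI}) and only uses that such a $U$ exists, with the explicit caveat that it may be inefficient --- exactly the subtlety you flagged.
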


\subsection{The Plug-In Lemma}
\label{sec:plugin}

The following lemma is another manifestation of information incompressibility in the quantum setting. Specifically, we are interested in an experiment in which an all powerful compressing procedure attempts to compress $t$ samples which are arbitrarily distributed into $\ell$ quantum bits. We show that this is infeasible even in the weak sense in which a decoder receives the compressed value, and a $(j-1)$-prefix of the sequence, and is required to identify the $j$-th element. We show that as $t$ increases, the probability of succeeding in the experiment drops. A formal statement follows.

\begin{lemma}[Plug-In Lemma]\label{lem:plugin}
	Let $\vec{Y} = (Y_1, \ldots, Y_t)$ be a joint distribution over $t$ classical random variables. Let $\vec{y}$ be distributed according to $\vec{Y}$. Let $\qv{s}$ be an $\ell$-qubit random variable that has arbitrary dependence on $\vec{y}$.	
	We let $\vec{y}_i$ denote the prefix $\vec{y}_i = (y_1, \ldots, y_i)$ for $1 \le i \le t$, and $\vec{y}_0$ is the empty vector (and likewise for $\vec{Y}$).
	Let $J$ be the uniform distribution over $[t]$ and let $j \gets J$. Define $y' \gets Y_J | (\vec{Y}_{j-1} = \vec{y}_{j-1})$.
	Then it holds that
	\begin{align}
		\td( (j, \vec{y}_{j-1}, y_j, \qv{s}), (j, \vec{y}_{j-1}, {y}', \qv{s}) ) \le \sqrt{\ell/(2t)}~.
	\end{align}
\end{lemma}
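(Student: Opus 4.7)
My plan is to reduce the trace-distance inequality to a statement about the Holevo information between $\qv{s}$ and each prefix-revealed block $Y_j$, and then exploit the fact that an $\ell$-qubit register can carry at most $\ell$ bits of information about $\vec{Y}$.

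First, I would express both distributions as classical-quantum states and observe that the classical marginal on $(j,\vec{y}_{j-1})$ is identical on the two sides. Conditioned on a fixed value of $(j,\vec{y}_{j-1})$, the ``real'' state is the true joint $\rho_{Y_j,\qv{s}\mid \vec{y}_{j-1}}$, while the ``ideal'' state is the product $\rho_{Y_j\mid \vec{y}_{j-1}}\otimes \rho_{\qv{s}\mid \vec{y}_{j-1}}$; this is precisely what the resampling prescription for $y'$ produces, since $y'$ is drawn from the conditional marginal of $Y_j$ given $\vec{y}_{j-1}$ and independently of $\qv{s}$. Here the reduced state $\rho_{\qv{s}\mid \vec{y}_{j-1}}$ is obtained from $\rho_{\qv{s},\vec{Y}}$ by tracing out the entire suffix $y_j,\ldots,y_t$, which matches the marginal that implicitly appears on the ``ideal'' side. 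Thus the quantity to bound equals $\Ex_{j,\vec{y}_{j-1}}\!\left[\td\!\left(\rho_{Y_j,\qv{s}\mid \vec{y}_{j-1}},\, \rho_{Y_j\mid \vec{y}_{j-1}}\otimes \rho_{\qv{s}\mid \vec{y}_{j-1}}\right)\right]$.

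Next, I would apply the classical-quantum form of Pinsker's inequality to each conditional trace distance, bounding it by $\sqrt{(\ln 2/2)\cdot I(Y_j;\qv{s}\mid \vec{y}_{j-1})}$ where $I$ denotes quantum mutual information measured in bits, and then pull the square root outside via Jensen's inequality. Using the identity $\Ex_{\vec{y}_{j-1}}[I(Y_j;\qv{s}\mid \vec{y}_{j-1})] = I(Y_j;\qv{s}\mid \vec{Y}_{j-1})$ together with averaging over uniform $j\in[t]$, the bound becomes $\sqrt{(\ln 2/(2t))\cdot \sum_{j=1}^t I(Y_j;\qv{s}\mid \vec{Y}_{j-1})}$. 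The chain rule for quantum mutual information collapses the sum to $I(\vec{Y};\qv{s})$, and the Holevo bound then gives $I(\vec{Y};\qv{s})\le S(\qv{s})\le \ell$ since $\qv{s}$ lives on $\ell$ qubits and $\vec{Y}$ is classical. Because $\ln 2<1$, the final bound is at most $\sqrt{\ell/(2t)}$, as required.

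The main obstacle I expect is not conceptual but bookkeeping: invoking the correct CQ versions of Pinsker and the chain rule, and verifying carefully that the product-marginal state on $\qv{s}$ is precisely what the lemma's resampling of $y'$ produces (as opposed to, say, the marginal conditioned also on $y_j$). No rewinding or quantum-specific manipulation of $\qv{s}$ is needed—the whole argument runs at the level of entropies, trace distances, and the Holevo information capacity of $\ell$ qubits.
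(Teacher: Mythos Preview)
Your proposal is correct and follows essentially the same route as the paper's proof: condition on $(j,\vec{y}_{j-1})$, identify the two sides as joint vs.\ product of marginals, apply quantum Pinsker, pull the square root out via Jensen, and bound the averaged conditional mutual information by $\ell/t$ using the chain rule together with $H(\qv{s})\le \ell$ and non-negativity of $H(\qv{s}\mid \vec{Y})$ for classical $\vec{Y}$. The only cosmetic difference is that the paper writes the chain rule as an explicit telescoping sum $\sum_j(H(\qv{s}\mid\vec{y}_{j-1})-H(\qv{s}\mid\vec{y}_j))$ rather than invoking the identity $\sum_j I(Y_j;\qv{s}\mid\vec{Y}_{j-1})=I(\vec{Y};\qv{s})$ and the Holevo bound by name.
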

Note that the above two distributions are \emph{not} identical even though $(j, \vec{y}_{j-1}, y_j)$ and $(j, \vec{y}_{j-1}, y')$ are identically distributed. The reason is that in both cases, $\qv{s}$ is always generated as a function of $\vec{y}$, i.e.\ using $y_j$ and not $y'_j$.

 The lemma is proven in Section~\ref{apx:pluginproof}.

\newcommand{\atg}{\mathsf{a}}
\newcommand{\val}{\mathsf{val}}
\newcommand{\stt}{\mathsf{state}}
\newcommand{\hstt}{\widehat{\mathsf{state}}}

\section{Assumptions, Stateful Solvers, and Reductions}
\label{sec:modeling}

In this section, we formally define the concepts of non-interactive cryptographic assumptions, stateful solvers, and their value and advantage in breaking an assumption.

\subsection{Non-Interactive Assumptions}
We define the notion of a non-interactive (falsifiable) cryptographic assumption as in \cite{Naor03,HaitnerH09}. While we frame the notion as ``cryptographic'', it can be viewed more generally as a notion for average-case problems where the solution can be verified.

\begin{definition}[Non-Interactive Assumption]\label{def:assump}
	A non-interactive assumption is associated with polynomials $d(\secp),n(\secp),m(\secp)$ and a tuple $P = (G,V,c)$ with the following syntax. The \emph{generator} $G$ takes as input $1^\secp$ and $r \in \binset^d$, it returns $x \in \binset^n$. The \emph{verifier} $V$ takes as input $1^\secp$ and $(r, y) \in \binset^d \times \binset^m$ and returns a single bit output. (Both $G$ and $V$ are deterministic.) $c(\secp)$ is the assumption's \emph{threshold.}

	We say that $P$ is \emph{falsifiable} if $G,V$ are uniform polynomial-time algorithms (in their input size). 
\end{definition}

We also define a property called {\em verifiably-polynomial image} that roughly speaking requires that any instance has at most polynomial many solutions and that this can be verified in some weak sense. The property in particular captures problems where the solution space $\zo^m$ is of polynomial size such as decision problems (where $m=1$), and problems in $\NP$ where there are a few solutions per instance (such as injective one-way functions).

\def \imver {K}
\begin{definition}[Verifiably-Polynomial Image]\label{def:ver_poly_im}
A non-interactive assumption $P$ has a \emph{verifiably-polynomial image} if there exists an efficient verifier $\imver$ and a polynomial $k=\poly(\secp)$, such that for any instance $x \in \zo^n$, the set $Y_x :=\{y: \imver(1^\secp,x,y)=1\}$ of $\imver$-valid solutions is of size at most $k$ and for any valid instance $x=G(1^\secp,r)$ and solution $y$ such that $V(1^\secp,r,y)=1$, it holds that $y\in Y_x$.
\end{definition}

\noindent
The traditional notion of the advantage in solving an assumption $P$ is measured in terms of the distance between the solving probability (which we term the value) and the threshold $c$.

\begin{definition}[Value and Advantage of Classical Functions]\label{def:val_adv_state_classic}
Let $P=(G,V,C)$ be a non-interactive assumption and let $f = \set{f_\secp:\zo^n\rightarrow\zo^m}_\secp$ be a family of (possibly randomized) functions. For every $\secp \in \bbN$, we define the corresponding value and advantage:
\begin{align*}
  \val_P[f](\secp) := \Pr\left[V(1^\secp, r, y)=1 \pST \begin{array}{c} r \gets \zo^d\\
                  x = G(1^\secp,r)\\
                    y \gets f_\secp(x) \end{array}\right] \iflncs,\else\hspace{1cm}\fi \atg_P[f](\secp) := \abs{\val_P[f](\secp) - c(\secp)}\enspace,
\end{align*}
where the probability is also above the randomness of $f_\secp$ in case it is randomized.
 \end{definition}

\subsection{Stateful Solvers}
The premise of our work is that in the quantum setting, one ought to think about \emph{stateful} solvers, which generalizes the standard treatment of a solver as a one-shot algorithm. We now define this formally.

\begin{definition}[Stateful Solvers: Syntax]\label{def:solvers}
	Let $P$ be a non-interactive assumption. 
		
		\item Let $\ell=\ell(\secp)$ be a function. A classical (resp.\ quantum) \emph{$\ell$-stateful solver} $\cB = (B, \stt_0=\{  \stt_{\secp,0}\}_\secp)$ is defined as follows.  
		
		\begin{itemize}
			\item $B$ is a classical (resp.\ quantum) algorithm that takes as input $1^\secp$, $1^t$, $x \in \binset^n$ and $\stt$ which is an $\ell$-bit (resp.\ qubit) string, and outputs a value $y \in \binset^m$ and $\stt'$ which is an $\ell$-bit (resp.\ qubit) next-state. We let $B(\cdots)_\mathrm{y}$ denote the $y$ output and $B(\cdots)_\mathrm{st}$ denote the $\stt'$ output.
						
			\item $\stt_0 =\{ \stt_{\secp,0}\}_\secp$ is a sequence of classical (resp.\ quantum) states consisting of $\ell=\ell(\secp)$ bits (resp.\ qubits).
		\end{itemize}
		We say that $\cB$ is efficient if $B$ runs in time $\poly(\secp,t,n)$; i.e., in polynomial time in the lengths of its inputs. 
	
\end{definition}

\begin{remark}[Non-uniformity]
The algorithm $B$ may have a non-uniform classical advice. It does not have any additional quantum advice.
\end{remark}

\begin{remark}[Dependence on Runtime]
Our definition allows the running time of efficient stateful solvers to depend polynomially on the ``iteration'' $t$. In particular, for any polynomial number of solving attempts $t=\poly(\secp)$, the overall running of the solver is polynomial. One could also consider a more stringent definition that requires that each call runs in fixed polynomial time independently of the iteration number~$t$. Jumping forward, we will show how a solver can preserve its solving ability through time, but at the cost of running for longer in each step. Doing this according to the more stringent time-independent definition remains an open question.
\end{remark}

It will be useful to define some properties of solvers with respect to an extension of the sovler's execution transcript. The extension corresponds to the would-be transcript of a purified version of the solver, running on a purified version of the initial state. This will allow us to get a precise well-defined handle on the evolution of quantum states throughout the lifetime of the solver.  The extended transcript will only be used for purposes of definition and analysis and will never be required algorithmically.  %

\def\hy{\hat{y}}
\def\hY{\hat{Y}}
\def\ts{\mathsf{ts}}
\def\hts{\widehat{\ts}}

\def\ovp{\widehat{P}}
\def\ovb{\widehat{B}}
\def\ova{\widehat{A}}
\def\ovcb{\widehat{\cB}}
\def\ovig{\widehat{\instg}}
\def\ovv{\widehat{V}}
\renewcommand{\Sim}{\mathsf{SimMemless}}
\newcommand{\simsl}{\mathsf{SimStateless}}
\newcommand{\simall}{\mathsf{Sim}}

\def \hcB {\widehat{\cB}}
\def \hB {\widehat{B}}
\begin{definition}[Stateful Solvers: Purifying
Values]\label{def:solversext}
Consider a solver $\cB = (B, \stt_0=\{  \stt_{\secp,0}\}_\secp)$.
Let $B_{\secp, t, x}$ denote the quantum procedure that takes $\qv{s}$ as input and produces $B(1^\secp, 1^t, x, \qv{s})$ over registers $SY$. By Proposition~\ref{prop:purification}, we can consider its purification $\ovb_{\secp, t, x}$ which acts on registers $SY\hY$ and takes as input $(\qv{s},\qv{0},\qv{0})$. Then define $\ovb(1^\secp, 1^t, x, \qv{s})$ as the algorithm that computes $(\qv{s}', \qv{y}, \qv{\hy}) = \ovb_{\secp, t, x}(\qv{s},\qv{0},\qv{0})$, measures $(\qv{y}, \qv{\hy})$ in the computational basis to obtain $(y,\hy)$, and then outputs $\qv{s}'$ as the $\stt$ output, $y$ as the solution output, and $\hy$ as the \emph{purifying output}.

  In addition, by Proposition~\ref{prop:purifystate}, there exists a (possibly inefficient) unitary  $\ovb_{0,\secp}$ that operates on two registers $S\hY$ such that when applying $(\qv{s}_0, \qv{\hy}_0)\gets \ovb_{0,\secp}(\qv{0}, \qv{0})$, the reduced density matrix of $\qv{s}_0$ is identical to that of $\stt_0$. 
Then define $\ovb_0(1^\secp)$ as the quantum procedure that computes $(\qv{s}_0, \qv{\hy}_0)\gets \ovb_{0,\secp}(\qv{0}, \qv{0})$, measures $\qv{\hy_0}$ in the computational basis, and then outputs $\qv{s}_0$ as $\stt_0$ and $\hy_0$ as the \emph{purifying initial value}.

We refer to the collection $\hcB = \set{\hB_{i,\secp,x}}$   as {\em a purification of $\cB$} (it is not unique).
\end{definition}
\begin{remark}
We note that the purifying values can be arbitrarily long. These values will only be used for analysis purposes and are never produced in an actual execution, and hence we do not require any bound whatsoever on the length of the purifying values or the complexity of producing them.
\end{remark}

\medskip
\noindent

We now define the concept of a \emph{solver interaction,} which captures the process of repeatedly invoking a stateful solver by a given algorithm.

\begin{definition}[Solver Interaction]\label{def:solver_interact}
	Let $P=(G,V,c) $ be a non-interactive assumption. For any stateful solver $\cB = (B, \stt_0)$ and corresponding purification $\hcB$, and any algorithm $A$ with input $z \in \zo^*$, we consider the process $A^\cB(1^\secp,z)$ of the algorithm interacting with the solver. We define this process in two different yet equivalent manners: one which is efficient given the ability to execute $B$, and one which may be inefficient but implies an identical output distribution. The latter will include a production of all purifying values (Definition~\ref{def:solversext}) which will be useful for definitions and analysis.
	\begin{itemize}
	
	\item We let $\stt_0$ be as defined in $\cB$. \\ Equivalently: We let $(\stt_0, \hy) \gets \ovb_0(1^\secp)$.
	\item
	$A$ is invoked on input $(1^\secp,z)$ and at every step $i\geq 1$:
	\begin{enumerate}
	    \item $A$ submits a query $x_i \in \zo^n$.
	    \item  %
	    $(y_i,\stt_i) \gets B(1^\secp, 1^i, x_i, \stt_{i-1})$ is invoked. \\ Equivalently: $(\hy_i, y_i,\stt_i) \gets \ovb(1^\secp, 1^i, x_i, \stt_{i-1})$ is invoked.  
	    \item  
	    $A$ obtains $y_i$, and proceeds to the next step.
	\end{enumerate}
	\item
	At the end of the interaction $A$ may produce an output $w$.
	\end{itemize}
	 We sometimes refer to $A$ as a {\em solver-aided} algorithm and use the shorthand $A_z^\cB$ for the solver interaction and $A_z^{\hcB}$ for the purified solver interaction. 
	 We refer to the random variables $\stt_0,\stt_1,\stt_2,\dots$ as the state random variables of the interaction. We refer to the list of pairs of generated instances and solutions $(x_i,y_i)$ as the transcript of the interaction and denote it by $\ts$. We also define the \emph{extended transcript} $\hts$ of the execution as consisting of the value $\hy_0$ followed by a list to triples $(x_i,y_i,\hy_i)$. Given an extended transcript $\hts$, we can produce the standard transcript $\ts$ by removing all purifying values. We call this action \emph{redaction} and say that $\ts$ is the redacted transcript induced by $\hts$. Generating an extended transcript according to the purified solver interaction $A_z^{\hcB}$ and redacting it produces an identical distribution to the generation of the redacted transcript by direct interaction $A_z^\cB$. The length of a transcript/extended-transcript is the number of pairs/triples it contains (this means that an extended transcript of length $0$ is not empty since it still contains $\hy_0$ . The $i$-prefix of a transcript/extended-transcript is denoted $\ts_i$/$\hts_i$ and contains the first $i$ pairs/triples (and also $\hy_0$ in the extended case).
\end{definition}

We show that the purifying values indeed purify the entire solver interaction, in the sense that they determine all states $\stt_i$ as pure states for any solver interaction.

\begin{proposition}\label{prop:exttsstate}
	Let $\cB = (B,\stt_0)$ be a solver with purification $\hcB$ and consider the extended transcript $\hts$ of the solver interaction $A_z^{\hcB}$ and let $t$ be its length. Then for all $i \le t$, the state $\stt_i$ is pure conditioned on $\hts_i$. Specifically, it has density matrix $\ketbra{s_{\hts_{i}}}$ that is completely determined by $\hts_i$ (and therefore by the classical string $\hts$) and does not depend on any other parameter of the execution.
\end{proposition}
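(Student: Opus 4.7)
The plan is a short induction on $i$, exploiting the fact that the purified interaction $A_z^{\hcB}$ has a particularly clean structure: by Propositions~\ref{prop:purification} and~\ref{prop:purifystate}, every non-unitary operation inside $\cB$ has been replaced by a fixed unitary followed by a deferred computational-basis measurement whose outcome is explicitly recorded in $\hts$. The whole argument then reduces to the standard observation that applying a unitary to a pure state and then measuring one subsystem in the computational basis leaves the remaining subsystems, conditioned on the outcome, in a pure state that is a deterministic function of that outcome and of the (fixed) unitary.

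For the base case $i=0$, I would note that the purified initialization applies the fixed unitary $\ovb_{0,\secp}$ to $(\qv{0},\qv{0})$ on registers $S\hY$, producing a pure state $\ket{\phi_0}_{S\hY}$, and then measures $\hY$ to obtain $\hy_0$. Conditioned on the outcome $\hy_0=h$, the $S$-register carries the pure state proportional to $(I_S\otimes\bra{h}_{\hY})\ket{\phi_0}$, which depends only on $h=\hts_0$ and on $\ovb_{0,\secp}$.

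For the inductive step, assume conditioned on $\hts_{i-1}=\tau$ the $S$-register is in a pure state $\ket{s_\tau}$ determined only by $\tau$. The algorithm $A$ then produces its next classical query $x_i$ using only its own internal registers and the classical responses $y_1,\dots,y_{i-1}\in\tau$ it has already received; $A$ never touches $S$, $Y$ or $\hY$, so $x_i$ is independent of the content of $S$ given $\hts_{i-1}$, and conditioning further on $x_i$ leaves the $S$-register in the same pure state $\ket{s_\tau}$. The purified solver then applies the unitary $\ovb_{\secp,i,x_i}$ to $(\ket{s_\tau}_S,\qv{0}_Y,\qv{0}_{\hY})$, obtaining a pure state $\ket{\psi_i}_{SY\hY}$, and measures $Y$ and $\hY$ in the computational basis to produce $y_i$ and $\hy_i$. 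The same one-subsystem-measurement principle then gives that conditioned on those outcomes the $S$-register is in the pure state proportional to $(I_S\otimes\bra{y_i}_Y\otimes\bra{\hy_i}_{\hY})\ket{\psi_i}$, which is entirely determined by $\hts_i$ and the fixed unitaries, completing the induction.

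The step that I expect to require the most care is the independence claim used in the inductive step: since $A$ itself may be a quantum algorithm, one might worry that its internal register has become entangled with $S$ over the previous steps. The reason this does not happen is that the purified solver $\ovb$ acts only on the registers $S,Y,\hY$ and communicates back to $A$ solely through the classical computational-basis outcomes $y_j$; that measurement on $Y$ at each step breaks any coherence that could have linked $A$'s register to $S$. Making this fully rigorous is a bookkeeping exercise in tracking which registers each step of the purified interaction touches, after which the induction goes through as described.
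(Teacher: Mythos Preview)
Your proposal is correct and follows essentially the same inductive argument as the paper: both base and inductive steps apply a fixed unitary to a pure state and then observe that a computational-basis measurement on one subsystem leaves the remainder pure and determined by the outcome. Your treatment is in fact slightly more careful than the paper's on one point: the paper simply asserts that ``at this point $x$ has also been determined'' and proceeds, whereas you explicitly justify why conditioning on $x_i$ does not disturb the purity of $S$ (namely, because $S$ is already pure by induction and hence unentangled with $A$'s workspace, and $A$ never acts on $S$).
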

\begin{proof}
	We consider the purifying description of the solver interaction $A_z^{\hcB}$ and prove by induction.
	For $t=0$, we recall that the pair $(\stt_0, \hy)$ is generated by applying $\ovb_{0,\secp}$ on the zero state, followed by measuring the $\hY$ register. The pre-measurement state over registers $S\hY$ is therefore pure, and can always be written as
	\begin{align}
		\sum_{\hy} \alpha_{\hy}  \ket{s_{\hy}}_S \otimes \ket{\hy}_{\hY}~,
	\end{align}
		where $\alpha_{\hy}$ are non-negative real values with $\sum_{\hy} \alpha_{\hy}^2 = 1$, and $\ket{s_{\hy}}$  are fully specified unit vectors.
	Therefore, post-selecting on having measured the value $\hy_0$ in register $\hY$, we have that the state in register $S$ is exactly $\stt_0=\ketbra{s_{\hy_0}}$, which completes the base step of the proof.
	
	Now assume that the above holds for all $i < t$. Consider a transcript $\hts$ of length $t$ s.t.\ $\hts = \hts_{t-1} \| (x,y,\hy)$ for some $\hts_{t-1}, x ,y, \hy$.
	
	Let us consider the state of the system right before the $t$-th query to the solver. At this point, $\hts_{t-1}$ was already determined, and thus by induction we know that $\stt_{t-1}=\ketbra{s_{\hts_{t-1}}}$ is a pure state. At this point $x$ has also been determined. 
	
	By definition, $\stt_t$ is produced by executing a unitary $\hB_{\secp, t, x}$ (that acts on registers $SY\hY$) on $(\stt_{t-1}, \qv{0}, \qv{0})$, which is pure by the induction hypothesis, and measuring the $Y\hY$ registers. The analysis here is similar to the base case. The pre-measurement state is pure (since it is induced by applying a unitary on a pure state) and thus can always be written as
	\begin{align}
		\sum_{y,\hy} \alpha_{y,\hy}  \ket{s_{y,\hy}}_S \otimes \ket{y, \hy}_{Y\hY}~,
	\end{align}
	and as above $\alpha_y$ are non-negative real values with $\sum_y \alpha_y^2 = 1$, and $\ket{s_{y,\hy}}$ are fully specified unit vectors. Post selecting on $y,\hy$ leaves us with register $S$ containing $\stt_t=\ketbra{s_{y,\hy}}$, which completes the proof.
\end{proof}

	We are now ready to define the concepts of value and advantage of stateful solvers. Traditionally, when thinking about stateless solvers, we consider their {\em one shot value}, namely the probability that they solve the problem on a random instance. Since they are stateless this probability does not change over time. In the case of stateful solvers, this probability may change over time. Our definition of the {\em many shot values} aims to capture exactly this. For any solver interaction $A_z^\cB$, the value at time $t$, captures the probability that the solver $\cB$ successfully solves a random instance  at this time, after a given $t$-round   interaction with $A_z$. This value is, in fact, a random variable that depends on the history of the interaction. To make this precise, we consider any purification $\hcB$, and define these values as a function of the extended transcript.

\begin{definition}[Stateful Solvers: Value and Advantage]\label{def:val_adv_state}
Let $P$ be a non-interactive assumption, $\cB = (B,\stt_0)$ be a corresponding stateful solver, $\hcB$ a corresponding purification, and $A$ a solver-aided algorithm with input $z$. For every $\secp, i \in \bbN$, let $\stt_i$ be the $i$-th pure state random variable of the solver interaction $A_z^{\hcB}$ (determined by $\hts_i$). The corresponding value random variables are:
\begin{align*}
  \val_P[i,A_z^{\hcB}](\secp) &:= \Pr\left[V(1^\secp, r, y)=1 \pST \begin{array}{c} r \gets \zo^d\\
                  x = G(1^\secp,r)\\
                  (\hy_{i+1},y,\stt_{i+1}) \gets \ovb(1^\secp,1^i,x,\stt_i)   \end{array}\right]\enspace,
                    \end{align*}
where the probability is over the choice of $r$ and the measurement of $\hy_{i+1},y$.

\smallskip\noindent
The one-shot value of $\cB$ is
\begin{align*}
\val_P[0,\cB](\secp) :=\Pr\left[V(1^\secp, r, y)=1 \pST \begin{array}{c} r \gets \zo^d\\
                  x = G(1^\secp,r)\\
                 (y,\stt_{1}) \gets B(1^\secp,x,\stt_0) \end{array}\right]\enspace,
                    \end{align*}    
where the probability is over the choice of $r$, measurements of $B$, and (the possibly mixed) $\stt_0$. Note that this is in fact a number, independent of any $A$ or the choice of purification $\hcB$.

The corresponding advantage random variables are:
                    \begin{align*}
 \atg_P[i, A_z^{\hcB}](\secp) &:= \abs{\val_P[i, A_z^{\hcB}](\secp) - c(\secp)}\hspace{1cm}\atg_P[0,\cB](\secp) := \abs{\val_P[0,\cB](\secp) - c(\secp)}\enspace.
\end{align*}
For a distribution $\bb{B}$ on solvers $\set{\cB_\alpha}_\alpha$, we define the one-shot value of the distribution as:
$$
\val_P[0,\bb{B}](\secp) = \bb{E}_{\alpha\gets \bb{B}}[\val_P[0,\cB_\alpha](\secp)]\enspace.
$$
The corresponding advantage is $\atg_P[0,\bb{B}](\secp) = |c(\secp)-\val_P[0,\bb{B}](\secp)|$.

 \end{definition}
As the solver's state evolves over time, its advantage in solving an assumption may reduce or disappear altogether. This is in particular relevant to the quantum setting, where when a solver is invoked its internal state is disturbed. Aiming to capture solvers that remain useful over time, we next define the notion of solvers with {\em persistent value,} namely, solvers whose value in solving a given assumption is preserved through time. We define it more generally for distributions over solvers; single solvers are a special case.

\begin{definition}[Persistent Value]\label{def:pers_val}
Let $P$ be a non-interactive assumption. A distribution $\bb{B}$ on solvers $\set{\cB_\alpha}_\alpha$ is $\eta$-persistent if there exist purifications $\set{\hcB_\alpha}_\alpha$   such that for any algorithm $A$ with input $z$, with probability $1-\eta$ over the choice of solver $\alpha\gets \bb{B}$ and over an extended transcript $\hts$ in the solver interaction process $A_z^{\hcB_\alpha}$, there exists a value $p$  such that:
    \begin{align}\label{eq:persistent}
    \max_{ i}\left|\val_P[i,A_z^{\hcB_\alpha}]- p\right| \leq \eta \enspace.
    \end{align}
    We call $p$ {\em a persistent value}. Given a random variable $p^*(\alpha) \subseteq [0,1]$, we say that a solver is $(p^*,\eta)$-persistent if the condition holds for $p^*(\alpha)$. 
\end{definition}

We next define the notion of a persistent advantage. This aims to capture the case that solvers maintain a lower bound on their advantage through time.

\begin{definition}[Persistent Advantage]
 Let $P$ be a non-interactive assumption with threshold $c$. A distribution $\bb{B}$ on solvers $\set{\cB_\alpha}_\alpha$ has $\varepsilon$-persistent advantage if there exist purifications $\set{\hcB_\alpha}_\alpha$ such that for any algorithm $A$ with input $z$:
    \begin{align}\label{eq:persistentadv}
    \bb{E}\left[\min_{ i} \val_P[i,A_z^{\hcB_\alpha}]\right] \geq c+\varepsilon \enspace,
    \end{align}
    where the expectation is over the choice of solver $\alpha\gets \bb{B}$ and over an extended transcript $\hts$ in the solver interaction process $A_z^{\hcB_\alpha}$.
 \end{definition}
 
 In the above, We require that the advantage has a consistent sign (for simplicity, positive). Intuitively, the reason we focus on persistence of the positive advantage $v_t-c$ at time $t$, rather than the absolute advantage $|v_t-c|$, is that if the sign of $v_t-c$ arbitrarily changes after each solver invocation, then the solver may not be as useful. (As a simple example, take a deterministic distinguisher and turn it into a stateful distinguisher that flips the output of the original distinguisher at random with each invocation, deeming it useless.) We note that $\eta$ persistent solvers in particular preserve the sign of their advantage (up to $\eta$).

 \paragraph{Memoryless and Stateless Solvers.} 	A special case of the above definitions is that of {\em memoryless and stateless solvers.}
 
 \begin{definition}\label{def:memless_stateless}
 A solver $\cB=(B,\stt_0)$ is {\em memoryless} if the size of its state is $\ell=0$. The solver is \emph{stateless} if in addition (to being memoryless), the algorithm $B$ does not depend on $1^t$ (in functionality or runtime). 
 \end{definition}
 
 \begin{remark}[Persistent Value for Stateless and Memoryless Solvers]\label{rmk:persistentmemless}
 Note that in the case of stateless solvers, successive invocations of the solver will always result in the same output distribution. Here the one-shot (and many-shot) advantage coincide with the standard notion of advantage for functions (Definition \ref{def:val_adv_state_classic}) and values are persistent (Definition \ref{def:pers_val}). Accordingly, stateless solvers exactly capture the traditional notion of classical solvers, given by a randomized function. 
 
 Moreover, even for memoryless solvers, when considering the definition of persistent solvers the value $\val_P[i,A_z^{\hcB}]$ does not depend on $A_z$ at all (only on $i$), and therefore it is a fixed number rather than a random variable. It follows that for $(p,\eta)$-persistent memoryless solvers, Eq.~\eqref{eq:persistent} holds with probability $1$.
 \end{remark}

\subsection{Reductions} 

We now define the notion of a reduction. A reduction is a way to prove a claim of the form ``if there exists a successful solver for assumption $P$ then there exists a successful solver for an assumption $Q$''. We consider \emph{constructive reductions} in the sense that they are an explicit uniform algorithm that takes as input a successful solver for $P$ and efficiently solves the problem $Q$. 

The default notion of a reduction in the literature is {\em one shot}. In pparticular, a given quantum $P$-solver is only assumed to have a meaningful one-shot advantage in solving $P$, and there is no a priori guarantee on its advantage in any many shot solving process, in particular there may not be any value persistence.  Likewise, the produced solver for the assumption $Q$ is only required to have a meaningful one-shot advantage. Below we define both the default notion of one-shot reductions as well as the stronger notion of {\em durable reductions} requiring that the resulting $Q$-solver also has persistent advantage, meaning that with noticeable probability, the reduction can go on solving for an arbitrary polynomial number of times.

\begin{definition}[Reduction]\label{def:reduction}
A reduction from classically (resp.\ quantumly) solving a non-interactive assumption $Q$ to classically (resp.\ quantumly) solving a non-interactive assumption $P$ is an efficient classical (resp. quantum)  uniform algorithm $\cR$ with the following guarantee. For any solver $\cB_P = (B_P, \stt_0)$ for $P$ with one-shot advantage $\varepsilon$ and running time $T$, let $\stt'_0 = (\stt_{0},B_P, 1^{1/\varepsilon},1^{T})$. Then $\cB_Q = (\cR,\stt'_0)$ is a solver for $Q$ with one-shot advantage $ \varepsilon' = \poly(\varepsilon,T^{-1},\secp^{-1})$ and running-time $\poly(T,\varepsilon^{-1},\secp)$. We say that the reduction is durable if $\cB_Q$ has $\poly(\varepsilon,T^{-1},\secp^{-1})$-persistent advantage.

We refer to a reduction from solving $Q$ to classically (resp. quantumly) solving $P$ as a {\em classical-solver (resp. quantum-solver) reduction.}
\end{definition}

\begin{remark}[Many Shot Reductions]
    There could be several conceivable extensions of the above definition that also account for the {\em many-shot advantage}. One such natural extension is requiring that the reduction works only given a solver with a persistent value (as in Definition \ref{def:pers_val}). Jumping ahead, in section \ref{sec:restoration}, we show that under certain conditions, persistent solving can in fact be reduced to one-shot solving, even in the quantum setting. %
\end{remark}

\begin{remark}[The Loss]
    We allow for a (fixed) polynomial loss in the advantage and running time. One could naturally extend it to more general relations.
\end{remark}

\paragraph{Classical Black-Box Reductions.} In this work, we prove that several general classes of {\em classical reductions} that a priori are only guaranteed to work for classical solvers, can be enhanced {\em efficiently} to also work for quantum solvers. Our focus is on black-box reductions; that is, reductions that are oblivious of the representation and inner workings of the solver that they use (in contrast to the above Definition \ref{def:reduction}, where the reduction obtains the full description of the solver $\cB_P$).

We next formally define such black box reductions, using the terminology we have already developed. Specifically, we capture the notion of a classical solver for a given problem $P$ as a stateless (classical) solver.

\begin{definition}[Classical Black-Box Reduction]\label{def:classicalbbred}
A classical black-box reduction, from solving a non-interactive assumption $Q$ to solving a non-interactive assumption $P$, is an efficient classical solver-aided uniform algorithm $\cR$ with the following syntax and guarantee. $\cR$ takes as input a security parameter $1^\secp$, parameter $1^{1/\varepsilon}$, and instance $x\in \zo^{n_Q}$ of $Q$. It interacts with a solver $\cB$ for $P$ (per Definition \ref{def:solver_interact}) and produces an output $y\in \zo^{m_Q}$. We require that for any distribution $\bb{B}$ over {\bf stateless classical} solvers $\set{\cB_\alpha}_\alpha$ such that $\bb{B}$ has advantage at least $\varepsilon$ in solving $P$,  the corresponding solver distribution $\bb{R}$ over solvers $\set{\cR^{\cB_\alpha}(1^\secp,1^{1/\varepsilon},\cdot)}_\alpha$ has advantage at least $\poly(\varepsilon,\secp^{-1})$ in solving $Q$. The advantage of $\cR$ is positive if its value is always at least $c_Q$ (above the assumption $Q$'s threshold), regardless of any $P$-solver.  

We further say that the reduction $\cR$ is \emph{non-adaptive} if $\cR$ produces all of its oracle queries $x_1,\dots,x_k \in \zo^{n_P}$ to $\cB$ in one shot, obtains all answers $y_1,\ldots,y_k$, and then produces its output $y$.
\end{definition}

\begin{remark}
In our definition of solver interaction, a given solver $\cB$ is only ever invoked for the instance size $n_P(\secp)$. Accordingly, the above definition restricts attention to classical reductions that in order to solve problem $Q$ for instance size $n_Q(\secp)$ make queries to a $P$-solver on a specific related input size $n_p(\secp)$. While this is not without loss of generality, it does capture natural reductions. (In fact, we are not aware of important reductions that do not adhere to this.)
\end{remark}

\begin{remark}[Deterministic Solver Reductions, Positive Advantage, and Repeated Queries]\label{rem:clas_red_wlog}
We consider classical reductions that ought to work when given a stateless solver from a distribution $\bb{B}$ over solvers $\set{\cB_\alpha}$. (As a matter of fact in our model, even once a stateless solver $\cB_\alpha$ is fixed, the process of answering any given query is randomized, but this can be modeled as sampling a deterministic stateless solver from another distribution $\bb{B}$ with the same advantage.) A weaker notion of classical reductions only requires that the reduction works for deterministic solvers. In the classical setting, this is typically not an issue, as long as the reduction has the power to fix the solver's randomness and repeatedly replace it as needed. Jumping forward, when considering quantum reductions, the randomness of a given solver may arise from the quantum nature of the solving process, and the reduction may not be able to control it. Accordingly, in our transformations from classical-solver reductions to quantum-solver reductions, we will naturally need the classical reduction we start from to also be able to deal with distributions over solvers.

We note that for typical assumptions $Q$ such as search problems (with trivial threshold $c=0$) or decision problems (with solution length $m=1$ and trivial threshold $c=1/2$), a classical reduction $\cR$ from $Q$-solving to deterministic $P$-solving implies a classical reduction $\cR$' from $Q$-solving to distributional $P$-solving. Here two subtleties should be addressed. The first issue that could prevent $\cR$ from working for distributional $P$-solvers is that the sign of the advantage of $\cR^\cB$ as a $Q$-solver may depend on the randomness of $\cB$ and may cancel out in expectation. For search assumptions $Q$, where $c=0$, this cannot happen as any advantage is positive. For decision problems, this can be avoided by slightly augmenting $\cR$ to make sure that the advantage is always positive using standard black-box techniques \cite{BrakerskiG11}. This incurs only a polynomial overhead in solving queries, or even just a single query, at the cost of quadratically decreasing the advantage. The second issue concerns the running time of the reduction. Specifically a reduction that works for deterministic oracles, excepts to get their advantage $1^{1/\varepsilon}$ as input, where $\varepsilon$ is the $P$-solver's advantage. When executing such a reduction with a solver distribution, we are given $1^{1/\varepsilon}$, where $\varepsilon$ is the average advantage. Nevertheless, we can run the original reduction with input $1^{2/\varepsilon}$. Note that the probability that that the advantage of a sampled oracle is at least $\varepsilon/2$ is at least $\varepsilon/2$, and since the reduction has positive advantage, we are overall guaranteed to maintain a noticeable advantage.  

Following the above, for typical assumptions $Q$, we can in particular assume w.l.o.g positive advantage. For simplicity, we also assume throughout that classical reductions We do not repeat queries. This is w.l.o.g as given a deterministic oracle, the reduction can simply store previous answers and answer consistently by itself.%
\end{remark}

\section{Persistent Solvers in the Quantum Setting}\label{sec:restoration}

In this section, invoking state restoration techniques from \cite{CMSZ21}, we prove that any one-shot solver for an assumption $P$ with a verifiably-polynomial image (in particular, decision problems) can be converted into a persistent solver for $P$.

\begin{theorem}[Persistence Theorem]\label{thm:state_restoration}
Let $P$ be a non-interactive falsifiable assumption with a verifiably-polynomial image. For any inverse polynomial function $\eta$, there exist efficient quantum algorithms $S,R$ with the following syntax and guarantee. $S_B(\stt_0)$ takes as input a quantum algorithm $B$ and state $\stt_0$ and outputs a state $\stt^*_0$ and a value $p^*\in [0,1]$. $R_B(1^\secp,1^i,x,\stt_{i-1}^*)$ takes as input $B$, a security parameter $1^\secp$, step $1^i$, input $x\in\zo^n$, and state $\stt_{i-1}^*$ and outputs a solution $y\in\zo^m$ and state $\stt^*_{i}$. 

For any solver $\cB = (B,\stt_0)$ with one-shot value $p = \val_P[0,\cB]$, considering the random variable $(\stt_0^*,p^*) \gets S_B(\stt_0)$, it holds that:
\begin{enumerate}
  \item
  $\bbE \left[p^* \right] = p$.
    \item  $\cR^* = (R_B,\stt_0^*)$ sampled in this process is a distribution over efficient stateful solvers that is $(p^*,\eta)$-persistent.
\end{enumerate}
\end{theorem}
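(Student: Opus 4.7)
The plan is to adapt the state-restoration framework of \cite{CMSZ21} to the setting of non-interactive falsifiable assumptions with verifiably-polynomial image. The starting point is to construct, from $B$ and $P$, an efficiently implementable binary projective measurement $\Pi$ on the solver's state register $S$ whose expected outcome on $\stt$ equals $\val_P[0,(B,\stt)]$. Coherently, $\Pi$ samples a uniform $r\in\zo^d$ on a fresh register, computes $x=G(1^\secp,r)$, runs the purified $\hB$ (Proposition~\ref{prop:purification}) on $(S,\qv{0},\qv{0})$, projects onto the flag ``$V(1^\secp,r,y)=1$'', and uncomputes everything outside $S$. This yields a Hermitian projector on $S$ whose acceptance probability on $\stt_0$ is exactly $p$, and whose application is efficient given $B$.

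Next, the algorithm $S_B(\stt_0)$ runs the \cite{CMSZ21} value-estimation subroutine on $\stt_0$ with respect to $\Pi$. By Jordan's lemma, the Hilbert space decomposes into $\Pi$-invariant two-dimensional subspaces indexed by squared overlaps $\lambda^2\in[0,1]$; coherent phase estimation collapses $\stt_0$ into one such subspace and outputs a classical estimate $p^*\approx\lambda^2$ to precision $\eta$. Two standard properties of this subroutine are what we need: $\bbE[p^*]=p$, because phase estimation is an unbiased estimator of the eigenvalue (modulo a randomized rounding step at the end of $S_B$ to absorb any finite-precision bias into an $O(\eta)$ grid whose mean hits the true eigenvalue), and the post-measurement state $\stt_0^*$ has $\Pi$-value within $\eta$ of $p^*$. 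The algorithm $R_B(1^\secp,1^i,x,\stt_{i-1}^*)$ then performs one query--repair step: apply $B$ on $x$ to produce $y$ and an updated state, then run the \cite{CMSZ21} repair subroutine, which uses a sequence of value-estimation-style measurements (alternating projections) to bring the state back into the subspace of $\Pi$-value within $\eta$ of $p^*$. The repair incurs cost $\poly(i,1/\eta,\secp)$, consistent with the time budget allowed by Definition~\ref{def:solvers}.

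Persistence then follows from the \cite{CMSZ21} analysis: after each query--repair step the $\Pi$-value of $\stt_i^*$ is within $\eta$ of $p^*$ with probability $1-\eta/T$ for a polynomial bound $T$ on the number of queries, and a union bound over queries delivers \eqnref{eq:persistent}. Crucially, the repair channel acts only on register $S$ and is \emph{independent} of the query $x_i$, so the adaptive choice of queries by any solver-aided algorithm $A$ is transparent to the argument. The verifiably-polynomial-image hypothesis enters to ensure that the persistence preserves not just the marginal success probability but the full conditional distribution over the (at most $k=\poly(\secp)$) valid answers $y\in Y_x$: one refines $\Pi$ into a sum of $k$ sub-projectors indexed by $y\in Y_x$, each checkable efficiently via $K$, and runs the CMSZ21 machinery on this richer POVM. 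This refinement step is the one that fails when $|Y_x|$ is super-polynomial, matching the negative result in \secref{sec:search}. The main subtlety I expect is reconciling the exact equality $\bbE[p^*]=p$ demanded by the theorem with the inherently approximate nature of phase estimation; I expect this to be handled by the randomized-rounding trick at the end of $S_B$ mentioned above, so that $p^*$ becomes an unbiased estimator with support on a discrete $\eta$-grid while still tracking the true eigenvalue to within $O(\eta)$.
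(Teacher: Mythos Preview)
Your overall architecture is right—invoke CMSZ's value-estimation and repair subroutines—but you misidentify where the verifiably-polynomial-image hypothesis actually bites, and this leads to a couple of concrete errors.

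The hypothesis is \emph{not} needed to ``preserve the full conditional distribution over valid answers''; the theorem only asserts persistence of the scalar value $\val_P$. The real issue is this: to answer a query $x$, the solver must \emph{measure} an outcome $y$, and it is this multi-outcome measurement that damages the state. The paper casts the answering step itself as a projective measurement $\Pi_x = (\Pi_{x,y})_{y\in Y_x\cup\{\bot\}}$ on the state register (run $\hB_{\secp,x}$, project onto each $y\in Y_x$, and lump all $K$-invalid $y$'s into a single $\bot$ outcome). The CMSZ $\repair$ procedure takes time polynomial in the \emph{number of outcomes} of the disturbing measurement (it is handed $1^{k+1}$ explicitly); that is the sole reason $|Y_x|\le k$ must be polynomial. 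Your binary $\Pi$ is fine for value estimation, but it cannot serve as the answering step, since you must actually output some $y\in\zo^m$.

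This also means your claim that ``the repair channel \ldots is independent of the query $x_i$'' is false: in the paper the call is $\repair_{V,\bB,\Pi_{x_i}}(\qv{\sigma}_i^*,y_i,p_{i-1},1^{1/\varepsilon_i},1^{k+1})$, explicitly parameterized by $\Pi_{x_i}$ and $y_i$. Adaptivity is handled not because repair is query-independent, but because the persistence bound is established pointwise along any purified transcript via the almost-projective and repairing guarantees.

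Two smaller points. First, the exact equality $\bbE[p^*]=p$ is a built-in guarantee of CMSZ's $\valest$ (the ``Value Estimation'' item of \lemref{lem:csmz}); no randomized-rounding trick is needed, so you are inventing a difficulty that is not there. Second, persistence must hold over an \emph{a priori unbounded} number of queries, so a ``probability $1-\eta/T$ per step, union bound over $T$ steps'' argument with a fixed $T$ does not suffice. The paper instead sets the precision at step $i$ to $\varepsilon_i=\eta/(i\pi)^2$ so that $\sum_{i\ge 1}\varepsilon_i$ converges; this is precisely why the running time of $R_B$ is allowed to grow with $i$.
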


\begin{remark}
The efficiency of the algorithms $S,R$ is also polynomial in the running time of $B$. We avoid passing explicitly the running time bound as input to simplify notation.
\end{remark}

The proof is based on two lemmas from \cite{CMSZ21}, adapted to our notation and simplified for our needs. %

\def \valest {\mathsf{ValEst}}
\def \hvalest {\widehat{\mathsf{ValEst}}}
\def \repair {\mathsf{Repair}}
\begin{lemma}[Lemmas 4.9 and 4.10 in \cite{CMSZ21}, adapted]\label{lem:csmz}
	Let $\cH$ be a Hilbert space. There exist efficient quantum algorithms:
	\begin{enumerate}
		\item $(\qv{\rho}^*,p^*)\gets \valest_{V,A}(\qv{\rho},1^{1/\varepsilon})$ that given as input any verifier circuit $V:\zo^d \times \zo^n \rightarrow \zo$, any quantum circuit $A$, a quantum state $\qv{\rho}\in \qsts{\cH}$ and accuracy parameter $1^{1/\varepsilon}$, outputs a quantum state $\qv{\rho}^*\in \qsts{\cH}$ and value $p^* \in [0,1]$;
		\item
		$\qv{\sigma}^*\gets \repair_{V,A,\Pi}(\qv{\sigma},y,p,1^{1/\varepsilon},1^k)$ that given $V$, $A$, and a projective measurement $\Pi = (\Pi_y)_{y\in Y}$ on $\cH$ with outcomes $Y=\set{y_1,\dots,y_k}
		$, and given a quantum state $\qv{\sigma}\in \qsts{\cH}$, outcome $y \in Y$, probability $p \in [0,1]$, and parameters $1^{1/\varepsilon}$, $1^k$, outputs a quantum state $\qv{\sigma}^*\in \qsts{\cH}$;
	\end{enumerate}
	such that the following guarantees hold:
	\begin{enumerate}
		\item {\bf Value Estimation:} $$\bbE_{}\left[p^*\pST (\qv{\rho}^*,p^*) \gets \valest_{V,A}(\qv{\rho},1^{1/\varepsilon})\right] = \Pr\left[V(r,y) = 1\pST \begin{array}{l}
			r\gets \zo^d  \\
			y \gets A(\qv{\rho},r) 
		\end{array}\right]\enspace.$$
		
		\item {\bf Estimation is Almost Projective:} 
		
		$$\text{For $\varepsilon\geq\varepsilon'>0$,} \hspace{5mm}\Pr\left[|p^* - p^{**}| \geq \varepsilon \pST \begin{array}{l}
			(\qv{\rho}^*,p^*) \gets \valest_{V,A}(\qv{\rho},1^{1/\varepsilon})\\
			(\qv{\rho}^{**},p^{**}) \gets \valest_{V,A}(\qv{\rho}^*,1^{1/\varepsilon'})
		\end{array}\right] \leq \varepsilon \enspace.$$
		
		\item {\bf Repairing:} 
		$$\text{For $\varepsilon >0$,} \hspace{5mm}\Pr\left[|p^* - p^{**}| \geq \varepsilon \pST \begin{array}{l}
			(\qv{\rho}^*,p^*) \gets \valest_{V,A}(\qv{\rho},1^{1/\varepsilon})\\
			(\qv{\sigma},y ) \gets \Pi(\qv{\rho}^*)\\
			\qv{\sigma^*} \gets \repair_{V,A,\Pi}(\qv{\sigma},y,p^*,1^{1/\varepsilon},1^k)\\
			(\qv{\rho}^{**},p^{**}) \gets \valest_{V,A}(\qv{\sigma}^*,1^{1/\varepsilon})
		\end{array}\right] \leq \varepsilon \enspace.$$
	\end{enumerate}
\end{lemma}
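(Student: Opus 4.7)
The plan is to build both $\valest$ and $\repair$ around the threshold implementation (TI) of the success POVM associated with $(V,A)$, following the Marriott--Watrous--Zhandry projective-estimator paradigm. Define the two-outcome POVM $M = (M_{\acc}, I - M_{\acc})$ on $\cH$ where $M_{\acc}$ corresponds to the event ``sample $r \gets \zo^d$, run $A(\qv{\rho},r) \to y$, output $1$ iff $V(r,y)=1$.'' Concretely, $M_{\acc}$ factors as $\Pi_0 \Pi_1 \Pi_0$ after purification: let $\Pi_1$ be the projector that, conditioned on registers holding $r,y$, accepts iff $V(r,y)=1$, and let $\Pi_0$ be the projector onto ancilla registers in their initial state. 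Any state has a spectral decomposition relative to $M_{\acc}$ with eigenvalues $p_i \in [0,1]$ equal to the conditional success probabilities.

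For $\valest_{V,A}(\qv{\rho},1^{1/\varepsilon})$, apply the coherent phase-estimation / Marriott--Watrous alternating-projector procedure on $\Pi_0,\Pi_1$ for $N = \poly(1/\varepsilon)$ rounds, then round the resulting estimate to the nearest multiple of $\varepsilon/2$. The expected output equals the eigenvalue in the spectral decomposition, so
\[
\bbE[p^*] \;=\; \Tr(M_{\acc}\,\rho_{\qv{\rho}}) \;=\; \Pr[V(r,y)=1 : r \gets \zo^d,\ y \gets A(\qv{\rho},r)],
\]
giving the Value Estimation property. For Almost-Projectivity, observe that each rounded outcome corresponds approximately to projecting onto the eigenspace of $M_{\acc}$ with eigenvalue in a window of width $\varepsilon/2$; by Zhandry's analysis of the TI, re-running $\valest$ on the post-measurement state reproduces the estimate up to $\varepsilon$ except with probability $\varepsilon$, once $N$ is chosen large enough to dominate the rounding error.

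For $\repair_{V,A,\Pi}(\qv{\sigma},y,p,1^{1/\varepsilon},1^k)$, use the two-projector structure of Jordan's lemma applied to $\Pi_y$ (the measurement outcome projector) versus the TI projector $\Pi_{\geq p - \varepsilon/2}$ that coarsely captures the ``value $\approx p$'' eigenspace. The central observation is that the pre-$\Pi$ state had estimated value $p$, so after obtaining outcome $y$ the residual state lives in a Jordan subspace where the two projectors have non-trivial overlap. Alternating the TI measurement with $\Pi_y$ for $\poly(k,1/\varepsilon)$ rounds constitutes a random walk on Jordan blocks whose stationary distribution concentrates on blocks of high overlap; equivalently, this performs amplitude amplification that restores a state for which $\valest$ outputs a value within $\varepsilon$ of $p$ except with probability $\varepsilon$. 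The factor of $k$ enters because $\Pi$ is a $k$-outcome projective measurement, so the ``good'' Jordan blocks have weight at least $1/k$ in the initial residual state, forcing polynomially many rounds.

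The main obstacle is the quantitative analysis of $\repair$: one must carefully track how the Jordan decomposition evolves under the alternating measurements and argue that the TI estimate does concentrate near $p$ after polynomially many rounds, despite the slack introduced by the $\varepsilon$-rounding in the TI. A secondary subtlety is ensuring that $\valest$ is simultaneously (i) unbiased in expectation and (ii) almost projective; these pull in opposite directions (more rounding helps projectivity but hurts unbiasedness), and one resolves it by randomizing the rounding boundary, exactly as in Zhandry's construction. Once these two ingredients are in place, the three guarantees of Lemma \ref{lem:csmz} follow by standard triangle inequalities combining the value-estimation, almost-projectivity, and repair bounds.
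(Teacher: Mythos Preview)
The paper does not prove Lemma~\ref{lem:csmz}; it is stated without proof as an adaptation of Lemmas~4.9 and~4.10 of \cite{CMSZ21} and used as a black box in the proof of Theorem~\ref{thm:state_restoration}. So there is no ``paper's own proof'' to compare against.

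That said, your sketch is a reasonable outline of the \cite{CMSZ21} construction: $\valest$ is indeed built from the Marriott--Watrous alternating-projector estimator (with a randomized shift to reconcile unbiasedness with almost-projectivity), and $\repair$ is indeed based on a Jordan-decomposition / alternating-measurement argument whose running time scales with the number $k$ of outcomes. Two technical points are glossed over. First, the actual $\repair$ in \cite{CMSZ21} is not simply amplitude amplification toward a ``good'' Jordan block; it alternates the full $k$-outcome measurement $\Pi$ with the almost-projective value estimator and uses a careful potential argument to show the value cannot drift by more than $\varepsilon$ except with probability $\varepsilon$ --- your ``stationary distribution concentrates on high-overlap blocks'' intuition is in the right direction but is not how the bound is actually obtained. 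Second, the reason the $k$-dependence arises is not that the good block has weight $\ge 1/k$, but that a single application of a $k$-outcome projective measurement can be decomposed into $k-1$ binary measurements, each of which can only perturb the estimated value by a controlled amount; this is where $\poly(k)$ enters the running time. These are refinements rather than gaps: your plan would lead to a correct proof once those analyses are carried out, and it matches the approach the paper is citing.
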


\begin{remark}[On the Restriction to Verifiably-Polynomial Image Assumptions]
We note that the reason that the Persistence Theorem \ref{thm:state_restoration} is restricted to assumptions $P$ with a verifiably-polynomial image stems from the fact that the running time of the repairing procedure $\repair$ given by Lemma \ref{lem:csmz} scales with the number of outcomes $k$ of the corresponding projection $\Pi$.  It is  tempting to search for a better repairing procedure that does not scale with~$k$, since (as we show) it would imply a stronger version of Theorem \ref{thm:state_restoration}, without the restriction of a verifiably-polynomial image.  However, as we show in Section~\ref{sec:search} (Theorem \ref{thm:explicit_reduction}), such a stronger version of Theorem \ref{thm:state_restoration} does not exist (provided that  $R,S$ use the assumption $P$ as a black box,\footnote{Or in some more general explicit fashion that we define in Section~\ref{sec:search}.} which is indeed the case in our proof). 
\end{remark}

\medskip\noindent
We now proceed to prove the Persistence Theorem \ref{thm:state_restoration} relying on Lemma \ref{lem:csmz}.

\def \bB {\widetilde{B}}

\begin{proof}[Proof of Theorem \ref{thm:state_restoration}.]
	Throughout, fix the assumption $P$, with generator $G$, verifier $V$, and polynomial-image verifier $\imver$ with corresponding polynomial image bound $k$. Also fix
	an inverse polynomial function $\eta$ and the security parameter $\secp$.

	\paragraph{Notation and Conventions.} In what follows, we consider a one-shot quantum solver $y \gets B(1^\secp,x,\stt)$ that operates on quantum states $\stt\in \qsts{\cH}$. We assume that $\cH = \cZ\cY$, where for every $x \in\zo^n$, there is an efficiently computable unitary purification $\hB_{\secp,x}$ on $\cH$, describing the action of $B(1^\secp,x,\stt)$ on $\stt$, where $\cY$ is the output register (this assumption regarding $\cH$ is w.l.o.g by polynomially increasing the size of the state if needed ). We also consider a {\em wrapper solver} $\bB(1^\secp,r,\stt^*)$ that given $\stt^*\in \qsts{\cH}$, computes $x = G(1^\secp,r)$, applies $\hB_{\secp,x}$, and outputs a measurement of $\cY$ in the computational basis.
	
	For any $x\in \zo^n$, we denote by $\Pi_x = (\Pi_{x,y})_{y\in Y_x \cup \bot}$ the projective measurement on $\cH$ where: 
	\begin{align*}
		Y_x &:= \set{y\in \zo^m: \imver(1^\secp,x,y)=1} \enspace,\\[3mm]
		\Pi_{x,y} &:= \hB_{\secp,x}^\dag (I_{\cZ} \otimes \ketbra{y}_{\cY})\hB_{\secp,x} \hspace{5mm}\text{for $y\in Y_x$}\enspace,\\[3mm]
		\Pi_{x,\bot} &:= \sum_{y\in \zo^m\setminus Y_x} \hB_{\secp,x}^\dag( I_{\cZ} \otimes  \ketbra{y}_{\cY})\hB_{\secp,x} \enspace.
	\end{align*}

	\medskip\noindent
	\item
	We define the algorithms $S$ and $R$. Throughout, the algorithms $\valest$ and $\repair$ operate on corresponding states $\stt^*$ in $\qsts{\cH}$.  
	
	\medskip\noindent
	$S_B(\stt_0)$:
	\vspace{-1mm}
	\begin{itemize}
		\item     
		Output $(\stt_0^*,p^*_0) \gets \valest_{V,\bB}(\stt_0,1^{2/\eta})$.
	\end{itemize}
	
	\medskip\noindent
	$R_B(1^\secp,1^i,x_i,\stt_{i-1}^*)$:
	\vspace{-1mm}
	\begin{itemize}
		\item
		Let $\varepsilon_i = \eta/(i \pi)^2$. 
		\item     
		Apply $(\qv{\sigma}_{i-1},p_{i-1}) \gets \valest_{V,\bB}(\stt_{i-1}^*,1^{1/\varepsilon_i})$.
		\item     
		Apply $(\qv{\sigma}_i^*,y_i)  \gets \Pi_{x_i}(\qv{\sigma}_{i-1})$.
		\item     
		Apply $\qv{\rho}_i  \gets \repair_{V,\bB,\Pi_{x_i}}(\qv{\sigma}_i^*,y_i,p_{i-1},1^{1/\varepsilon_i},1^{k+1})$.
		\item
		Apply $(\stt_i^*,p_i^*) \gets \valest_{V,\bB}(\rho_i,1^{1/\varepsilon_i})$.
		\item
		Output $(y_i, \stt_i^*)$
	\end{itemize}
	We now prove that the above algorithms satisfy the requirements of Theorem \ref{thm:state_restoration}. 
	
	First note that if $B$ has one-shot value $p$, then by the estimation guarantee of Lemma \ref{lem:csmz},
	\begin{align*}
		\bbE_{}\left[p_0^*\pST (\stt_0^*,p^*_0) \gets \valest_{V,\bB}(\stt_0,1^{2/\eta})\right]&\\ &= \Pr\left[V(r,y) = 1\pST \begin{array}{l}
			r\gets \zo^d  \\
			y \gets \bB(1^\secp,r,\stt_0) 
		\end{array}\right] \\ 
		&= \Pr\left[V(r,y) = 1\pST \begin{array}{l}
			r\gets \zo^d  \\
			x = G(1^\secp,r)\\
			y \gets B(1^\secp,x,\stt_0) 
		\end{array}\right]  = p\enspace.
	\end{align*}
	We now turn to prove the persistence property.
	\def \hR {\widehat{R}}
\def \hcRs{\widehat{\cR}^*}

	\paragraph{The Purified $\hR$.} We consider the unitary purifications $\set{\hR_{\secp,i,x}}_{\secp,i,x}$ of $R_B(1^\secp,1^i,x,\cdot)$  that act on registers $SY\hY$. Each such unitary $\hR_{\secp,i,x}$ is composed of the unitary purifications of each of the steps $R_B(1^\secp,1^i,x,\cdot)$, which are performed coherently. We assume w.l.o.g that $Y\hY$ consist of the purifying registers for each one of the steps. In particular, $\hY$ includes registers $P_{i-1}P_{i}^*$ corresponding to the measurements $p_{i-1},p_i^*$ done by $\valest$, and $Y$ corresponds to the measurement of $y_i$ by $\Pi_{x_i}$. Throughout we rely on the fact that the distribution of $p_{i-1},p_i^*$ measured in any purified interaction $A_z^{\hcRs}$ is identical to their distribution in the non-purified interaction $A_z^{\cR^*}$.
	
		\begin{lemma}\label{lem:pers}
		Fix any $\stt^* \in \qsts{\cH}$ with purification $\hR_{\secp,0}(\qv{0},\qv{0})$. Then $\cR^* = (R_B,\stt_0^*)$ is $\frac{\eta}{2}$-persistent, with respect to the purifications $\set{\hR_{\secp,i,x}}$, with persistent value $p_0$, where $p_0$ the purifying measurement corresponding to the first application of (the purified) $\valest$.
	\end{lemma}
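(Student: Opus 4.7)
The plan is to work with the purified solver interaction from Definition \ref{def:solversext}. Proposition \ref{prop:exttsstate} guarantees that, after conditioning on the extended transcript $\hts_i$ (which records all the purifying measurements performed by the internal $\valest$ and $\Pi_{x_j}$ calls throughout the execution), the state $\stt_i^*$ is a pure, deterministic function of $\hts_i$, and hence so is the random variable $\val_P[i, A_z^{\hcRs}]$. It therefore suffices to analyze the classical sequence of measurement outputs $p_0^*, p_0, p_1^*, p_1, p_2^*, \ldots$ produced along the execution, and to argue that, except on an event of total probability at most $\eta/2$, every $p_i^*$ and every $\val_P[i, A_z^{\hcRs}]$ lies within $\eta/2$ of the initial $p_0^*$.

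The core argument controls a single iteration. At iteration $i$, $R_B$ calls $\valest(\stt_{i-1}^*, 1^{1/\varepsilon_i})$ producing $p_{i-1}$, then applies the projection $\Pi_{x_i}$, then $\repair$ using $p_{i-1}$, then $\valest$ again producing $p_i^*$. Two applications of Lemma \ref{lem:csmz} bridge consecutive $p_j^*$'s: the almost-projective property, applied to the pair $(p_{i-1}^*, p_{i-1})$ --- both being $\valest$ outputs on the \emph{same} state $\stt_{i-1}^*$, with the second at least as accurate as the first --- yields $|p_{i-1}^* - p_{i-1}| \leq \varepsilon_{i-1}$ except with probability $\varepsilon_{i-1}$; the repair property yields $|p_{i-1} - p_i^*| \leq \varepsilon_i$ except with probability $\varepsilon_i$. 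A union bound gives $|p_i^* - p_{i-1}^*| = O(\varepsilon_{i-1})$ except with probability $O(\varepsilon_{i-1})$. Telescoping over $i$ and using the identity $\sum_{i \geq 1} \varepsilon_i = (\eta/\pi^2)\sum_{i \geq 1} 1/i^2 = \eta/6$, we conclude that $|p_i^* - p_0^*| = O(\eta)$ for every $i$ except on an event of probability $O(\eta)$.

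To transfer this bound from the classical $p_i^*$'s to the operational values $\val_P[i, A_z^{\hcRs}]$, we note that by the value-estimation property of Lemma \ref{lem:csmz}, applying a hypothetical fresh $\valest$ to $\stt_i^*$ and then using it to compute a solution for a random instance yields a success probability that in expectation equals $\val_P[i, A_z^{\hcRs}]$; and by almost-projectivity this fresh estimate lies within $\varepsilon_i$ of $p_i^*$ with probability $1-\varepsilon_i$. Splitting the expectation into its good and bad components and using that $\valest$ outputs lie in $[0,1]$ yields $|\val_P[i, A_z^{\hcRs}] - p_i^*| = O(\varepsilon_i)$. Combining with the telescoping bound above and rescaling the constants in $\varepsilon_i = \eta/(i\pi)^2$ and in the initial $\valest$ accuracy chosen by $S_B$ completes the proof.

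The main obstacle is carrying a union bound over an \emph{unbounded} horizon of iterations: the decay $\varepsilon_i = \Theta(\eta/i^2)$ is specifically chosen so that $\sum_i \varepsilon_i$ converges to a small multiple of $\eta$; any slower decay would destroy persistence. A related subtlety is that the coarse initial $\valest$ inside $S_B$ induces a first-iteration discrepancy between $p_0^*$ and $p_0$ that only the weaker-accuracy side of almost-projectivity can control, and this discrepancy must be absorbed into the global $\eta$-budget rather than compounded over steps. This is also why the persistent value in the lemma is taken to be $p_0^*$ itself --- a fixed reference point measured once, before any interaction with $A$ --- rather than the true operational value of $\stt_0^*$.
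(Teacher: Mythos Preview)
Your telescoping of the $\valest$ estimates via the almost-projective and repair guarantees of Lemma~\ref{lem:csmz}, followed by a transfer to the operational values $\val_P[i,A_z^{\hcRs}]$, is exactly the paper's approach (its Claims~\ref{clm:ms_value} and~\ref{clm:persistence}). Two corrections are needed.

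First, the lemma's persistent value is $p_0$, the output of the first $\valest$ call \emph{inside} $R_B$ (at iteration $1$, applied to $\stt_0^*$), not $p_0^*$, which is the output of $S_B$. The step $|p_0^*-p_0|\le \eta/2$ is the content of the separate Lemma~\ref{lem:p0}; what you have written is closer to a direct proof of the combination of the two lemmas. If you drop that first link and telescope from $p_0$, you recover the paper's exact constant $3\sum_i\varepsilon_i=\eta/2$ with no rescaling.

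Second, your transfer from $p_i^*$ to $\val_P[i,A_z^{\hcRs}]$ has a gap. The value-estimation property gives $\bbE[\tilde p]=\Pr[\bB(\stt_i^*)\text{ succeeds}]$ for a fresh estimate $\tilde p$ on $\stt_i^*$; but $\val_P[i,\cdot]$ is the success probability of $\bB$ on the \emph{post}-$\valest$ state $\sigma_i$ (since $R_B$ always runs $\valest$ before producing $y$), and these need not coincide. So knowing $|\tilde p-p_i^*|\le\varepsilon_i$ and ``splitting the expectation'' does not by itself relate $\val_P[i,\cdot]$ to $p_i^*$. The paper closes this via Claim~\ref{clm:post_est_stt}: fix the pure post-measurement state $\sigma_i$ (obtained after the first $\valest$ in iteration $i{+}1$, with output $p_i$) and apply a \emph{second} hypothetical $\valest$ to $\sigma_i$ with output $\tilde p'$. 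Now value-estimation gives $\bbE[\tilde p'\mid\sigma_i]=\Pr[\bB(\sigma_i)\text{ succeeds}]$, and almost-projectivity gives $|\tilde p'-p_i|\le \varepsilon_{i+1}$ with probability $1-\varepsilon_{i+1}$. At this point your ``split the expectation'' step applies and yields $|\val_P[i,\cdot]-p_i|\le O(\varepsilon_{i+1})$, which is the paper's Claim~\ref{clm:ms_value}.
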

	
	\begin{lemma}\label{lem:p0}
		Except with probability $\eta/2$, over $(\stt_0^*,p_0^*)\gets S_B(\stt_0)$ and $(\qv{\sigma}_{0},p_{0}) \gets \valest_{V,\bB}(\stt_{0}^*,1^{1/\varepsilon_1})$, 
		\begin{align*}
			\abs{p_0 - p_0^*} &\leq \eta/2\enspace.
		\end{align*}
	\end{lemma}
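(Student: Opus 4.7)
The plan is to derive Lemma~\ref{lem:p0} as a direct instantiation of the ``Estimation is Almost Projective'' property stated in item~2 of Lemma~\ref{lem:csmz}. Observe that the two values $p_0^*$ and $p_0$ in the lemma statement are produced by two successive applications of $\valest_{V,\bB}$ to the same state: the first call $(\stt_0^*,p_0^*)\gets \valest_{V,\bB}(\stt_0,1^{2/\eta})$ (invoked inside $S_B$), followed by $(\qv{\sigma}_0,p_0)\gets \valest_{V,\bB}(\stt_0^*,1^{1/\varepsilon_1})$. This is precisely the template covered by the almost-projective property, with the role of $\qv{\rho}^*$ played by $\stt_0^*$ and the role of $\qv{\rho}^{**}$ played by $\qv{\sigma}_0$.

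Next I would verify the accuracy parameters are compatible with the hypothesis $\varepsilon \geq \varepsilon' > 0$ of that property. Reading off the construction, the first invocation uses accuracy $\varepsilon = \eta/2$, while the second uses $\varepsilon' = \varepsilon_1 = \eta/\pi^2$. Since $\pi^2 > 2$, we indeed have $\varepsilon \geq \varepsilon' > 0$, so the almost-projective property applies with the choice $\varepsilon = \eta/2$, yielding
\[
\Pr\!\left[\,\abs{p_0 - p_0^*} \geq \eta/2\,\right] \;\leq\; \eta/2,
\]
which is exactly the claim of the lemma.

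I anticipate no real obstacle here; the only care needed is to confirm that the two $\valest$ calls in $S_B$ and at the start of the first step of $R_B$ match the template of Lemma~\ref{lem:csmz}(2), including that the internal purifying measurements of $p_{-1}^*$ and $p_0$ used in the analysis of the purified $\widehat{R}$ agree in distribution with the non-purified values (as noted in the paragraph ``The Purified $\hR$'' preceding the lemma). Once this is observed, the conclusion is immediate from Lemma~\ref{lem:csmz}.
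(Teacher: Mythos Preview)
Your proposal is correct and matches the paper's approach exactly: the paper's proof consists of a single sentence invoking the ``estimation is almost projective'' guarantee of Lemma~\ref{lem:csmz}, and you have carried this out in detail, including the verification that $\eta/2 \geq \varepsilon_1 = \eta/\pi^2$ so that the hypothesis $\varepsilon \geq \varepsilon'$ is met.
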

	\noindent
	Indeed, combining the two lemmas it follows the distribution on $\cR^* = (R_B,\stt_0^*)$ induced by $(\stt_0^*,p_0^*)\gets S(\stt_0)$ is $(p^*_0,\eta)$-consistent. 
	
	\bigskip\noindent
	We now prove the above lemmas. Toward proving Lemma \ref{lem:pers}, we first prove two useful claims. In what follows let $\hvalest_\varepsilon$ be a unitary purification of $\valest_{V,\bB}(\cdot,1^{1/\varepsilon})$ that operates on registers $SPZ$, where given $(\qv{\rho},\qv{0},\qv{0})$, it outputs $(\qv{\sigma},\qv{p},\qv{z})$ where $(\qv{\sigma},\qv{p})$ have the same density matrix as $\valest(\qv{\rho},1^{1/\varepsilon})$. Then given any pure input state $\qv{\rho}$, measuring $(\qv{p},\qv{z})$ in the computational basis, purifies $\qv{\sigma}$. We denote by $(\qv{\sigma},p,z)\gets \hvalest(\cdot,1^{1/\varepsilon})$ the process that applies $\valest_\varepsilon$ and measures $(\qv{p},\qv{z})$.

	\begin{clm}[Value of Post-Estimation State]\label{clm:post_est_stt}
		Let $\qv{\rho}\in \qsts{\cH}$ be a pure state. Then with probability $1-\varepsilon$ over the measurement of  $(p,z)$ in $(\qv{\sigma},p,z) \gets \hvalest(\qv{\rho},1^{1/\varepsilon})$, it holds that
		$$
		\abs{p - \Pr\left[V(r,y) = 1\pST \begin{array}{l}
				r\gets \zo^d  \\
				x = G(1^\secp,r)\\
				y \gets \bB(1^\secp,r,\qv{\sigma}) 
			\end{array}\right]} \leq \varepsilon \enspace.
		$$
	\end{clm}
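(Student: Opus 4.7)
The plan is to prove the claim by coupling the actual execution with a mental second invocation of $\valest$ on the post-estimation state $\qv{\sigma}$. Denote by $v(\qv{\sigma}) := \Pr[V(r,y) = 1 \mid r \gets \zo^d, x = G(1^\secp,r), y \gets \bB(1^\secp, r, \qv{\sigma})]$ the quantity we want $p$ to approximate. Since $\qv{\rho}$ is pure and the purified $\hvalest$ performs only unitary operations followed by a measurement of the $(\qv{p}, \qv{z})$ registers, an argument entirely analogous to Proposition~\ref{prop:exttsstate} shows that the output state $\qv{\sigma}$ is pure conditioned on each outcome $(p,z)$.

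Given this, the Value Estimation guarantee of Lemma~\ref{lem:csmz} applied to the pure state $\qv{\sigma}$ implies that for the output $p^{**}$ of a second (hypothetical) execution of $\valest$ on $\qv{\sigma}$, we have $\bbE[p^{**} \mid \qv{\sigma}] = v(\qv{\sigma})$. The Almost Projective guarantee invoked with $\varepsilon' = \varepsilon$ yields $\Pr[|p - p^{**}| \ge \varepsilon] \le \varepsilon$, where the probability is over both invocations. Jensen's inequality then gives
\[
|p - v(\qv{\sigma})| = \bigl| \bbE_{p^{**}}[p - p^{**} \mid \qv{\sigma}, p] \bigr| \le \bbE_{p^{**}}\bigl[|p - p^{**}| \mid \qv{\sigma}, p\bigr],
\]
and Markov's inequality applied to the right-hand side, as a random variable of $(\qv{\sigma}, p)$, converts the global expectation bound into the desired tail bound on $|p - v(\qv{\sigma})|$.

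The main obstacle I anticipate is recovering the precise $(1-\varepsilon, \varepsilon)$ tradeoff stated in the claim rather than a weaker $\sqrt{\varepsilon}$-type bound that a direct application of the above steps would give (since $\bbE[|p-p^{**}|] \le 2\varepsilon$ only yields $\Pr[\cdot \ge \sqrt{2\varepsilon}] \le \sqrt{2\varepsilon}$ via Markov). I would resolve this by invoking Lemma~\ref{lem:csmz} internally with a polynomially smaller accuracy parameter (for instance, $\varepsilon^{3}$) so that after Markov the symmetric $\varepsilon$-on-both-sides bound is recovered. Such an internal rescaling of polynomial parameters is harmless: the downstream invocations of Claim~\ref{clm:post_est_stt} inside Lemmas~\ref{lem:pers} and~\ref{lem:p0}, and ultimately inside the Persistence Theorem~\ref{thm:state_restoration}, only require accuracy that is inverse polynomial in $\secp$ and can absorb any fixed polynomial blowup in the precision parameter passed to $\hvalest$.
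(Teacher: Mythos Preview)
Your approach is essentially identical to the paper's: run a second (hypothetical) $\valest$ on the post-estimation state $\qv{\sigma}$, invoke Value Estimation to identify $\bbE[p^{**}\mid \qv{\sigma}]$ with $v(\qv{\sigma})$, and invoke Almost Projective to control $|p-p^{**}|$. The paper's proof is in fact terser than yours---it writes $v(\qv{\sigma}) = \bbE[p^{*}] = p + \bbE[p^{*}-p]$ and then simply says ``The claim follows,'' without spelling out the Jensen/Markov step or addressing the loss you flag.

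Your worry about only obtaining a $\sqrt{\varepsilon}$-type tradeoff from a direct Markov argument is legitimate; the paper glosses over exactly this point. Your proposed resolution (invoking Lemma~\ref{lem:csmz} with a polynomially smaller accuracy parameter and absorbing the slack downstream) is correct and entirely in the spirit of how the claim is used: Claims~\ref{clm:ms_value} and~\ref{clm:persistence} are summed over $i$ with $\varepsilon_i = \eta/(i\pi)^2$, so any fixed polynomial blowup is harmless for the final $\eta$-persistence bound. In short, your argument is at least as complete as the paper's.
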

	\begin{proof}
		Consider applying $(\qv{\sigma},p,z) \gets \hvalest(\qv{\rho},1^{1/\varepsilon})$ and then applying $(\qv{\sigma}^{*},p^{*}) \gets \valest_{V,\bB}(\qv{\sigma},1^{1/\varepsilon})$. Then by the fact that estimation is almost projective (Lemma \ref{lem:csmz}), it holds with probability $1-\varepsilon$ that $|p^{*} - p| \leq \varepsilon$. Also,
		$$
		\Pr\left[V(r,y) = 1\pST \begin{array}{l}
			r\gets \zo^d  \\
			x = G(1^\secp,r)\\
			y \gets \bB(1^\secp,r,\qv{\sigma})
		\end{array}\right] = \bbE[p^{*}] = p + \bbE(p^{*}-p)\enspace.
		$$
		The claim follows.
	\end{proof}
	
	\begin{clm}[$\Pi_x$ vs $\bB$]\label{clm:proj_val}
		For any state $\qv{\rho}\in \qsts{\cH}$,
		\begin{align*}
			\Pr\left[V(r,y) = 1\pST \begin{array}{l}
				r\gets \zo^d  \\
				x = G(1^\secp,r)\\
				(\qv{\rho}^*,y) \gets \Pi_{x}(\qv{\rho})
			\end{array}\right] = \Pr\left[V(r,y) = 1\pST \begin{array}{l}
				r\gets \zo^d  \\
				x = G(1^\secp,r)\\
				y \gets \bB(1^\secp,r,\qv{\rho})
			\end{array}\right]\enspace.
		\end{align*}
	\end{clm}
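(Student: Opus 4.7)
The plan is to expand both probabilities using the explicit definitions of $\Pi_x$ and $\bB$ and show they agree term-by-term, where the key ingredient is the verifiably-polynomial image property (Definition~\ref{def:ver_poly_im}) which ensures that every solution accepted by $V$ is already captured by one of the ``good'' branches of $\Pi_x$.

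First I would unfold the right-hand side. The wrapper $\bB(1^\secp, r, \qv{\rho})$ computes $x = G(1^\secp, r)$, applies the unitary purification $\hB_{\secp, x}$ to $\qv{\rho}$, and measures the output register $\cY$ in the computational basis. Hence, for any $y \in \zo^m$, the probability of measuring outcome $y$ equals $\Tr\bigl[(I_{\cZ} \otimes \ketbra{y}_{\cY})\,\hB_{\secp,x}\, \rho\, \hB_{\secp,x}^{\dag}\bigr]$, which by cyclicity of the trace equals $\Tr\bigl[\hB_{\secp,x}^{\dag}(I_{\cZ} \otimes \ketbra{y}_{\cY})\hB_{\secp,x}\, \rho\bigr]$. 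Averaging over $r \gets \zo^d$ and summing over all $y$ with $V(1^\secp, r, y) = 1$ yields a closed-form expression for the RHS.

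Next I would unfold the left-hand side in a parallel fashion. The measurement $\Pi_x$ produces an outcome in $Y_x \cup \{\bot\}$. By definition, outcome $y \in Y_x$ occurs with probability $\Tr[\Pi_{x,y}\,\rho] = \Tr\bigl[\hB_{\secp,x}^{\dag}(I_{\cZ} \otimes \ketbra{y}_{\cY})\hB_{\secp,x}\,\rho\bigr]$, and the outcome $\bot$ contributes zero to $\Pr[V(r,y)=1]$ since $\bot \notin \zo^m$ (so by convention $V$ rejects). Thus the LHS equals $2^{-d} \sum_r \sum_{y \in Y_x : V(1^\secp,r,y)=1}\Tr\bigl[\hB_{\secp,x}^{\dag}(I_{\cZ} \otimes \ketbra{y}_{\cY})\hB_{\secp,x}\,\rho\bigr]$ with $x = G(1^\secp, r)$.

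Finally, the two expressions are matched via the verifiably-polynomial image property, which states that any $y$ with $V(1^\secp, r, y) = 1$ (for $x = G(1^\secp, r)$) necessarily lies in $Y_x$. Therefore the inner summation $\sum_{y : V(r,y) = 1}$ on the RHS coincides with $\sum_{y \in Y_x : V(r,y)=1}$ on the LHS, and the equality follows. The only conceptual step is this last appeal to Definition~\ref{def:ver_poly_im}; everything else is an essentially mechanical unfolding of notation. There is no real obstacle, which is precisely why this claim is the ``bridge'' that lets the subsequent analysis identify the success probability of $\bB$ with the weight placed by $\Pi_x$ on its valid branches.
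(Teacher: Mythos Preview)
Your proposal is correct and is essentially the same argument the paper gives, just unfolded more explicitly: the paper observes in one sentence that $\Pi_x$ acts as $\bB$ except that outcomes $y\notin Y_x$ are collapsed to $\bot$, and since the verifiably-polynomial image property guarantees $\{y:V(1^\secp,r,y)=1\}\subseteq Y_x$ (and w.l.o.g.\ $V$ rejects $\bot$), the acceptance probability is unchanged. Your trace computations make this quantitatively explicit but the key idea is identical.
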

	\begin{proof}
		By the definition of $\Pi_x$ it acts exactly as $\bB(1^\secp,r,\cdot)$ with the exception that $\Pi_x$ replaces with $\bot$ any $y \notin Y_x = \set{y:\imver(1^\secp,x,y)=1}$. Recall, however, that $Y_x$ contains all valid solutions $\set{y:V(1^\secp,r,y) = 1}$, and hence (assuming w.l.o.g that $V(1^\secp,r,\bot) \neq 1$), this difference does not affect whether $V(1^\secp,r,y) = 1$. 
	\end{proof}

	\noindent
	We now prove Lemma \ref{lem:pers}.
	\begin{proof}[Proof of Lemma \ref{lem:pers}.]
		Fix any solver-aided algorithm $A$ with input $z$. We consider the random variables $\stt_0^*,p_0,p_1^*,\stt_1^*,p_1,p_2^*,\stt_2^*,p_2,\dots$ given by extended the solver interaction $A_z^{\hcRs}$, where $\stt_i^*$ are the corresponding pure state and $p_{i-1},p_i^*$ are the purifying measurements of registers $P_{i-1}P_i^*$.
		
		\begin{clm}[The Many Shot Value]\label{clm:ms_value}
			For all $i\geq 0$, except with probability $\varepsilon_{i+1}$ over $A_z^{\hcRs}$,
			$$ \abs{p_i - \val_P[i,A_z^{\hcRs}] } \leq \varepsilon_{i+1}\enspace,
			$$
			where each random variable $\val_P[i,A_z^{\hcRs}]$ is determined by the random variable $\stt_i^*$.
		\end{clm}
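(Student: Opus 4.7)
The plan is to combine Claim~\ref{clm:post_est_stt} (Value of Post-Estimation State) and Claim~\ref{clm:proj_val} ($\Pi_x$ vs $\bB$) with the purity statement of Proposition~\ref{prop:exttsstate}, which guarantees that $\stt_i^*$ is pure conditioned on the extended transcript prefix $\hts_i$.

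First I would condition on $\hts_i$. By Proposition~\ref{prop:exttsstate}, $\stt_i^*$ is then a pure state completely determined by $\hts_i$. By construction of $\hR$, the random variable $p_i$ is the purifying measurement corresponding to the first $\valest_{V,\bB}(\stt_i^*,1^{1/\varepsilon_{i+1}})$ call that $R_B$ makes when processing the $(i{+}1)$-st query, and its post-estimation state is $\sigma_i$.

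Next I would unfold the definition of $\val_P[i,A_z^{\hcRs}]$ from Definition~\ref{def:val_adv_state}: it is the probability, for a uniform $r$ and $x = G(1^\secp,r)$, that the classical solution output of $R_B(1^\secp,1^{i+1},x,\stt_i^*)$ is accepted by $V$. Inspecting the pseudocode of $R_B$, that classical output $y$ is the outcome of the $\Pi_{x}$ measurement applied to $\sigma_i$; the subsequent $\repair$ and second $\valest$ only reshape the state and do not affect $y$. Claim~\ref{clm:proj_val} then lets me replace $\Pi_x(\sigma_i)$ by $\bB(1^\secp,r,\sigma_i)$ inside this acceptance probability, yielding
\[
\val_P[i,A_z^{\hcRs}] \;=\; \Pr\!\left[V(1^\secp,r,y)=1 \pST \begin{array}{l} r\gets \zo^d,\ x=G(1^\secp,r), \\[0.5mm] y \gets \bB(1^\secp,r,\sigma_i)\end{array}\right].
\]

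Finally I would apply Claim~\ref{clm:post_est_stt} to the pure input $\stt_i^*$ with accuracy $\varepsilon_{i+1}$: with probability at least $1-\varepsilon_{i+1}$ over the measurement of $p_i$ (and the accompanying purifying register $z$), the value $p_i$ lies within $\varepsilon_{i+1}$ of exactly the probability displayed above. This establishes the desired inequality conditioned on $\hts_i$; since the bound holds uniformly for every $\hts_i$, it holds with probability $\ge 1-\varepsilon_{i+1}$ over all of $A_z^{\hcRs}$. The assertion that $\val_P[i,A_z^{\hcRs}]$ is determined by $\stt_i^*$ is immediate from the displayed expression, in which $\stt_i^*$ is the only non-averaged quantity. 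There is no serious obstacle here---the claim is essentially a bookkeeping step that identifies the first purifying measurement of round $i{+}1$ with the value of $\cR^*$ on $\stt_i^*$. The only subtlety is to verify from the pseudocode that the classical output of $R_B$ is indeed the $\Pi_x$-outcome (so Claim~\ref{clm:proj_val} applies) and that $\stt_i^*$ is pure in the purified execution (so Claim~\ref{clm:post_est_stt} applies).
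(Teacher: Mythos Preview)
The proposal is correct and follows essentially the same approach as the paper: unfold $\val_P[i,A_z^{\hcRs}]$ to expose the $\Pi_x$ measurement on $\sigma_i$, apply Claim~\ref{clm:proj_val} to replace $\Pi_x$ by $\bB$, and then invoke Claim~\ref{clm:post_est_stt} with accuracy $\varepsilon_{i+1}$. Your write-up is slightly more explicit than the paper's in justifying the purity of $\stt_i^*$ via Proposition~\ref{prop:exttsstate} (needed for the hypothesis of Claim~\ref{clm:post_est_stt}) and in checking from the pseudocode that $y$ is the $\Pi_x$-outcome, but these are exactly the implicit steps behind the paper's terse proof.
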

		\begin{proof}
			For every $i$, 
			\begin{align*}
				\val_P[i,A_z^{\hcRs}] = &\Pr\left[V(r,y) = 1\pST \begin{array}{l}
				(\qv{\sigma}_{i},p_{i},z) \gets \hvalest(\stt_{i}^*,1^{1/\varepsilon_{i+1}})\\
					r\gets \zo^d  \\
					x = G(1^\secp,r)\\
					(\qv{\sigma}_{i+1}^*,y) \gets \Pi_{x}(\qv{\sigma}_{i})
				\end{array}\right] =\\ &\Pr\left[V(r,y) = 1\pST \begin{array}{l}
				(\qv{\sigma}_{i},p_{i},z) \gets \hvalest(\stt_{i}^*,1^{1/\varepsilon_{i+1}})\\
					r\gets \zo^d  \\
					x = G(1^\secp,r)\\
					y \gets \bB(1^\secp,r,\qv{\sigma}_{i})
				\end{array}\right]\enspace,
			\end{align*}
			where the first equality follows by the definition of $\val_P[i,A_z^{\hcRs}]$ and the second by Claim \ref{clm:proj_val}. The claim now follows from Claim \ref{clm:post_est_stt}.
		\end{proof}
		
		\begin{clm}[Persistence]\label{clm:persistence} 
			For all $i\geq 1$, except with probability $ 2 \varepsilon_i$ over $A_z^{\hcRs}$, 
			\begin{align*}
				\abs{p_{i} - p_{i}^*} &\leq \varepsilon_{i}\enspace,\\
				\abs{p_{i}^*-p_{i-1}} &\leq \varepsilon_{i}\enspace.
			\end{align*}
			
		\end{clm}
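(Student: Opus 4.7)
The claim asks us to bound two deviations that occur within a single iteration $i$ of the purified $\hR$: the ``value-estimation drift'' $|p_i-p_i^*|$ between the final $\valest$ of iteration $i$ (which produces $p_i^*$ and leaves state $\stt_i^*$) and the initial $\valest$ of iteration $i+1$ (which measures $p_i$ from $\stt_i^*$), and the ``repair drift'' $|p_i^*-p_{i-1}|$ across the steps $\valest \to \Pi_{x_i}\to \repair\to \valest$ performed inside iteration $i$. The plan is to read each of these off directly from the corresponding guarantee of Lemma~\ref{lem:csmz}, and then to combine them by a union bound.

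For the first inequality I would invoke the \textbf{Estimation is Almost Projective} property. Observe that in iteration $i$ the state $\stt_i^*$ is the post-measurement state of $\valest_{V,\bB}(\rho_i,1^{1/\varepsilon_i})$, whose classical output is $p_i^*$. In iteration $i{+}1$ of the purified $\hR$, the first step is $\valest_{V,\bB}(\stt_i^*,1^{1/\varepsilon_{i+1}})$, whose classical output we have named $p_i$. Since the sequence $\varepsilon_i=\eta/(i\pi)^2$ is strictly decreasing, we have $\varepsilon_i\ge \varepsilon_{i+1}>0$, so the hypothesis of the almost-projective clause is satisfied with the two parameters $\varepsilon=\varepsilon_i$ and $\varepsilon'=\varepsilon_{i+1}$. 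The clause then yields $|p_i-p_i^*|\le \varepsilon_i$ except with probability at most $\varepsilon_i$.

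For the second inequality I would invoke the \textbf{Repairing} property. The four-step block inside iteration $i$ is precisely the experiment described in that clause of Lemma~\ref{lem:csmz}, instantiated with the verifier $V$, the wrapper solver $\bB$, the projective measurement $\Pi=\Pi_{x_i}$ (which has $k+1$ outcomes, namely $Y_{x_i}\cup\{\bot\}$, matching the $1^{k+1}$ input we pass to $\repair$), the input state $\stt_{i-1}^*$, the first estimate $p^*:=p_{i-1}$, the projective outcome $y:=y_i$, and with accuracy parameter $\varepsilon_i$ for both $\valest$ invocations and for $\repair$. The clause then yields $|p_i^*-p_{i-1}|\le \varepsilon_i$ except with probability at most $\varepsilon_i$.

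Finally I would take a union bound over the two bad events to conclude that with probability at least $1-2\varepsilon_i$ over the purified interaction $A_z^{\hcRs}$, both inequalities hold simultaneously. I do not expect any real obstacle here: the content of the claim is essentially a careful bookkeeping exercise matching the parameters chosen in $R_B$ to the two guarantees of Lemma~\ref{lem:csmz}; the only point one must double-check is the monotonicity $\varepsilon_i\ge \varepsilon_{i+1}$ needed to apply the almost-projective clause across the iteration boundary, and that the parameter $\varepsilon_i$ used by $\valest$ at the end of iteration $i$ coincides with the parameter $\varepsilon_i$ used by the initial $\valest$ of iteration $i$, as required by the Repairing statement.
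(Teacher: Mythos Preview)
Your proposal is correct and follows exactly the paper's approach: the first inequality is the ``Estimation is Almost Projective'' guarantee of Lemma~\ref{lem:csmz} applied across the iteration boundary (using $\varepsilon_i\ge\varepsilon_{i+1}$), the second is the ``Repairing'' guarantee applied to the four-step block inside iteration $i$, and the $2\varepsilon_i$ comes from a union bound. The paper's own proof is a single sentence invoking these two guarantees; your version simply spells out the parameter matching that the paper leaves implicit.
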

		\begin{proof}
			The first inequality follows from the estimation is almost projective guarantee and second from the repairing guarantee (both given by Lemma \ref{lem:csmz}).
		\end{proof}
		Combining Claims \ref{clm:ms_value} and~\ref{clm:persistence}, and applying a union bound, we deduce that $3\sum_i\varepsilon_i$-persistence holds, with persistent value $p_0$. The Lemma now follows by our choice of $\varepsilon_i$:
		$$
		3\sum_i\varepsilon_i = \frac{3\eta}{\pi^2} \sum_i i^{-2} = \eta/2\enspace.
		$$
	\end{proof}%

	\begin{proof}[Proof of Lemma \ref{lem:p0}.]
		The lemma follows from the estimation is almost projective guarantee (Lemma \ref{lem:csmz}).
	\end{proof}
	This concludes the proof of Theorem \ref{thm:state_restoration}.
\end{proof}

\section{Stateful Solvers To Memoryless Solvers}
\label{sec:memoryless}

The following theorem shows that it is possible to convert stateful solvers into memoryless solvers with the same value, albeit with a few caveats. First, the distribution of queries that is to be made to the memoryless solver needs to be known ahead of time (i.e.\ it needs to be decided upfront in a non-adaptive manner). Second, the resulting memoryless solver might not be efficiently executable. Instead, we provide a simulator that can emulate its behavior, but only once, and only on an input that comes from the prescribed distribution. The simulator only manages to simulate the execution up to some statistical error, and its running time is polynomial in the inverse of this error. A formal theorem statement follows.

\begin{theorem}\label{thm:simmemless}
	There exists a polynomial time oracle-aided simulator $\Sim$ with the following properties.
	Let $\cB$ be a $(p, \eta)$-persistent $\ell$-stateful solver for a falsifiable non-interactive assumption $P$  and let $D=\{D_\secp\}_\secp$ be an efficiently samplable distribution ensemble over $k$-tuples of $P$ instances. Finally, let $\delta$ be some parameter. Then there exists a $(p, \eta)$-persistent (but possibly inefficient) distribution over memoryless solvers $\cB'=\cB'_{\ell, D, \delta} = (B', \emptyset)$ for $P$ such that the following holds.  
	
	Consider sampling $\vec{x}\gets D_\secp$, and let $\cB'(1^\secp, \vec{x})$ be the transcript of the process that feeds the elements of $\vec{x}$ into $\cB'$ one-by-one in order (i.e.\ executes $B'(1^\secp, 1^i, x_i, \emptyset)$ in order).
	Then $\Sim^{\cB,D}(1^\secp, 1^\ell, 1^{1/\delta}, \vec{x})$ makes non-adaptive black-box access to $\cB$ and produces a distribution that is within at most $\delta$ statistical distance from $\cB'(1^\secp, \vec{x})$.
\end{theorem}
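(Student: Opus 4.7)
The plan is to use the Plug-In Lemma (Lemma~\ref{lem:plugin}) to argue that $\cB$, having only $\ell$ qubits of state, cannot meaningfully distinguish ``real'' planted queries from fresh i.i.d.\ draws from the marginal distribution of $D$, provided we dilute each plant in a sea of dummies. This lets us simulate the memoryless solver by a single shared run of $\cB$ on a suitably constructed sequence.

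Concretely, let $M_\secp$ denote the marginal distribution of a single coordinate of $D_\secp$, and set $N=\Theta(k^2(\ell+km)/\delta^2)$. The simulator $\Sim^{\cB,D}(1^\secp,1^\ell,1^{1/\delta},\vec{x})$ samples $N$ i.i.d.\ dummies from $M_\secp$, picks a uniformly random size-$k$ subset of positions $j_1<\cdots<j_k$ in $[N]$, plants $x_t$ at position $j_t$ to form a sequence $\vec{w}$, runs $\cB$ on $\vec{w}$ in order, and outputs the transcript $((x_t, y_t))_{t=1}^k$ where $y_t$ is the answer emitted at position $j_t$. The memoryless solver distribution $\cB'$ is defined dually: a sample is described by randomness $\omega$ containing, independently for each potential time step $t$, a fresh length-$N$ dummy sequence $\vec{z}^{(t)}$ and a uniformly random position $\rho_t\in[N]$; on input $(1^\secp,1^t,x)$, the (possibly inefficient) algorithm $B'$ plants $x$ at position $\rho_t$ in $\vec{z}^{(t)}$, runs a \emph{fresh} copy of $\cB$ from its initial state, and returns the answer emitted at position $\rho_t$. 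Since each call uses entirely independent randomness, no state is carried between calls, so $B'$ is memoryless. Persistence of $\cB'$ is inherited from that of $\cB$: each call of $B'$ amounts to a single $\cB$-invocation at time step $\rho_t$ of a ``dummies-then-query'' interaction, whose value is within $\eta$ of $p$ by the persistence hypothesis on $\cB$.

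The heart of the proof is a hybrid argument. Define $H_0,\ldots,H_k$, where $H_0$ is the distribution of $\Sim^{\cB,D}(\vec{x})$ and $H_i$ is obtained from $H_{i-1}$ by removing the plant of $x_i$ from the shared sequence (replacing it by a fresh $M_\secp$ draw) and recomputing $y_i$ via a fresh run as in the definition of $B'$. Then $H_k$ coincides with $\cB'(1^\secp,\vec{x})$. To bound $\td(H_{i-1},H_i)$, I apply the Plug-In Lemma to the length-$N$ sequence of queries processed by the shared run, taking the auxiliary register $\qv{s}$ of the lemma to be the final state of $\cB$ together with the classical answers at the $k-1$ other planted positions --- a total of $\ell+(k-1)m$ qubits/bits. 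The lemma then asserts that the value at a uniformly random position is indistinguishable from a fresh conditional draw up to error $\sqrt{(\ell+(k-1)m)/(2N)}$. Since $j_i$ conditioned on the other positions is uniform over $[N]$ minus $k-1$ values, which is $k/N$-close to uniform over $[N]$, the per-hybrid loss is $O(\sqrt{(\ell+km)/N})$; summing over the $k$ hybrids gives total statistical distance $O(k\sqrt{(\ell+km)/N})\le\delta$ for our choice of $N$.

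The main obstacle I anticipate is the careful bookkeeping needed to invoke the Plug-In Lemma correctly: the auxiliary $\qv{s}$ must be of bounded size, yet the answers at the ``retained'' planted positions have to be available in order to express the transcript statistic of $H_{i-1},H_i$ as a function of the 4-tuple $(j,\vec{y}_{j-1},y_j,\qv{s})$ on which the lemma operates. The resolution is that those other answers, having been measured, are classical strings of total length at most $(k-1)m$ that can be freely appended to the $\ell$-qubit final state of $\cB$; this only changes the quantitative bound by a constant factor and is absorbed into the choice of $N$.
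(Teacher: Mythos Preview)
Your proposal has a genuine gap in how the Plug-In Lemma is invoked, and relatedly in how the memoryless solver $\cB'$ is defined.

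The Plug-In Lemma says that for a random index $j$ in a sequence $\vec{y}$ and an $\ell'$-qubit $\qv{s}$ depending on $\vec{y}$, the tuple $(j,\vec{y}_{j-1},y_j,\qv{s})$ is close to $(j,\vec{y}_{j-1},y',\qv{s})$ where $y'$ is a \emph{fresh conditional} sample given the prefix $\vec{y}_{j-1}$. In your setting (with $\vec{Y}$ the extended transcript of the shared run), a ``fresh conditional'' sample at position $j_i$ means: apply $\widehat{B}$ to a fresh marginal query \emph{using the shared-run state $\ket{s_{j_i-1}}$}. It does \emph{not} mean running an independent copy of $\cB$ from its initial state. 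Your hybrid step $H_{i-1}\to H_i$, however, replaces the shared-run answer by the output of such an independent fresh run; the lemma does not bridge that. There is a second, coupled problem: in your single shared sequence the later answers $y_{i+1},\dots,y_k$ depend on what happened at position $j_i$, so the transition $H_{i-1}\to H_i$ also changes them. You try to absorb those answers into $\qv{s}$, but in the lemma $\qv{s}$ is held fixed across the two sides --- it is computed from the original $\vec{y}$, not from the one with $y'$ substituted. Thus the lemma's right-hand side is a tuple in which the later answers are still the ``plant-at-$j_i$'' answers while the $j_i$-th entry has been resampled, which is neither $H_{i-1}$ nor $H_i$.

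The paper's proof avoids both issues with two coupled ideas you are missing. First, it structures the $kt$ queries into $k$ blocks of length $t$ with exactly one plant per block; when hybridizing over block $h$, the later plants live in blocks $>h$ and can be regenerated entirely from the $\ell$-qubit state $\ket{s_{h+1,1}}$ at the block boundary, which alone serves as the lemma's $\qv{s}$ --- no stuffing of other answers is needed. Second, and crucially, the memoryless solver $\cB'$ is \emph{not} defined via fresh runs of $\cB$; instead a sample of $\cB'$ hardwires the classical extended transcript $\hts$ of an all-dummy shared run (together with a permutation $\pi$), and on the $j$-th query applies $\widehat{B}$ to the hardwired pure state $\ket{s_{\pi^{-1}(j),\,i_{\pi^{-1}(j)}-1}}$. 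This definition is exactly what the Plug-In Lemma's ``fresh conditional sample'' produces, so the hybrids line up, and $(p,\eta)$-persistence of $\cB'$ follows directly from that of $\cB$ along the single shared interaction.
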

We note that our simulator is ``almost'' a black-box algorithm in $\cB$ in the sense that it takes the size of the state $1^\ell$ as input, but otherwise it only makes black-box queries to $\cB$. We also emphasize that the simulator does not depend at all on $p, \eta$ or any other property of $\cB$ (other than $\ell$).

\subsection{The Simulator $\Sim$}
\label{sec:memlesssim}

We start by describing the simulator that will be used to prove Theorem~\ref{thm:simmemless}. The simulator $\Sim$ simply ``floods'' the solver $\cB$ with queries from a fixed distribution, and plants the elements of $\vec{x}$ in random positions.

Specifically, $\Sim^{\cB,D}(1^\secp, 1^\ell, 1^{1/\delta}, \vec{x})$ works as follows. Let $t$ be such that $k\sqrt{\ell/2t} \le \delta$, i.e.\ $t = O(\ell (k/\delta)^2)$. The simulator is also going to generate a non-adaptive sequence of queries. We start by defining our ``flooding'' distribution.

\begin{definition}[Random Marginal]
	Let $D$ be a distribution over $X^k$, i.e.\ $k$-tuples over a domain $X$. Then the random marginal distribution $D_U$ over $X$ is a distribution obtained by sampling $(x_1, \ldots, x_k)$ according to $D$, sampling a random $i$ in $[k]$, and outputting $x_i$ as the final sample.
\end{definition}

The simulator starts by sampling the following values.
\begin{enumerate}
	\item A vector $\vec{z}$ of $k \cdot t$  samples $z_{j,i} \gets D_U$, where $j \in [k]$, $i \in [t]$.
	\item $k$ uniform samples $i_j \gets [t]$, where $j \in [k]$.
	\item A uniform permutation $\pi$ over $[k]$.
\end{enumerate}

It then generates a sequence of queries $\vec{z}^*$ by taking the vector $\vec{z}$ and, for all $j \in [k]$,  replacing $z_{j,i_j}$ with $x_{\pi(j)}$. Namely, thinking of $\vec{z}$ as containing $k$ sequences of queries of length $t$ each, we plug in a random element from $\vec{x}$ in a random location in each sequence.

The simulator then calls $\cB$ on the queries in $\vec{z}^*$ in order, to obtain a sequence of responses $\vec{y}$. Let $y_{j,i}$ be the $(j,i)$ element in this sequence. We define $y^*_j = y_{j, i_j}$. The simulator returns the transcript $((x_1, y^*_{\pi^{-1}(1)}), \ldots, (x_k, y^*_{\pi^{-1}(k)}))$. Namely, we output a transcript that pairs each $x_i$ with the response that $\cB$ produces when introduced to the query $z_{j,i_j}=x_{i}$, namely $\pi(j)=i$.

\subsection{Proving Theorem~\ref{thm:simmemless}}

We now turn to prove the theorem. We start by defining a hybrid distribution which is defined with respect to purifying executions of $\cB$. This will allow us to make claims about extended transcripts, and finally to redact to standard transcript and derive the proof of the theorem.

\paragraph{A Hybrid Distribution.}	%
To prove the theorem, we define the hybrid distribution $\cS_h$, defined for every $h\in\{0,1,\ldots, k\}$.
\begin{enumerate}
	\item Sample a uniform permutation $\pi$ over $[k]$. %
	\item For all $j \in [k]$, sample a random index $i_j \in [t]$. %
	\item Sample $\vec{x}$ from $D$. %
	\item Generate a sequence of queries $z_{j,i}$ for all $j\in [k]$, $i \in [t]$ as follows.
	\begin{enumerate}
		\item For all $j > h$, set $z_{j,i_j}=x_{\pi(j)}$.
		\item Otherwise sample $z_{j,i_j}$ from $D_U$.
	\end{enumerate}
	\item Generate the extended transcript $\hts$ of executing $\cB$ (in a purifying manner) on the entries $z_{j,i}$ in lexicographic order (i.e.\ starting with $(1,1), \ldots, (1,t)$ and concluding with $(k,1), \ldots, (k,t)$). We let $\hts_{j,i}$ denote the prefix of the transcript prior to making the $(j,i)$ query. We let $\ket{s_{j,i}}$ denote the solver state respective to $\hts_{j,i}$, as guaranteed by Proposition~\ref{prop:exttsstate}. Notice that $\ket{s_{1,1}}$ is the initial state $\stt_0$ of $\cB$ conditioned on $\hts_0 = \hy_0$.

	\item The output of the hybrid $\cS_h$ then consists the following values, for all $j \in [k]$:
	\begin{enumerate}
		\item The values $i_j, \pi(j)$.
		\item The quantum state in the beginning of the $j$-th run: $\ket{s_{j,1}}$.
		\item The quantum state right before the $i_j$-th query in the $j$-th sequence is made:  $\ket{s_{j, i_j}}$. 
		\item The value $x_{\pi(j)}$, which is the $i_j$-th query in the $j$-th sequence if $j > h$.  
		\item An answer $(y_{\pi(j)}, \hy_{\pi(j)})$ computed as follows. 
		\begin{itemize}
		\item If $j > h$ then set $(y_{\pi(j)}, \hy_{\pi(j)})= (y_{j, i_j}, \hy_{j, i_j})$ (i.e.\ the $(y,\hy)$-part of the $(j, i_j)$-th triple in $\hts$). 
		\item Otherwise generate $(y_{\pi(j)}, \hy_{\pi(j)})$ as $\ovb(1^\secp, 1^{t (j-1) + i_j}, x_{\pi(j)}, \ket{s_{j, i_j-1}})_{\mathrm{y, \hy}}$.
		\end{itemize}
	\end{enumerate}
\end{enumerate}

In what follows, we will prove that the distributions induced by the first and last hybrids are close in trace distance, as formalized below.

\begin{lemma}\label{lem:memlesshybridtotal}
	It holds that $\td(\cS_0, \cS_{k}) \le k\sqrt{\ell/(2t)}$.
\end{lemma}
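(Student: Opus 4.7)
The plan is a hybrid argument: show $\td(\cS_{h-1},\cS_h)\le \sqrt{\ell/(2t)}$ for every $h\in[k]$ and conclude by the triangle inequality. Each hybrid step will be cast as a single invocation of the Plug-In Lemma (Lemma~\ref{lem:plugin}), with the sequence length $t$ playing the role of the number of samples and the state register of $\cB$ providing the $\ell$ qubits of compressed side information.

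Fix $h$ and couple $\cS_{h-1}$ and $\cS_h$ so that they share everything outside of what happens at position $(h,i_h)$: the classical choices $\pi,\vec x,(i_j)_j$, the processing of sequences $1,\ldots,h-1$ (hence the state $\ket{s_{h,1}}$), the fresh $D_U$-samples used at the non-planted positions of sequence $h$, and all randomness used for sequences $h+1,\ldots,k$. Set $J=i_h\sim\mathrm{Unif}[t]$ and consider the canonical i.i.d.\ $D_U$ execution of sequence $h$ starting from $\ket{s_{h,1}}$: let $\vec Y=((y_i,y_i^{\mathrm{resp}}))_{i\in[t]}$ be the resulting sequence of (query, response) pairs, and let $\qv s=\ket{s_{h+1,1}}$ be the $\ell$-qubit post-sequence state, which by Proposition~\ref{prop:exttsstate} is a pure function of $\vec Y$ and of the shared history. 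Because $x_{\pi(h)}\sim D_U$ is independent of the fresh $D_U$-samples, the joint distribution of sequence-$h$ queries in $\cS_{h-1}$---identical to those in $\cS_h$ except that position $J$ equals $x_{\pi(h)}$---is also i.i.d.\ $D_U$, so $\cS_{h-1}$'s $j=h$ revealed pair $(x_{\pi(h)},y_J^{\mathrm{resp}})$ coincides with the real $Y_J$ of $\vec Y$. In $\cS_h$, the $j=h$ revealed pair $(x_{\pi(h)},y^{*\mathrm{resp}})$ is obtained by hypothetically applying $\ovb$ to the pure state $\ket{s_J}$ (determined by $\vec Y_{J-1}$) with the independent $D_U$-value $x_{\pi(h)}$ and with independent measurement randomness, so it is distributed exactly as a fresh draw $Y'\sim Y_J\mid \vec Y_{J-1}$, while $\qv s$ continues to be computed from the true $\vec Y$ (including the real, unrevealed $Y_J$).

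The Plug-In Lemma now directly yields $\sqrt{\ell/(2t)}$ trace distance between the joint distributions of $(J,\vec Y_{J-1},\text{pair},\qv s)$ in the two hybrids. The remainder of each hybrid's output---the shared classical values, the transcripts of sequences $j<h$, and the states $\ket{s_{j,1}},\ket{s_{j,i_j}}$ and responses for $j>h$---is obtained identically in both hybrids as a common CPTP map applied to $\qv s$ using the shared randomness for sequences $h+1,\ldots,k$. Monotonicity of trace distance under CPTP maps preserves the bound, giving $\td(\cS_{h-1},\cS_h)\le \sqrt{\ell/(2t)}$, and summing over $h$ by the triangle inequality completes the proof. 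The main subtlety is the identification of $\cS_h$'s hypothetical $\ovb$-invocation with a fresh sample from $Y_J\mid \vec Y_{J-1}$: this hinges on the pure-state form of $\ket{s_J}$ given by Proposition~\ref{prop:exttsstate} and on performing the hypothetical measurement with measurement randomness independent of the actual sequence-$h$ execution, which is what makes $y^{*\mathrm{resp}}$ independent of the real $Y_J$ conditioned on $\vec Y_{J-1}$.
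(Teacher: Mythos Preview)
Your approach is the same as the paper's: a hybrid argument over $h$, with each step reduced to a single invocation of the Plug-In Lemma on the $h$-th length-$t$ block, followed by reconstruction of the remaining output via a common post-processing map. Two points in your write-up need tightening, however.

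First, the entries of $\vec Y$ must be the \emph{extended} transcript triples $(z_{h,i},y_{h,i},\hy_{h,i})$, not just (query, response) pairs; this is exactly what the paper uses. Without the purifying values $\hy_{h,i}$, the pre-query state $\ket{s_{h,J}}$ is \emph{not} determined by $\vec Y_{J-1}$ (Proposition~\ref{prop:exttsstate} requires the extended transcript), and then your identification of the hypothetical $\ovb$-call in $\cS_h$ with a fresh draw $Y'\sim Y_J\mid\vec Y_{J-1}$ fails: conditioned only on the non-extended prefix, the hypothetical answer and $\qv{s}$ are both correlated through the unrecorded $\hy$ values, which is precisely what the RHS of the Plug-In Lemma forbids.

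Second, you cannot place the full vector $\vec x$ among the shared coupling variables. The value $x_{\pi(h)}$ must remain random (marginally $D_U$) for the Plug-In setup to apply, since it is the query component of both $Y_J$ (in $\cS_{h-1}$) and $Y'$ (in $\cS_h$). Consequently the remaining coordinates $x_{\pi(j)}$ for $j\neq h$ have to be resampled \emph{conditionally on $x_{\pi(h)}$} as part of the common CPTP reconstruction, exactly as the paper does in its reconstruction step. With these two adjustments your argument coincides with the paper's proof.
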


Before proving Lemma~\ref{lem:memlesshybridtotal}, we argue that it implies the validity of Theorem~\ref{thm:simmemless}. Indeed, we observe that the output of the simulator $\Sim$ can be extracted from $\cS_0$ by simply outputting all of the pairs $((x_1, y_1), \ldots, (x_k, y_k))$.
Applying the same extraction procedure on the last hybrid $\cS_k$ will lead to a sequence $((x_1, y_1), \ldots, (x_k, y_k))$ in which $y_{\pi(j)} = B(1^\secp, 1^{t (j-1) + i_j}, x_{\pi(j)}, \ket{s_{j, i_j-1}})_\mathrm{y}$. However, in the hybrid $\cS_k$, the transcript $\hts$, and therefore all states $\ket{s_{j,i}}$, are generated independently of $\vec{x}$. Therefore, for every values of $\pi, \hts$ one could define a memoryless adversary $\cB' = (B'_{\pi, \hts}, \emptyset)$, defined by 
\begin{align}
	B'_{\pi, \hts}(1^\secp, 1^j, x, \emptyset) = B(1^\secp, 1^{t (j'-1) + i_{j'}}, x, \ket{s_{j', i_{j'}-1}})_\mathrm{y}~,
	\end{align} 
with $j' = \pi^{-1}(j)$. Note that the sequence of states is hard-wired into $B'$ and it does not require to propagate a state throughout the execution.  %

We therefore indeed have that the solver $\cB'$ is a distribution over memoryless solvers indicated by sampling $\pi, \hts$ from their respective distributions and executing $B'_{\pi, \hts}$. Since $\cB$ is $(p,\eta)$-persistent, we have that with probability $1-\eta$ over $\hts$, all invocations of $B(1^\secp, 1^{t (j'-1) + i_{j'}}, x, \ket{s_{j', i_{j'}-1}})$ have value $p \pm \eta$, which would imply that $(B'_{\pi, \hts}, \emptyset)$ is $(p,\eta)$-persistent. Therefore, the distribution $\cB'$ is also, by definition, $(p,\eta)$-persistent.

\medskip\noindent
The proof of Lemma~\ref{lem:memlesshybridtotal} will follow from a standard hybrid argument, given by the following lemma.
\begin{lemma}\label{lem:memlesshybrid}
	For all $h \in \{0,1, \ldots, k-1\}$ it holds that
	\begin{align} 
		\td(\cS_h, \cS_{h+1}) \le \sqrt{\ell/(2t)}~.
	\end{align}
\end{lemma}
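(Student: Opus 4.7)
The plan is to couple the two hybrids $\cS_h$ and $\cS_{h+1}$, isolate their sole distributional difference at the position $(h+1,i_{h+1})$, and invoke the Plug-In Lemma (Lemma~\ref{lem:plugin}) to control the resulting trace distance. First I would couple the hybrids so that they share the sampling of $\pi$, $\{i_j\}_j$, $\vec{x}$, and every dummy query $z_{j,i}$ at positions $(j,i)\neq(h+1,i_{h+1})$. The hybrids differ only in the joint distribution of $x_{\pi(h+1)}$ and $z_{h+1,i_{h+1}}$: in $\cS_h$ they coincide, while in $\cS_{h+1}$ the value $z_{h+1,i_{h+1}}$ is drawn from $D_U$ independently of $x_{\pi(h+1)}$. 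Because $\pi$ is uniform, $x_{\pi(h+1)}$ is marginally distributed as $D_U$, so the marginal sequence $\vec{Y}:=(z_{h+1,1},\ldots,z_{h+1,t})$ is i.i.d.\ $D_U$ in both hybrids.

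Next, I would set up the Plug-In Lemma with $\vec{Y}$ in the role of the $t$-tuple of classical variables, $i_{h+1}$ in the role of the uniformly random index $j$, and $\qv{s}:=\ket{s_{h+2,1}}$---the $\ell$-qubit solver state at the end of sequence $h+1$, which is pure conditional on the extended transcript by Proposition~\ref{prop:exttsstate}---in the role of the $\ell$-qubit random variable with arbitrary dependence on $\vec{Y}$. The lemma then yields
\[
\td\bigl((i_{h+1},\vec{Y}_{i_{h+1}-1},Y_{i_{h+1}},\qv{s}),(i_{h+1},\vec{Y}_{i_{h+1}-1},y',\qv{s})\bigr)\le \sqrt{\ell/(2t)},
\]
and the coupling identifies the left distribution with $\cS_h$'s Plug-In tuple (where $x_{\pi(h+1)}=Y_{i_{h+1}}$) and the right with $\cS_{h+1}$'s (where $x_{\pi(h+1)}=y'$ is fresh).

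Third, I would exhibit a common post-processing map $f$ that, given such a tuple together with independent randomness (including the shared classical data and fresh dummy queries for sequences $j>h+1$), reproduces the full hybrid output. Concretely, the states $\ket{s_{j,1}},\ket{s_{j,i_j}}$ for $j\le h+1$ are recomputed from the initial state and the prefix of queries (which includes $\vec{Y}_{i_{h+1}-1}$); the states $\ket{s_{j,1}},\ket{s_{j,i_j}}$ and transcript measurements $(y_{\pi(j)},\hy_{\pi(j)})$ for $j>h+1$ are produced by running the purified solver from $\qv{s}$ against the identically-distributed subsequent queries; and $(y_{\pi(h+1)},\hy_{\pi(h+1)})$ is obtained by applying $\ovb$ to a fresh copy of $\ket{s_{h+1,i_{h+1}}}$ with query $x_{\pi(h+1)}$. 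This post-processing matches $\cS_{h+1}$'s definition verbatim and has, conditionally on $\ket{s_{h+1,i_{h+1}}}$ and $x_{\pi(h+1)}$, the same distribution as $\cS_h$'s transcript measurement. Post-processing invariance of trace distance then transfers the Plug-In bound to $\td(\cS_h,\cS_{h+1})\le\sqrt{\ell/(2t)}$.

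The main obstacle will be arguing that $f$ faithfully reproduces the joint distribution in $\cS_h$ of the output $(y_{\pi(h+1)},\hy_{\pi(h+1)})$ with the downstream quantities derived from $\qv{s}$: in $\cS_h$ these arise from a single transcript application of $\ovb$ and are therefore correlated via the post-measurement state, while $f$ produces them from independent applications. Reconciling this will rely on the purified picture of Definition~\ref{def:solversext} together with the principle of deferred measurement, which allows the internal measurement at $(h+1,i_{h+1})$ to be absorbed into the randomness that the mixed state $\qv{s}$ marginalizes over, so that both hybrids' observable joints coincide with the post-processed form. If pushing this through in full generality proves delicate, the same structure still applies after augmenting $\qv{s}$ to include the classical $(y,\hy)$ register, at the cost of replacing $\ell$ with $\ell+|y|+|\hy|$ in the Plug-In bound and correspondingly adjusting the polynomial $t$ chosen in Theorem~\ref{thm:simmemless}.
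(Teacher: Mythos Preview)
Your overall architecture---condition on the extended transcript through the first $h$ sequences, invoke the Plug-In Lemma on sequence $h{+}1$ with $\qv{s}$ the $\ell$-qubit solver state at the end of that sequence, then post-process to rebuild the full hybrid output---is exactly the paper's strategy. The difference, and the source of the obstacle you correctly flag, is your choice of the classical variables $Y_i$. You take $Y_i=z_{h+1,i}$ (the query alone); the paper takes $Y_i=(z_{h+1,i},y_{h+1,i},\hy_{h+1,i})$, the full extended-transcript triple at position $(h{+}1,i)$.

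This single change dissolves the obstacle cleanly. With triples as the $Y_i$'s, the prefix $\vec{Y}_{i_{h+1}-1}$ (together with the conditioned $\hts_{h+1,1}$) already \emph{determines} the pure state $\ket{s_{h+1,i_{h+1}}}$ via Proposition~\ref{prop:exttsstate}, and the distinguished coordinate $Y_{i_{h+1}}$ already \emph{is} the output triple $(x_{\pi(h+1)},y_{\pi(h+1)},\hy_{\pi(h+1)})$ in $\cS_h$. On the $\cS_{h+1}$ side, a fresh sample $y'\sim Y_j\mid\vec{Y}_{j-1}$ is precisely a fresh $D_U$ query fed through $\ovb$ on the now-determined state $\ket{s_{h+1,i_{h+1}}}$, which is exactly how $\cS_{h+1}$ generates its output triple. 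No post-processing step must manufacture a correlated measurement, and no deferred-measurement argument is needed. Crucially, the Plug-In Lemma places no constraint on the alphabet of the $Y_i$'s; only $\qv{s}$ must be $\ell$ qubits, so the unbounded length of $\hy$ costs nothing when it sits inside $Y_i$.

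Your variant, by contrast, leaves two holes that neither of your proposed patches closes. First, your post-processing claims to ``recompute'' the pure state $\ket{s_{h+1,i_{h+1}}}$ (which is part of the hybrid's output) from the initial state and the \emph{query} prefix; but by Proposition~\ref{prop:exttsstate} the pure intermediate states are fixed only by the full extended transcript, so rerunning the solver on the queries introduces fresh $(y,\hy)$ outcomes and yields a state that is neither the hybrid's $\ket{s_{h+1,i_{h+1}}}$ nor consistent with the given $\qv{s}$. Second, your fallback of enlarging $\qv{s}$ with the classical $(y,\hy)$ register fails because, as noted after Definition~\ref{def:solversext}, the purifying values $\hy$ may be arbitrarily long, so the resulting bound $\sqrt{(\ell+|y|+|\hy|)/(2t)}$ is not controlled by any polynomial in the solver's parameters. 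Moving the $(y,\hy)$ data into the $Y_i$'s rather than into $\qv{s}$ is the trick that makes the argument go through.
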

\begin{proof}
	We will show that the lemma holds true even when conditioning both $\cS_h, \cS_{h+1}$ on any value for $\hts_{h,1}$ (the $(h \cdot t)$-prefix of the transcript $\hts$). 
	
	We will show that the lemma follows from the following claim.
	\begin{clm}\label{claim:memlesszoomin}
		Conditioning on any value of $\hts_{h,1}$ for both $\cS_h, \cS_{h+1}$, the joint distribution of: 
		\begin{align}
			(i_h, \hts_{h, i_h}, \ket{s_{h+1,1}}, (x_{\pi(h)}, y_{\pi(h)},\hy_{\pi(h)}))
		\end{align}
		is within trace distance $\sqrt{\ell/(2t)}$ between $\cS_h, \cS_{h+1}$.
	\end{clm}
	
	Given Claim~\ref{claim:memlesszoomin}, Lemma~\ref{lem:memlesshybrid} follows since all other elements of the two distributions $\cS_h, \cS_{h+1}$ can be sampled given $\hts_{h,1}$ and $(i_h, \hts_{h, i_h}, \ket{s_{h+1,1}}, (x_{\pi(h)}, y_{\pi(h)}, \hy_{\pi(h)}))$, as follows.
	\begin{enumerate}
		\item Sample the permutation $\pi$ and the query vector $\vec{x}$ conditioned on the value $x_{\pi(h)}$.
		\item For very $j\in[k]\setminus \{h\}$, sample $i_j$ uniformly in $[t]$.
		\item For all $j < h$, the transcript prefix $\hts_{h,1}$ determines all states $\ket{s_{j,i}}$ (for all $i\in[t]$), which in turn, together with $\vec{x}$, determines the distribution of $y_{\pi(j)}, \hy_{\pi(j)}$ for all $j < h$ (since this distribution is specified by applying the solver $B$ on $x_{\pi(j)}$ with quantum state that is determined by the $h$-prefix).
		\item For all $j > h$ the outputs of both $\cS_h, \cS_{h+1}$ are determined as the outcomes of an identical quantum process applied to the state $\ket{s_{h+1, 1}}$ (the initial state of the $(h+1)$-th sequence), considering that $\pi$ and $\vec{x}$ have been determined.
	\end{enumerate}

	We now proceed to prove Claim~\ref{claim:memlesszoomin}, and focus on the distribution of $(i_h, \hts_{h, i_h}, \ket{s_{h+1,1}}, (x_{\pi(h)}, y_{\pi(h)}, \hy_{\pi(h)}))$ in the two hybrids, given that $\hts_{h,1}$ is fixed. The claim follows straightforwardly from our information theoretic Plug-In Lemma (Lemma~\ref{lem:plugin}), where the classical values $y_i$ in the lemma corresponds to pairs $(z_{h,i}, y_{h,i}, \hy_{h,i})$ generated in the $h$'th round in the hybrid experiment.  Note that since we fixed $\hts_{h,1}$, the distribution over these classical values is also fixed, and indeed the value $\qv{s} = \ket{s_{h+1, 1}}$ depends on this sequence of $t$ values. The triple $(x_{\pi(h)}, y_{\pi(h)}, \hy_{\pi(h)})$ differs between $\cS_h$ and $\cS_{h+1}$ since in the former it is exactly equal to the $i_{h+1}$ element in the $h$-th sequence, and in the latter it is sampled from the marginal distribution of this element. We can therefore apply the plug-in lemma directly to obtain the $\sqrt{\ell/(2t)}$ bound on the trace distance as Claim~\ref{claim:memlesszoomin} requires. This completes the proof of the claim and thus also of the lemma.
\end{proof}

\section{Memoryless Solvers To Stateless Solvers}
\label{sec:memlesstostateless}

\begin{theorem}\label{thm:simstateless}
	There exists a polynomial-time oracle-aided simulator $\simsl$ with the following properties.
	Let $\cB$ be a $(p, \eta)$-persistent memoryless solver for a falsifiable non-interactive assumption $P$ and let $\{D_\secp\}_\secp$ be an efficiently samplable distribution ensemble over $k$-tuples of $P$ instances. Let $\delta$ be some parameter. 
	
	Then there exists a $(p, \eta)$-persistent (but possibly inefficient) stateless solver $\cB''=\cB''_{\delta} = (B'', \emptyset)$ for $P$ such that the following holds. Consider sampling $\vec{x}\gets D_\secp$, and let $\cB''(1^\secp, \vec{x})$ be the transcript of the process that feeds the elements of $\vec{x}$ into $\cB''$ (i.e.\ executes $B''(1^\secp, x_i, \emptyset)$ for all $x_i$).
	Then $\simsl^{\cB}(1^\secp, 1^{1/\delta}, \vec{x})$ makes non-adaptive black-box access to $\cB$ and produces a distribution that is within at most $\delta$ statistical distance from $\cB''(1^\secp, \vec{x})$.
\end{theorem}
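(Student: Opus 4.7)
Following the high-level idea sketched in the introduction, the plan is to define the stateless solver $\cB''$ by having it pick a uniformly random time slot and then run the memoryless solver $\cB$ at that slot, and to simulate this behavior by embedding the real $x_i$'s at random positions inside a long sequence of dummy queries to $\cB$. Concretely, fix $T=\lceil k^2/\delta\rceil$ and define $\cB''=(B'',\emptyset)$ where $B''(1^\secp,x,\emptyset)$ samples $j\gets[T]$ uniformly and outputs $B(1^\secp,1^j,x,\emptyset)$. To see that $\cB''$ is $(p,\eta)$-persistent, recall from Remark~\ref{rmk:persistentmemless} that for the memoryless $\cB$ each $v_j:=\val_P[j,\cB]$ is a fixed number with $|v_j-p|\leq\eta$; hence the one-shot value of $\cB''$ equals $\tfrac{1}{T}\sum_j v_j\in[p-\eta,p+\eta]$, and stateless solvers trivially have persistent value.

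\textbf{The simulator.} On input $(1^\secp,1^{1/\delta},\vec{x})$, the simulator $\simsl^\cB$ samples a uniformly random injection $\sigma:[k]\to[T]$, forms a length-$T$ query vector $\vec{z}$ by setting $z_{\sigma(i)}=x_i$ and filling the remaining $T-k$ positions with a default query (say $0^n$), invokes $\cB$ in order on $\vec{z}$ to obtain responses $(y_1,\dots,y_T)$, and returns the transcript $((x_1,y_{\sigma(1)}),\dots,(x_k,y_{\sigma(k)}))$. This is non-adaptive, uses $\cB$ as a black box, and runs in time $\poly(\secp,k,1/\delta)$.

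\textbf{Analysis and main obstacle.} To bound $\td(\simsl^\cB(1^\secp,1^{1/\delta},\vec{x}),\cB''(1^\secp,\vec{x}))$ I will exhibit a coupling via a birthday-style argument combined with the memorylessness of $\cB$. Since $\cB$ has empty state, each invocation $B(1^\secp,1^j,z_j,\emptyset)$ in the simulator is independent of the others and depends only on the pair $(j,z_j)$; in particular, the responses at the $k$ ``real'' positions $\sigma(i)$ are distributed as $k$ independent invocations of $B(1^\secp,1^{\sigma(i)},x_i,\emptyset)$. Sample $(j_1,\dots,j_k)$ iid uniform in $[T]$; on the no-collision event, which has probability at least $1-\binom{k}{2}/T\geq 1-\delta$, set $\sigma(i)=j_i$ (obtaining exactly a uniform injection) and couple the $k$ independent invocations identically across the two experiments, so that the simulator's transcript equals $\cB''(1^\secp,\vec{x})$ on that event. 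Off the no-collision event, take $\sigma$ and the invocations fresh and independent. This yields $\td\leq\binom{k}{2}/T\leq\delta$, as required. The only delicate point is observing that a uniform injection can be coupled to iid samples conditioned on no collision (standard), and that memorylessness is what lets us simply ignore the $T-k$ dummy responses and treat the $k$ real ones as independent fresh runs; no quantum information-theoretic machinery beyond this (in particular, no invocation of the Plug-In Lemma) is needed at this step.
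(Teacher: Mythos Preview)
Your proposal is correct and follows essentially the same approach as the paper: define $\cB''$ to sample a uniform time slot and invoke the memoryless $\cB$ there, simulate by embedding the $k$ real queries at distinct random positions among $T$ dummies, and bound the gap via a birthday collision argument comparing sampling with versus without replacement. Your choice $T=\lceil k^2/\delta\rceil$ is the correct one; the paper's own proof writes $t=k^2$, but its concluding bound ``probability at least $1-k^2/t=1-\delta$'' indicates this is a typo for $t=k^2/\delta$.
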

\begin{proof}
	The simulator $\simsl^{\cB}$ runs as follows. Given $\vec{x}$ as input, it generates a query vector $\vec{x}'$ of length $t=k^2$ as follows. It samples, without repetitions, $k$ indices $i_1, \ldots, i_k$ and sets $x'_j = x_{i_j}$. All other values of $x'$ are set to $0$ (or some other fixed value).
	
	After making the queries in $\vec{z}$ to $\cB$ and receiving an output vector $\vec{y}'$, the simulator sets $y_j = y_{i_j}$ returns $((x_1, y_1), \ldots, (x_k, y_k))$.
	
	Let us now define the stateless adversary $\cB''$. On input $x$, $B''(1^\secp, x)$ samples $j \gets [t]$ uniformly, and outputs $y = B(1^\secp, 1^j, x, \emptyset)_y$. The solver $\cB''$ is also $(p, \eta)$-persistent; indeed, its value is the average of values, which are all $\eta$-close to $p$. (Recall Remark~\ref{rmk:persistentmemless} about persistent values for stateless and memoryless solvers.)
	
	To bound the statistical distance between $\simsl^{\cB}(1^\secp, 1^{1/\delta}, \vec{x})$ and $\cB''(1^\secp, \vec{x})$, we consider the case where in the course of the execution of $\cB''(1^\secp, \vec{x})$, all $j$'s that are sampled are distinct. This happens with probability at least $1-k^2/t=1-\delta$. Conditioned on this event, $\cB''(1^\secp, \vec{x})$ is identically distributed as $\simsl^{\cB}(1^\secp, 1^{1/\delta}, \vec{x})$. It follows that in general the statistical distance is bounded by $\delta$.
\end{proof}

We conclude with a corollary that combines Theorem~\ref{thm:simmemless} and Theorem~\ref{thm:simstateless}. %
\begin{corollary}\label{cor:statefulstateless}
	There exists a polynomial-time simulator $\simall$ with the following properties.
		Let $\cB$ be a $(p, \eta)$-persistent $\ell$-stateful solver for a falsifiable non-interactive assumption $P$ and let $\{D_\secp\}_\secp$ be an efficiently samplable distribution ensemble over $k$-tuples of $P$ instances. Finally, let $\delta$ be some parameter. 
	
		Then there exists a $(p, \eta)$-persistent (but possibly inefficient) distribution over stateless solvers $\cB''=\cB''_{\ell, D, \delta} = (B'', \emptyset)$ for $P$. Consider sampling $\vec{x}^*\gets D_\secp$, and let $\cB''(1^\secp, \vec{x}^*)$ be the transcript of the process that feeds the elements of $\vec{x}^*$ into $\cB''$ (i.e.\ executes $B''(1^\secp, x^*_i, \emptyset)$ for all $x^*_i$). Then $\simall^{\cB,D}(1^\secp, 1^\ell, 1^{1/\delta}, \vec{x}^*)$ makes non-adaptive black-box access to $\cB$ and produces a distribution that is within at most $\delta$ statistical distance from $\cB''(1^\secp, \vec{x}^*)$.
\end{corollary}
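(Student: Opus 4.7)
The plan is to prove Corollary~\ref{cor:statefulstateless} by composing the two simulators $\Sim$ from Theorem~\ref{thm:simmemless} and $\simsl$ from Theorem~\ref{thm:simstateless}. At a high level, $\simall$ will use $\simsl$'s query-generation logic to expand the input $\vec{x}^*$ into a longer query vector $\vec{x}'$ of length $k^2$, then invoke $\Sim$ in place of the memoryless solver that $\simsl$ expects to query, and finally apply $\simsl$'s response-reordering step to produce the final transcript.

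Concretely, on input $\vec{x}^*$, $\simall$ samples the permutation and index vector that $\simsl$ would sample and constructs $\vec{x}'$ exactly as $\simsl$ does. Let $D'_\secp$ denote the efficiently samplable distribution over $k^2$-tuples obtained by sampling $\vec{x}^* \gets D_\secp$ and then applying this query-generation procedure. The simulator invokes $\Sim^{\cB, D'}(1^\secp, 1^\ell, 1^{2/\delta}, \vec{x}')$ to obtain a response vector $\vec{y}'$, then applies $\simsl$'s selection-and-permutation step on $\vec{y}'$. The stateless solver $\cB''$ being simulated is defined to be the one induced by applying $\simsl$'s construction to the distribution of memoryless solvers $\cB'$ produced by Theorem~\ref{thm:simmemless}; $(p,\eta)$-persistence of $\cB''$ then follows from Theorem~\ref{thm:simstateless} given $(p,\eta)$-persistence of $\cB'$ from Theorem~\ref{thm:simmemless}.

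The error bound is a two-step hybrid argument. Consider the intermediate experiment in which the call to $\Sim^{\cB, D'}$ inside $\simall$ is replaced by direct (oracle) access to the memoryless solver $\cB'$; by Theorem~\ref{thm:simmemless} applied with accuracy parameter $\delta/2$, this changes the output distribution by at most $\delta/2$ in statistical distance. The intermediate experiment is by construction identical to $\simsl^{\cB'}(1^\secp, 1^{2/\delta}, \vec{x}^*)$, which by Theorem~\ref{thm:simstateless} is within $\delta/2$ of $\cB''(1^\secp, \vec{x}^*)$; a triangle inequality then yields the desired $\delta$ bound.

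The main subtlety to watch is preserving non-adaptivity through the composition, but this is immediate: $\simsl$ generates all of $\vec{x}'$ from $\vec{x}^*$ and its own randomness before making any oracle call, and $\Sim$ in turn makes only non-adaptive queries to $\cB$, so the overall composition remains non-adaptive as required by both underlying theorems. The only other bookkeeping is to note that $D'$ is efficiently samplable because $D$ is and $\simsl$'s query-generation is efficient, and that $\simall$ runs in polynomial time because $\Sim$ makes $\poly(\ell, k, 1/\delta)$ queries to $\cB$ and $\simsl$'s wrapping operations are themselves polynomial.
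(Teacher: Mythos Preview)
Your proposal is correct and follows essentially the same approach as the paper: both compose $\simsl$ (outer) with $\Sim$ (inner), define the auxiliary distribution $D'$ as the distribution of $\simsl$'s query vectors, split the error as $\delta/2+\delta/2$ via a triangle inequality, and carry persistence through the two theorems. The paper's write-up is slightly more explicit about applying Theorem~\ref{thm:simstateless} pointwise to each memoryless solver in the support of the distribution $\cB'$ to induce the distribution $\cB''$, but your phrasing captures the same construction.
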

\begin{proof}
The simulator $\simall^{\cB,D}(1^\secp, 1^\ell, 1^{1/\delta}, \vec{x}^*)$ runs as follows. Set $\delta'=\delta/2$.

\begin{enumerate}
	
	\item Define an efficiently samplable distribution $D'$ over sequences of $P$ instances as follows. Consider the non-adaptive black-box simulator $\simsl$ from Theorem~\ref{thm:simstateless}. Start by sampling $\vec{x} \gets D$. Run $\simsl^{(\cdot)}(1^\secp, 1^{1/\delta'}, \vec{x})$ up until the point where it generates its sequence of oracle queries $\vec{x}'$ (note that to this end there is no need to actually have any access to the solver itself). Let $\vec{x}'$ be the sample of $D'$.

	\item\label{i:startsimsl} Start an execution $\simsl^{(\cdot)}(1^\secp, 1^{1/\delta'}, \vec{x}^*)$, until the point where the sequence of queries $\vec{x}'^*$ is generated. 
	
	\item Consider the non-adaptive black-box simulator $\Sim$ from Theorem~\ref{thm:simmemless}. Execute the simulator $\Sim^{\cB, D'}(1^\secp, 1^\ell, 1^{1/\delta'}, \vec{x}'^*)$ to obtain a transcript $\ts$.
	
	\item Resume the execution of $\simsl$ from step~\ref{i:startsimsl}, plugging in the responses from $\ts$ as the solver outcome. Produce the output of $\simsl$ as the output of $\simall$.	
\end{enumerate}

To analyze, we first note that by definition $\vec{x}'^*$ is sampled from the distribution $D'$. Theorem~\ref{thm:simmemless} implies that there exists a distribution $\cB'$ over memoryless adversaries such that the transcript $\ts$ is withing $\delta'$ statistical distance from having been produced by $\cB'(1^\secp, \vec{x}'^*)$. It therefore follows that the output of $\simall^{\cB,D}(1^\secp, 1^\ell, 1^{1/\delta}, \vec{x}^*)$ is within statistical distance $\delta'$ from $\simsl^{\cB'}(1^\secp, 1^{1/\delta'}, \vec{x}^*)$. We can now apply Theorem~\ref{thm:simstateless} to deduce that for each memoryless solver $\cB'_0$ in the support of $\cB'$, the latter is within $\delta'$ statistical distance from some $\cB''_0(1^\secp, \vec{x}')$ where $\cB''_0$ is stateless. This therefore induces a distribution over stateless solvers. Applying the union bound we get that $\simall^{\cB,D}(1^\secp, 1^\ell, 1^{1/\delta}, \vec{x}^*)$ is within statistical distance at most $2\delta'=\delta$ from $\cB''(1^\secp, \vec{x}')$ as required.

By definition, with probability $1-\eta$ over the sampling of the memoryless solver $\cB'_0$ from the distribution, $\cB'_0$ itself is $(p,\eta)$-persistent, this property carries over to $\cB''_0$. Therefore, the distribution $\cB''$ is by definition $(p,\eta)$-persistent.
\end{proof}

\section{Classical Non-Adaptive Reductions and Quantum Solvers}
\label{sec:classicalna}

In this section, we show that a wide class of classical reductions can be translated to the quantum setting. Specifically we start from any non-adaptive black-box reductions from classically solving $P$ with a verifiably-polynomial image (Definition~\ref{def:ver_poly_im}), to classically solving $Q$. We transform it into a quantum reduction from quantumly solving $P$ to quantumly solving $Q$.

\begin{theorem}\label{thm:non_adaptive}
Assume there exists a classical non-adaptive black-box reduction from solving a non-interactive assumption $Q$ to solving a non-interactive assumption $P$ with a verifiably-polynomial image. Then there exists a quantum reduction from solving $Q$ to quantumly solving $P$. This reduction is durable if the original classical reduction has positive advantage.
\end{theorem}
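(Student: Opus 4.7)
The plan is to compose the three main technical pieces of the paper: the Persistence Theorem (Theorem~\ref{thm:state_restoration}), the corollary that converts a persistent stateful solver into (a simulation of) a distribution over stateless solvers (Corollary~\ref{cor:statefulstateless}), and the fact that the classical non-adaptive reduction works against such distributions of stateless solvers (Remark~\ref{rem:clas_red_wlog}).

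First, I would start from an arbitrary one-shot quantum $P$-solver $\cB_P=(B_P,\stt_0)$ with one-shot advantage $\varepsilon$ and running time $T$. Because $P$ has a verifiably-polynomial image, I can apply Theorem~\ref{thm:state_restoration} with an inverse-polynomial parameter $\eta=\poly(\varepsilon,T^{-1},\secp^{-1})$ to obtain algorithms $S,R$; sampling $(\stt_0^*,p^*)\gets S_{B_P}(\stt_0)$ yields a $(p^*,\eta)$-persistent stateful solver $\cR^*=(R_{B_P},\stt_0^*)$ whose state size $\ell$ is polynomial. Since $\bbE[p^*]=p$, and the classical reduction may be assumed w.l.o.g.\ to have positive advantage (Remark~\ref{rem:clas_red_wlog}), a Markov-type argument yields a noticeable probability $\Omega(\varepsilon)$ that $|p^*-c_P|\geq\varepsilon/2$; in this event $\cR^*$ is a persistent $P$-solver with persistent advantage $\varepsilon/2-\eta$.

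Next, I would construct the quantum reduction $\cR_Q$ on input $x_Q$ as follows. Since the classical reduction $\cR_{\mathrm{cl}}$ is non-adaptive, it first produces a tuple $(x_1,\dots,x_k)$ of $P$-queries from some efficiently samplable distribution $D$ determined by $x_Q$ and $\cR_{\mathrm{cl}}$'s internal randomness. I then invoke the simulator $\simall^{\cR^*,D}(1^\secp,1^\ell,1^{1/\delta},(x_1,\dots,x_k))$ of Corollary~\ref{cor:statefulstateless}, with $\delta=\poly(\varepsilon,\secp^{-1})$, feeding the resulting answers $(y_1,\dots,y_k)$ back to $\cR_{\mathrm{cl}}$ to obtain a candidate $Q$-solution. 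By the corollary, the simulated transcript is within statistical distance $\delta$ of the transcript of a $(p^*,\eta)$-persistent distribution $\cB''$ over stateless $P$-solvers with advantage at least $\varepsilon/2-\eta$; by Remark~\ref{rem:clas_red_wlog} the classical reduction $\cR_{\mathrm{cl}}$ succeeds with advantage $\poly(\varepsilon,\secp^{-1})$ against any such distribution. Combining this with the $\delta$ simulation error and the $\Omega(\varepsilon)$ probability from the Markov step yields a $Q$-solving advantage for $\cR_Q$ of $\poly(\varepsilon,T^{-1},\secp^{-1})$, as required.

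Finally, for durability, the crucial point is that the (expensive) persistence step is executed once, at the beginning, producing the stateful solver $\cR^*$ that survives arbitrarily many solver interactions up to the $\eta$ drift. The quantum $Q$-solver produced by the reduction stores $\cR^*$ as part of its state and, on each subsequent $Q$-instance, simply resamples fresh classical randomness for $\cR_{\mathrm{cl}}$ and re-invokes $\simall$; by persistence of $\cR^*$ and the positivity of $\cR_{\mathrm{cl}}$'s advantage, each such run preserves the same noticeable $Q$-advantage, giving persistent advantage for $\cB_Q$ in the sense of Definition~\ref{def:reduction}. The main obstacle I expect is bookkeeping the four error sources (the $\varepsilon$-$p^*$ Markov gap, the $\eta$ persistence drift, the $\delta$ simulation error, and the classical reduction's own polynomial loss) so that the final advantage is uniformly $\poly(\varepsilon,T^{-1},\secp^{-1})$; setting $\eta,\delta\ll \varepsilon/\poly(\secp)$ and invoking a union bound should suffice.
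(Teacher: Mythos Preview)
Your proposal is correct and follows essentially the same three-step composition as the paper: apply the Persistence Theorem~\ref{thm:state_restoration}, then Corollary~\ref{cor:statefulstateless} to simulate a stateless-solver distribution, then plug the answers back into the classical reduction; for durability, carry the persistent solver's state across successive $Q$-instances.

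Two places where the paper's argument differs slightly from yours are worth flagging. First, the flooding distribution $D$ must be defined as the query distribution induced by sampling a \emph{fresh} random $Q$-instance $x'_Q\gets G_Q$ (not the actual challenge $x_Q$); this is what makes $D$ efficiently samplable independently of the challenge and guarantees that, when $x_Q$ is itself drawn from $G_Q$, the real query tuple $\vec{x}$ is genuinely a sample from $D$ so that Corollary~\ref{cor:statefulstateless} applies. Your phrase ``determined by $x_Q$'' is ambiguous on this point. Second, for the one-shot advantage the paper avoids conditioning on $p^*$ altogether and argues directly in expectation: the overall stateless-solver distribution $\cB''$ (averaging over $p^*$) has one-shot value within $2\eta$ of $\bbE[p^*]=p$, hence $P$-advantage at least $\varepsilon/2$, and the classical reduction then yields $Q$-advantage $\varepsilon'$ without any positive-advantage assumption. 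The Markov-on-$p^*$ step (and hence the positive-advantage hypothesis) is invoked only for durability. Your route via Markov works too, but it imports the positive-advantage assumption into the one-shot claim, which the theorem as stated does not require.
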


\newcommand{\smthns}{\mathsf{s}}

\begin{proof}
Let $\cR$ be a classical non-adaptive black-box reduction from solving a non-interactive assumption $Q=(G_Q, V_Q, c_Q)$ to solving a non-interactive assumption $P=(G_P, V_P, c_P)$. We present a quantum reduction $\cR'$ from solving $Q$ to quantumly solving $P$. We start by describing and analyzing $\cR'$ with a one-shot advantage, and then extend it to address durability in the case that $\cR$ has positive advantage. We assume w.l.o.g that $\cR$ never makes the same query twice to its oracle function (see Remark \ref{rem:clas_red_wlog}). 

Recalling Definition~\ref{def:reduction}, $\cR'$ takes as input $(1^\secp, 1^t, x_Q, \stt)$, where $x_Q \in \binset^{n_Q}$ is potentially an instance of $Q$, and its initial state is $\stt'_0 = (\stt_{0},B, 1^{1/\varepsilon},1^{T})$, where we are guaranteed that $\cB=(\stt_0, B)$ is a $P$ solver with advantage at least $\varepsilon$ that runs in time at most $T$. 

We let $\varepsilon'$ denote the advantage of $\cR$ in solving $Q$ when given access to an oracle that solves $P$ with advantage at least $\varepsilon/2$. We are guaranteed that $\varepsilon' = \poly(\varepsilon,\secp^{-1})$. We set $\delta = \varepsilon'/2$ and $\eta = \min\{\varepsilon/4, \varepsilon'/2 \}$.

We define a distribution $D$ over $(\binset^{n_P})^k$ as the distribution over the set of oracle queries produced by first sampling a uniform $r'_Q$ and using it to generate $x'_Q=G_Q(1^\secp, r'_Q)$, and finally executing $\cR(1^\secp,1^{4/\varepsilon}, x'_Q)$ to produce a $k$-tuple of $P$-instances.

Having all of these definitions in place, we can now introduce the execution of $\cR'(1^\secp, 1^0, x_Q, \stt'_0)$. Namely, we start by analyzing the one-shot execution of $\cR'$ (the case $t=0$).%
\begin{enumerate}
	\item\label{i:restore} Let $R,S$ be the state restoration algorithms with respect to $P$ as guaranteed by Theorem~\ref{thm:state_restoration}, with parameter $\eta$ as defined above. Set $(\stt_0^*,p^*) \gets S_B(\stt_0)$. Define $\cB_0 = \cR^* = (R_B, \stt^*_0)$ and recall that $\cB_0$ is $(p^*, \eta)$-persistent, and that $\Ex[p^*]=p$.
	
	\item\label{i:startr} Execute $\cR(1^\secp,1^{3/\varepsilon}, x_Q)$ to obtain the sequence of queries $\vec{x}$.
	
	\item Recall the simulator $\simall$ guaranteed by Corollary~\ref{cor:statefulstateless}. Execute $\simall^{\cB_0,D}(1^\secp, 1^\ell, 1^{1/\delta}, \vec{x})$ to obtain a transcript $\ts$.
	
	\item\label{i:extractresp} Extract the responses to $\vec{x}$ from $\ts$ and resume the execution $\cR$ from step~\ref{i:startr} with these responses. Once the execution of $\cR$ completes and a value $y_Q$ is output, output $y_Q$ as the output of $\cR'$. 
\end{enumerate}

To analyze the one-shot value and advantage of $\cR'$, we start by analyzing the performance of $\cR'$ conditioned on obtaining a fixed value $p^*$ in step~\ref{i:restore} of the execution. In this case $\cB_0$ is $(p^*,\eta)$-persistent, and we can invoke Corollary~\ref{cor:statefulstateless} to conclude that there exists a $(p^*,\eta)$-persistent distribution over stateless adversaries $\cB_{p^*}''$ s.t.\ the output of $\cR'$ is within statistical distance $\delta$ from the execution of $\cR^{\cB_{p^*}''}(1^\secp,1^{4/\varepsilon},
 x_Q)$.

In turn, the execution of $\cR^{\cB_{p^*}''}(1^\secp, 1^0, x_Q, \stt'_0)$ is equivalent to executing $\cR^{\cB''}(1^\secp,1^{4/\varepsilon},
 x_Q)$, where $\cB''$ is a distribution over stateless solvers defined as follows.  First sample $p^*$ from its designated distribution, then sample $\cB_{p^*}''$ from the $(p^*,\eta)$-persistent distribution of stateless solvers. Recall that with probability $1-\eta$ over the sampling of $\cB_{p^*}''$, it holds that the outcome is a (single) $(p^*,\eta)$-persistent stateless solver and therefore that $\abs{\val_P[0, \cB_{p^*}''\big] - p^*} \le \eta$. It follows that with probability at least $1-\eta$:
\begin{align*}
	\abs{\Ex[\val_P[0, \cB'']]-p} &= \abs{ \Ex\big[\val_P[0, \cB_{p^*}''\big] - p^*\big]}\\
	& \le \Ex\big[\abs{\val_P[0, \cB_{p^*}''\big] - p^*}\big]\\
	& \le \eta~.
\end{align*}
It follows that $\cB''$ has advantage at least $\varepsilon-2\eta \ge \varepsilon/2$ in solving $P$.
We have therefore that $\cR^{\cB''}$ has advantage at least $\varepsilon'$ in solving $Q$. Since the output of $\cR'$ is within $\delta = \varepsilon'/2$ statistical distance from $\cR^{\cB''}$, we conclude that $\cR'$ has advantage at least $\varepsilon'/2$. We therefore established the one-shot value of $\cR'$. 

It remains to extend the definition of $\cR'$ beyond $t=0$ in order to establish that it is durable when $\cR$ has positive advantage. The basic idea is to propagate the final state of $\cB_0$ at the end of step \ref{i:extractresp} as the initial state of the next execution, and use the persistence of $\cB_0$ in order to execute steps \ref{i:startr}-\ref{i:extractresp} anew for each input.

In order to formalize the above intuition, we require the following definitions. We define a ``shifted execution'' of a solver as follows. Letting $\cB = (B, \stt_0)$ be a solver. We define the solver $\cB_{+j} = (B_{+j}, \stt_0)$ via $B(1^\secp, 1^t, x, \stt) = B(1^\secp, 1^{t+j}, x, \stt)$. Namely, $\cB_{+j}$ simply executes $\cB$ but with a fixed offset in the $t$ input. A second notation that we require is for the maximal number of $\cB_0$ calls that are made in steps \ref{i:startr}-\ref{i:extractresp} of the execution above. We denote this value by $M$ and note that it is w.l.o.g a polynomial in $\secp,\varepsilon^{-1}$ that does not depend on the value of $x_Q$. %

Our durable reduction is therefore as follows: 
\begin{itemize}
	
	\item We extend the execution of $\cR'$ for $t=0$ defined above as follows. First, we ensure that steps \ref{i:startr}-\ref{i:extractresp} make \emph{exactly} $M$ queries to $\cB_0$, by inserting dummy queries if needed. Second, we specify the output state of the execution to be the output state of $\cB_0$ after the last call that has been made.
	
	\item For a value of $t > 0$ the execution of $\cR'(1^\secp, 1^t, x, \stt)$ is by executing steps \ref{i:startr}-\ref{i:extractresp} above (with the padding to $M$ queries), but using the shifted solver $\cB_{0+tM}$ instead of $\cB_0$. The output state is again the final output state of the solver $\cB_{0+tM}$.
\end{itemize}

\def \hcR {\widehat{\cR}}
Note that $\cB_0$ is $(p^*,\eta)$-persistent with respect to some purification $\hcB_0$. We can consider a corresponding purification $\hcR'$ of $\cR'$. We note that in an extended interaction $A_z^{\hcR'}$, letting $\varepsilon^* = |p^*-c_P|$, it holds that with probability $1-\eta$, for every $i$: $$\val_Q[i,A_z^{\hcR'}] \geq c_Q+\varepsilon' \enspace,$$
where $\varepsilon'\geq 0$ and if $\varepsilon^*-\eta \geq \varepsilon/4$, then $ \varepsilon' \geq \poly(\varepsilon^*-\eta,\secp^{-1})$.

Indeed, since $\cB_0$ is $(p^*,\eta)$-persistent, in the $i$-th underlying invocation of $\cB_{p^*}''$, its value as a $P$-solver is $\eta$-close to $p^*$, and hence its advantage in the corresponding invocation of $\cR(x_Q,1^\secp,1^{4/\varepsilon})$ is at least $\varepsilon^* - \eta$. If $\varepsilon^* - \eta \geq \varepsilon/4$, $\cR$ is guaranteed to have positive advantage $\poly(\varepsilon^*-\eta,\secp^{-1})$, in which case the corresponding value is $c_Q+\poly(\varepsilon^*-\eta,\secp^{-1})$.

It is left to argue that
$$
\bb{E}[\varepsilon'] \geq \poly(\varepsilon,\secp^{-1})\enspace.
$$
This follows from an averaging argument. With probability at least $\varepsilon/2$, $\varepsilon^* \geq \varepsilon/2$. In particular with probability at least $\varepsilon/2 -\eta \geq \varepsilon/4$ it holds that $\varepsilon^*- \eta \geq \varepsilon/4$, in which case $\varepsilon' \geq \poly(\varepsilon/4,\secp^{-1})$. 
\end{proof}

\section{An Impossibility Result for Search Assumptions}\label{sec:search}

Our result in Section \ref{sec:classicalna} transforms a classical non-adaptive reduction $\cR$ from solving $Q$ to classically solving $P$ into a reduction $\cR$ to {\em quantumly} solving $P$. It is restricted to assumptions $P$ with a verifiably-polynomial image. While this captures a large class of assumptions, such as all decision assumptions, it certainly does not capture all assumptions of interest. In particular, it does not capture {\em search assumptions} where the number of possible solutions per instance could be super polynomial, such as say the hardness of inverting a one-way function where the preimage size could be super-polynomial. 

In this section we show that this is somewhat inherent. We prove that for search assumptions, such a transformation cannot exist as long as the resulting reduction $\cR'$ is explicit in the assumptions $P,Q$. In particular, it may obtain as input the code of the algorithms describing $P,Q$, but does not get any implicit non-uniform advice regarding these assumptions. Indeed, the transformation in \ref{sec:classicalna} as well as the Persistence Theorem \ref{thm:state_restoration} on which it relies, the resulting quantum reduction $\cR'$ is in fact black-box in the assumptions $P,Q$, and in particular explicit.

\begin{definition}[Assumption Pair Colletion]
An assumption pair collection $\cP\cQ$ consists of pairs of assumptions $(P,Q)$, each given by its corresponding (possibly non-uniform) algorithms $(G_P,V_P,c_P)$ and $(G_Q,V_Q,c_Q)$.
\end{definition}

\begin{definition}[Explicit Reduction]
An explicit quantum reduction for assumption pair collection $\cP\cQ$ is an efficient algorithm $\cR$ with the following guarantee. For any $(P,Q)\in (\cP,\cQ)$ and any quantum solver $\cB_P = (B_P, \stt_0)$ for $P$ with one-shot advantage $\varepsilon$ and running time $T$, let $\stt'_0 = (\stt_{0},(P,Q),B_P, 1^{1/\varepsilon},1^{T})$. Then $\cB_Q = (\cR,\stt'_0)$ is a solver for $Q$ with one-shot advantage $ \poly(\varepsilon,T^{-1},\secp^{-1})$ and running-time $\poly(T,\varepsilon^{-1},\secp)$.

We say that the reduction is strongly explicit, instead of being given the explicit description of $(P,Q)$ as part of its input, it is given oracle access to its corresponding algorithms.

\end{definition}
Note that in the above definition $\stt'_0$ is formally a sequence 
$$\stt'_{0,_\secp} = (\stt_{0,\secp},(P,Q)_{\secp},B_{P,\secp}, 1^{1/\varepsilon(\secp)},1^{T(\secp)})\enspace,$$
where $(P,Q)_{\secp}$ consist of their corresponding algorithms (possibly along with their corresponding non-uniform advice) restricted to security parameter $\secp$ (w.l.o.g circuits).

Restating our result from Section \ref{sec:classicalna}, we proved that for any pair collection $\cP\cQ$, if for any $(P,Q)\in \cP,\cQ$, $P$ has verifiably-polynomial image, and there exists a classical non-adaptive black-box reduction $\cR_{P,Q}$ from solving $Q$ to solving $P$, then there also exists a strongly explicit quantum reduction $\cR'$ for $\cP\cQ$. We prove that if $P$ does not have a verifiably-polynomial image this may not be the case.

\begin{theorem}\label{thm:explicit_reduction}
There exists an assumption pair collection $\cP\cQ$, such that for any $(P,Q)\in \cP,\cQ$, there exists a classical non-adaptive black-box reduction $\cR_{P,Q}$ from solving $Q$ to solving $P$, but there is no strongly explicit reduction $\cR'$ for $\cP\cQ$. Assuming also post-quantum indistinguishability obfuscation, there also does not exist and explicit reduction $\cR'$.
\end{theorem}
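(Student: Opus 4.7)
The plan is to instantiate $\cP\cQ$ from a post-quantum tokenized signature scheme $(\Gen,\TGen,\TSig,\ver)$ \cite{Ben-DavidS16, ColadangeloLLZ21}. For each verification key $vk$ in the support of $\Gen(1^\secp)$, associate a pair $(P_{vk},Q_{vk})\in\cP\cQ$: the assumption $P_{vk}$ has instances that are uniform random messages $m\in\binset^\secp$ with solutions $\sigma$ satisfying $\ver(vk,m,\sigma)=1$ and threshold $0$; the assumption $Q_{vk}$ has instances that are independent pairs $(m_1,m_2)\in\binset^{2\secp}$ with solutions $(\sigma_1,\sigma_2)$ each verifying under $vk$ against the respective message, and threshold $0$. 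The key $vk$ serves as the non-uniform advice of both assumptions, while $\Gen,\TGen,\TSig,\ver$ are publicly known and fixed.

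The classical non-adaptive black-box reduction from $Q_{vk}$ to $P_{vk}$ is trivial: on input $(m_1,m_2)$, submit $m_1$ and $m_2$ as two non-adaptive queries to a stateless $P_{vk}$-solver and return the pair of responses. A stateless classical solver with advantage $\varepsilon$ answers each random query correctly and independently with probability $\varepsilon$, yielding $Q_{vk}$-advantage $\varepsilon^2$, which is polynomial in $\varepsilon$ as required by Definition~\ref{def:classicalbbred}.

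For the strongly explicit impossibility, I would construct a concrete one-shot quantum $P_{vk}$-solver $\cB_P$ whose initial state is a freshly generated signature token $\ket{\psi}\gets\TGen(sk)$ and whose operation on a queried message $m$ is simply $\TSig(\ket{\psi},m)$; by correctness of the tokenized scheme, $\cB_P$ has overwhelming one-shot advantage. Suppose toward contradiction that a strongly explicit reduction $\cR'$ exists. Then $\cR'$ composed with $\cB_P$ is an efficient quantum algorithm that, given oracle access to $G_{P_{vk}},V_{P_{vk}},G_{Q_{vk}},V_{Q_{vk}}$ and the auxiliary state $\ket{\psi}$, produces with inverse-polynomial probability two signatures valid under $vk$ on uniformly random messages $m_1\ne m_2$ (distinct except with probability $2^{-\secp}$). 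Since these four oracles depend only on $vk$ and never on $sk$, the entire execution can be simulated from $(vk,\ket{\psi})$ alone, yielding a quantum algorithm that, given $(vk,\ket{\psi})$, outputs two valid signatures on distinct messages --- contradicting the unforgeability of the tokenized scheme.

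To extend the impossibility to fully explicit reductions, I would redefine $V_{P_{vk}}$ and $V_{Q_{vk}}$ to be post-quantum $\iO$ obfuscations of canonical verification circuits that hardwire only $vk$ (the functionality is unchanged). By post-quantum iO security, replacing these obfuscated circuits by obfuscations of any other functionally equivalent implementation leaves the output of $\cR'$ unchanged up to negligible error; in particular, the explicit descriptions the reduction receives can be sampled from $vk$ alone, collapsing the explicit case to the strongly explicit one. The main obstacle is ensuring the iO hybrid goes through against a quantum distinguisher that holds the token $\ket{\psi}$ as auxiliary input; this works because the token is sampled independently of the circuit randomness, so the iO reduction generates $(vk,\ket{\psi})$ itself and forwards the token to the distinguisher, after which standard post-quantum iO indistinguishability applies.
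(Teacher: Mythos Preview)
Your construction of $\cP\cQ$ and the classical reduction match the paper, and your strongly-explicit argument is essentially the paper's: the attacker in the tokenized-signature game receives $(vk,\ket{\psi})$, can itself answer any oracle query to $G_{P_{vk}},V_{P_{vk}},G_{Q_{vk}},V_{Q_{vk}}$ (these depend only on $vk$), runs $\cR'$ with the token as the $P$-solver's state, and obtains two valid signatures on two random (hence distinct with overwhelming probability) messages. One point you leave implicit: the theorem claims the strongly-explicit impossibility \emph{unconditionally}, and for that the paper invokes the Ben--David--Sattath oracle world where tokenized signatures exist information-theoretically; you cite this construction but do not use it to separate the unconditional statement from the iO-based one.

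The real divergence is in the explicit case. You propose to iO-obfuscate the verifiers $V_{P_{vk}},V_{Q_{vk}}$ and then argue that, by iO security, the explicit descriptions leak no more than oracle access, ``collapsing'' to the strongly-explicit case. This is both unnecessary and does not quite work. It is unnecessary because the explicit descriptions of $P_{vk},Q_{vk}$ are just the circuits ``run $\ver(vk,\cdot,\cdot)$'' with $vk$ hardwired; the tokenized-signature adversary already holds $vk$, so it can write these circuits down itself and hand them to $\cR'$ exactly as in the strongly-explicit argument. In the paper, iO enters only to instantiate a plain-model tokenized signature scheme (via \cite{ColadangeloLLZ21}); once such a scheme exists, the explicit-case proof is identical to the strongly-explicit one. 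Your proposed iO hybrid also does not establish what you claim: iO indistinguishability between two functionally equivalent circuits does not turn a string-receiving reduction into an oracle-access reduction, so ``collapsing the explicit case to the strongly explicit one'' via iO is not a valid step as stated.
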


We will restrict extension to ruling out explicit reductions based on indistinguishability obfuscation. The result for strongly explicit reductions is a direct extension.

\paragraph{The Collection $\cP\cQ$.} 
The collection is associated with a particular signature scheme $(\Gen,\Sig,\ver)$, with a corresponding message space $\cM=(\cM_\secp)_{\secp\in\mathbb{N}}$.  Each pair of assumptions $(P,Q)\in\cP\cQ$ has the following form:  
\begin{enumerate}
        \item For $i\in\{P,Q\}$, $P_i=(G_i,V_i,0)$, $G_i$ is a uniform generator and $V_i$ is a non-uniform verifier.
        
        \item $G_P$ takes as input $1^\secp$ and outputs a random message $x\leftarrow \cM_\secp$, whereas $G_Q$ takes as input $1^\secp$ and outputs two random and independent messages $x_1,x_2\leftarrow \cM_\secp$.
        \item $V_P$ and $V_Q$ have the same $\pk \in\Gen(1^\secp)$ hardwired into their description. 
        
        $V_P$ takes as input $(1^\secp,x,\sigma)$ and outputs~$1$ if and only if $\ver(\pk,x,\sigma)=1$, whereas $V_Q$ takes as input $(1^\secp,x_1,x_2,\sigma_1,\sigma_2)$ and it outputs $1$ if and only if $\ver(\pk,x_1,\sigma_1)=\ver(\pk,x_2,\sigma_2)=1$.
\end{enumerate}

\begin{claim}  For any signature scheme $(\Gen,\Sig,\ver)$ and corresponding collection $\cP\cQ$, there exists an efficient solver-aided algorithm ~$\cR$ such that for every $(P,Q)\in\cP\cQ$, $\cR$ is a classical non-adaptive black-box reduction from solving $P$ to solving $Q$.  
\end{claim}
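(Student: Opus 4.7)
The plan is to give a one-line parallel reduction matching the direction used in Theorem~\ref{thm:explicit_reduction}. Concretely, $\cR$ takes as input a $Q$-instance $(x_1,x_2) \in \cM_\secp \times \cM_\secp$, non-adaptively submits $x_1$ and $x_2$ as its two queries to the $P$-solver $\cB$, receives back candidate signatures $\sigma_1,\sigma_2$, and outputs the pair $(\sigma_1,\sigma_2)$. Both queries are fixed before any response is observed and $\cB$ is touched only through its input/output interface, so the reduction is manifestly non-adaptive and black-box in the sense of Definition~\ref{def:classicalbbred}. Its runtime is $O(n_Q)$ plus two oracle calls.

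For the advantage analysis, note that $V_P$ and $V_Q$ share the exact same hardwired public key $\pk$, so $V_Q$-verification of $(\sigma_1,\sigma_2)$ decomposes into two independent invocations of $\ver(\pk,\cdot,\cdot)$. Moreover $G_Q$ samples $x_1,x_2$ as two independent uniform draws from $\cM_\secp$, which is exactly the marginal produced by $G_P$. Hence, for any stateless classical $P$-solver $\cB$ with one-shot advantage $\varepsilon$ (which equals its success probability since $c_P = 0$), the two $P$-queries are i.i.d.\ distributed as genuine $P$-instances, and by statelessness the responses $\sigma_1,\sigma_2$ are independent, each $\ver(\pk,\cdot,\cdot)$-valid with probability $\varepsilon$. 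Therefore $\val_Q[\cR^\cB] \ge \varepsilon^2$, yielding advantage $\atg_Q[\cR^\cB] \ge \varepsilon^2 = \poly(\varepsilon,\secp^{-1})$. The extension from deterministic solvers to distributions over stateless solvers is free from the standard averaging observation recorded in Remark~\ref{rem:clas_red_wlog}, and the low-probability event $x_1 = x_2$ (which would violate the ``no repeated queries'' convention) occurs with probability $1/|\cM_\secp|$ and is absorbed into the polynomial bound.

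There is no genuine obstacle in this claim: the classical reduction is exactly the ``naive'' conversion of a single-message signer into a two-message signer by running it twice in parallel. All the conceptual weight of Theorem~\ref{thm:explicit_reduction} sits on the complementary side, namely arguing that no explicit quantum procedure can perform the analogous conversion starting from a quantum $P$-solver built out of a single-use tokenized-signature token, since by the token's unforgeability any such procedure would yield two independent valid signatures from one token. That impossibility is the genuinely hard step, and it is what is addressed separately in the remainder of Section~\ref{sec:search}.
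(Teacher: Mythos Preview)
Your proposal is correct and takes essentially the same approach as the paper: the reduction queries the $P$-solver on $x_1$ and $x_2$ in parallel and outputs the pair of returned signatures, and the advantage bound is exactly $\varepsilon^2$ as you state. Your write-up is simply more detailed than the paper's two-line proof, adding the explicit justification of non-adaptivity, the extension to solver distributions via Remark~\ref{rem:clas_red_wlog}, and the handling of the collision $x_1=x_2$.
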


\begin{proof}
We describe the reduction $\cR$.  On input $(1^\secp,(x_1,x_2))$, $\cR$ queries the solver with $x_1$ and $x_2$, and obtains $\sigma_1$ and~$\sigma_2$.  It outputs $(\sigma_1,\sigma_2)$. To complete the proof, we note that given any stateless classical $P$-solver $\cB$ with advantage $\varepsilon$, $\cR^\cB$ has advantage $\varepsilon^2$ in solving $Q$, as desired.  
\end{proof}

We now proceed to show that for an appropriately chosen signature scheme $(\Gen,\Sig,\ver)$ there is no explicit quantum reduction $\cR'$ for the collection $\cP\cQ$.

\paragraph{Tokenized Signature Schemes.} A tokenized signature scheme \cite{Ben-DavidS16} is a classical signature scheme $(\Gen,\Sig,\ver)$, with message space $\cM=\{\cM_\secp\}_{\secp\in\mathbb{N}}$, with two additional efficient quantum algorithms $(\TGen,\TSig)$. $\TGen$ takes as input a secret key $\sk$ and outputs a quantum state $\ket{\tk}$, referred to as a {\em signing token}, and $\TSig$ takes as input a signing token $\ket{\tk}$ and a message $m\in\cM$ and outputs a signature, with the guarantee that for every message $m\in\cM$, 
$$
\Pr[\ver(\pk,m,\sigma)=1]=1,
$$
where the probability is over $(\sk,\pk)\leftarrow\Gen(1^\secp)$, $\ket{\tk}\leftarrow \TGen(\sk)$ and  $\sigma\leftarrow\TSig(\ket{\tk},m)$.

\begin{definition}\label{def:secure-tk-sig}
A tokenized signature scheme $(\Gen,\Sig,\ver, \TGen,\allowbreak\TSig)$ is secure if for any efficient quantum adversary $\cA$ there exists a negligible function $\mu$ such that for every $\lambda\in\mathbb{N}$,
$$
\Pr\left[\begin{array}{c} m_1\neq m_2 ,\\
\ver(\pk,m_i,\sigma_i)=1~~\forall i\in[2]
\end{array}
\pST (m_1,m_2,\sigma_1,\sigma_2)\gets \cA(\pk,\ket{\tk})\right] = \mu(\lambda)\enspace,
$$
where the probability is over $(\sk,\pk)\leftarrow\Gen(1^\secp)$ and  $\ket{\tk}\leftarrow\TGen(\sk)$.
\end{definition}

We rely on the following result by Coladangelo et a..

\begin{theorem}[\cite{ColadangeloLLZ21}]
There exists a secure tokenized signature scheme assuming the existence of a post-quantum secure (classical) indistinguishability obfuscation scheme.
\end{theorem}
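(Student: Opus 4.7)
The plan is to follow the coset-state construction of Coladangelo, Liu, Liu, and Zhandry. The central quantum object is the \emph{coset state}: for a uniformly random subspace $A \subseteq \mathbb{F}_2^n$ of dimension $n/2$ and uniformly random shifts $s, s' \in \mathbb{F}_2^n$, let $\ket{A_{s,s'}} := \frac{1}{\sqrt{|A|}}\sum_{v \in A}(-1)^{\langle s', v\rangle}\ket{v + s}$. Measurement of this state in the computational basis yields a uniformly random element of the primal coset $A + s$; applying the Hadamard transform first and then measuring yields a uniformly random element of the dual coset $A^\perp + s'$. A key further fact, proved by a monogamy-of-entanglement argument, is that from a single copy of $\ket{A_{s,s'}}$ together with obfuscated coset-membership oracles for $A + s$ and $A^\perp + s'$, it is computationally infeasible to jointly output a pair $(x, y)$ with $x \in A+s$ and $y \in A^\perp + s'$.

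The construction would then proceed as follows. Let $\prf$ be a post-quantum puncturable PRF. The signer samples $k = \poly(\secp)$ independent triples $(A_i, s_i, s'_i)$ and a PRF key $\prfkey$, and stores these as $\sk$. The public key $\pk$ is an $\iO$-obfuscation of the verification circuit which, on input $(m, (\sigma_i)_{i \in [k]})$, computes $b = \prf_\prfkey(m) \in \{0,1\}^k$ and accepts iff, for every $i$, $\sigma_i \in A_i + s_i$ when $b_i = 0$ and $\sigma_i \in A_i^\perp + s'_i$ when $b_i = 1$. The signing token is $\ket{\tk} := \bigotimes_{i \in [k]} \ket{(A_i)_{s_i, s'_i}}$. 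To sign a message $m$, the signer derives $b = \prf_\prfkey(m)$, applies the Hadamard transform on each register $i$ with $b_i = 1$, measures all registers in the computational basis, and outputs the result. Correctness is immediate from the basis-measurement behavior of coset states.

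The main technical burden is unforgeability. Suppose an adversary $\cA$ on input $(\pk, \ket{\tk})$ outputs with noticeable probability two accepting signatures for distinct messages $m_1 \neq m_2$. Since $\prf_\prfkey(m_1) \neq \prf_\prfkey(m_2)$ with overwhelming probability, there is some coordinate $i^*$ at which they differ; on that coordinate $\cA$ has extracted both a primal and a dual coset element from the single copy $\ket{(A_{i^*})_{s_{i^*}, s'_{i^*}}}$, which would violate the direct-product (monogamy) property above. The hard part is turning this intuition into a reduction given that $\pk$ exposes the cosets only through an iO'd circuit. The plan is a hybrid sequence: puncture $\prf$ at $\{m_1, m_2\}$ and use $\iO$ security (together with puncturable PRF security) to rewrite the verification circuit so that its behavior at $(m_1, \cdot)$ and $(m_2, \cdot)$ is expressed via hardwired coset-membership checks for the $k-1$ coordinates other than $i^*$; the remaining $i^*$-coordinate membership test is then replaced, via subspace-hiding obfuscation and compute-and-compare obfuscation, by a program that hides $(A_{i^*}, s_{i^*}, s'_{i^*})$ beyond what is needed to verify coset membership. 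At the final hybrid, the reduction holds only $\ket{(A_{i^*})_{s_{i^*}, s'_{i^*}}}$ and obfuscated coset-membership oracles, so $\cA$'s success contradicts the coset-state monogamy lemma. The main obstacle is pushing this hybrid argument through while maintaining functional equivalence of the obfuscated circuits against post-quantum distinguishers; once this is accomplished, Definition~\ref{def:secure-tk-sig} follows.
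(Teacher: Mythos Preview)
The paper does not prove this theorem. It is quoted as a black-box result of Coladangelo, Liu, Liu, and Zhandry \cite{ColadangeloLLZ21} and invoked without argument, so there is no ``paper's own proof'' to compare your proposal against.

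Your sketch is a reasonable high-level outline of the CLLZ coset-state construction, but as written it has a concrete gap in the syntax of $\TSig$. You specify that to sign with the token one computes $b = \prf_\prfkey(m)$ and then measures the $i$-th register in the computational or Hadamard basis according to $b_i$. However, the token holder possesses only $\ket{\tk}$ and $\pk$; the PRF key $\prfkey$ is part of $\sk$. The obfuscated verifier in $\pk$ evaluates $\prf_\prfkey$ internally but does not output $b$, so the token holder has no way to learn which basis to use, and $\TSig$ as you describe it is not well-defined. The fix is standard (e.g., also publish an $\iO$ of the circuit $m \mapsto \prf_\prfkey(m)$, which is harmless because the security hybrids puncture at the two forged messages anyway), but it should be stated explicitly. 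A second, smaller point: your unforgeability sketch fixes the ``bad'' coordinate $i^*$ before the hybrid sequence, but $i^*$ depends on the adversary's output $(m_1,m_2)$; the reduction must guess $i^*$ (losing a factor of $k$) or carry out the hybrids uniformly over all coordinates.
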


\begin{claim}
Let $(\Gen,\Sig,\ver,\TGen,\TSig)$ be a secure tokenized signature scheme, and let $\cP\cQ$ be the corresponding collection $\cP\cQ$. Then there exist no explicit quantum reduction for $\cP\cQ$.
\end{claim}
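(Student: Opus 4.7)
The plan is to prove the claim by contradiction: assume that an explicit quantum reduction $\cR'$ exists for $\cP\cQ$, and use it to build an efficient quantum forger $\cA$ that violates Definition~\ref{def:secure-tk-sig}, contradicting the security of the underlying tokenized signature scheme $(\Gen,\Sig,\ver,\TGen,\TSig)$.

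The core idea is that the token $\ket{\tk}$ is precisely a \emph{one-shot} quantum solver for the assumption $P$ (signing one random message), but by the unforgeability of the tokenized signature scheme, no quantum algorithm can transform it into a solver for $Q$ (signing two random messages). Concretely, given a forgery challenge $(\pk,\ket{\tk})$, the attacker $\cA$ first assembles the explicit description of the pair $(P,Q)\in\cP\cQ$ corresponding to this $\pk$ (possible because $\pk$ is the only non-uniform advice inside $V_P$ and $V_Q$, and $G_P,G_Q$ are uniform). Next, $\cA$ defines the efficient stateful $P$-solver $\cB_P=(B_P,\stt_0)$ whose initial state is $\stt_0=\ket{\tk}$ and whose algorithm $B_P$, on its first invocation with input $x$, outputs $\TSig(\ket{\tk},x)$ (the state after consumption can be arbitrary garbage, since we only need one-shot guarantees). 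By the perfect correctness of $\TSig$, the one-shot value of $\cB_P$ is $1$, so its advantage is $\varepsilon=1$; its running time $T$ is some fixed polynomial in $\secp$.

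The attacker then samples a $Q$-instance honestly: draw $r\gets\zo^{d_Q}$ and set $(x_1,x_2)=G_Q(1^\secp,r)$, i.e.\ two independent uniform messages from $\cM_\secp$. Finally, $\cA$ runs $\cR'(1^\secp,(x_1,x_2),\stt'_0)$ with $\stt'_0=(\ket{\tk},(P,Q),B_P,1^{1/\varepsilon},1^T)$ and outputs whatever $\cR'$ outputs. By the defining guarantee of an explicit reduction (Definition of ``Explicit Reduction''), the resulting algorithm $\cB_Q=(\cR',\stt'_0)$ solves $Q$ with one-shot advantage $\varepsilon'=\poly(\varepsilon,T^{-1},\secp^{-1})=1/\poly(\secp)$. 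Since the threshold $c_Q$ of $Q$ is $0$, this means that with probability at least $1/\poly(\secp)$, the output $(\sigma_1,\sigma_2)$ of $\cR'$ satisfies $\ver(\pk,x_1,\sigma_1)=\ver(\pk,x_2,\sigma_2)=1$.

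It remains to observe that, for any meaningful tokenized signature scheme, the message space $\cM_\secp$ has super-polynomial size (otherwise one could enumerate messages and trivially forge), so two independently uniform messages $x_1,x_2\gets\cM_\secp$ are distinct with overwhelming probability. Hence with probability $1/\poly(\secp)$, the attacker $\cA$ outputs $(x_1,x_2,\sigma_1,\sigma_2)$ with $x_1\neq x_2$ and both signatures valid under $\pk$, which contradicts Definition~\ref{def:secure-tk-sig}. The only subtle points, which I would expect to be the main sources of care rather than difficulty, are (i) checking that the ``explicit'' reduction indeed receives enough information to be run by $\cA$ (satisfied because $\pk$ fully determines the non-uniform advice of $V_P$ and $V_Q$, while $G_P,G_Q$ are uniform), and (ii) invoking $\cB_P$ in a single-shot manner, which is exactly what the one-shot advantage of the explicit reduction allows us to rely on. The strongly-explicit version of the claim follows a fortiori, since every strongly-explicit reduction can be simulated by an explicit one by feeding it the algorithms it would have queried as an oracle.
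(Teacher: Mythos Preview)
Your proposal is correct and follows essentially the same approach as the paper: assume an explicit reduction $\cR'$ exists, feed it the token-based one-shot $P$-solver $(B_P,\ket{\tk})$ together with the $\pk$-derived descriptions of $(P,Q)$, and use its noticeable $Q$-solving advantage to produce two valid signatures, contradicting tokenized-signature security. You are in fact slightly more careful than the paper in explicitly arguing that $x_1\neq x_2$ with overwhelming probability; your closing remark on the strongly-explicit case is logically fine but note that the paper instead obtains that case \emph{unconditionally} via the oracle-based tokenized signatures of Ben-David--Sattath, rather than as a corollary of the iO-based explicit case.
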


\begin{proof}
Assume toward contradiction that there exists an explicit reduction $\cR'$ for $\cP\cQ$, we show how an adversary $\cA$ can use it to $reak the secu$ity of the tokenized signatures (Definition \ref{def:secure-tk-sig}).

For security parameter $\secp$, $\cA$ is given a public key $\pk$ and token $\ket{\tk}$. $\cA$ samples $x_1,x_2 \gets \cM_\secp$ and invokes:
$$
\cR'((x_1,x_2),\stt'_0 = (\ket{\tk},(P,Q)_\secp,B_P, 1^{1/\varepsilon},1^{T})\enspace,
$$
where:
\begin{itemize}
    \item $(P,Q)_\secp$ are the circuits describing the assumption corresponding to $\pk$.
    \item
    $B_P$ is the quantum algorithm that given $(x,\ket{\tk})$, applies $\TSig$ to generate a signature $\sigma$ on $x$.
    \item
    $\varepsilon = 1$.
    \item
    and $T$ is the polynomial running time of $B_P$.
\end{itemize}
$\cA$ obtains back from the reduction two signatures $\sigma_1,\sigma_2$ and outputs $(x_1,\sigma_1,x_2,\sigma_2)$.

To see that $\cA$ breaks the security of the tokenized signature (namely manages to generate signatures on two different messages. Note that $B_P,\stt_0 = \ket{\tk}$ constitute a solver for $P$ (which generates one good signature) with probability $1$, accordingly the reduction $\cR'$ manages to solve $Q$ with noticeable probability, generating two signatures $\sigma_1,\sigma_2$.

\end{proof}

The above indeed relies on indistinguishability obfuscation to instantiate the tokenized signature scheme. To get an unconditional impossibility, but which only rules out {\em strongly} explicit reductions, we can rely on the following result of Ben-David and Sattath. 

\begin{theorem}[\cite{Ben-DavidS16}]
There exists a classical oracle distribution relative to which there exist (information theoretically) secure tokenized signature schemes.
\end{theorem}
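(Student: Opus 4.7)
My plan would be to follow the spirit of the Ben-David-Sattath template, using subspace-state tokens. The oracle $O$ would encode, for each public key $\pk$, a uniformly random subspace $S \subseteq \bbF_2^n$ of dimension $n/2$ together with a family of random labelings $f_m : S_m \to \zo^{n}$ indexed by messages $m$, where $S_m$ is an $m$-dependent subspace; in the two-message case one may take $S_0 = S$ and $S_1 = S^\perp$. Signatures of $m$ under $\pk$ are defined to be the $f_m$-images of vectors in $S_m$, and $\ver(\pk, m, \sigma)$ is implemented by querying $O$. The token is the subspace state $\ket{\tk} = |S|^{-1/2}\sum_{v \in S}\ket{v}$, prepared by $\TGen$ using $O$.

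Correctness is immediate: to sign $m$, apply the unitary mapping $\ket{S}$ to $\ket{S_m}$ (for $m=1$ this is the Hadamard transform, since $\ket{S^\perp}$ is the Fourier dual of $\ket{S}$), measure in the computational basis to obtain $v \in S_m$, and query the oracle to get $\sigma = f_m(v)$. This always yields a valid signature by construction.

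The heart of the proof is security, which I would reduce to the purely information-theoretic monogamy task: given one copy of $\ket{\tk}$, simultaneously output $v_1 \in S_{m_1}$ and $v_2 \in S_{m_2}$ with $m_1 \neq m_2$. Known monogamy-of-entanglement / direct-product bounds for subspace states show that this succeeds with only exponentially small probability, even for computationally unbounded adversaries.

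The main obstacle is bridging between the oracle-aided adversary and this information-theoretic monogamy game, since the adversary's oracle access could in principle leak information about $S$ through the labelings $f_m$. My plan is to handle this via a hybrid argument in Zhandry's compressed-oracle framework: replace each label $f_m(v)$ that the adversary does not explicitly query in its final output by a fresh random string, and argue via a one-way-to-hiding style analysis that this change is quantumly undetectable. The subtle point is that the adversary may query the oracle adaptively before committing to its target messages $m_1, m_2$, so the simulator must answer queries coherently; bounding the amplitude of the query registers on vectors of $S$, averaged over $S$, should suffice to keep each query's information gain negligible and thus preserve the monogamy bound up to a polynomial factor in the number of queries.
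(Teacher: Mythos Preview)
The paper does not prove this theorem. It is stated as a citation of an external result of Ben-David and Sattath \cite{Ben-DavidS16}, invoked as a black box to obtain the unconditional version of the impossibility. The sentence ``The proof is a direct extension of the proof above'' that immediately follows refers to the proof of the impossibility result (no strongly explicit reduction for $\cP\cQ$), not to a proof of the cited tokenized-signature theorem. So there is nothing in the paper to compare your proposal against: you are reconstructing a result that the authors simply import.

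As for your sketch on its own merits, it is in the right neighborhood but more elaborate than what Ben-David and Sattath actually do. Their oracle is essentially a pair of membership oracles for a random $n/2$-dimensional subspace $S$ and its dual $S^\perp$; a signature on a bit $b$ is just a nonzero vector in $S$ (for $b=0$) or in $S^\perp$ (for $b=1$), with no extra labeling layer $f_m$. Security then reduces directly to the Aaronson--Christiano direct-product theorem for subspace states: given one copy of $\ket{S}$ and query access to the membership oracles, no (even unbounded) adversary can output both a nonzero vector in $S$ and a nonzero vector in $S^\perp$ except with exponentially small probability. Your added random labelings $f_m$ and the compressed-oracle / one-way-to-hiding machinery are unnecessary for this relativized, information-theoretic statement, and they introduce exactly the ``oracle leakage'' obstacle you then have to work around. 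If you drop the labelings and use the membership-oracle formulation, the monogamy step you identify becomes the whole proof, with no hybrid argument needed.
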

The proof is a direct extension of the proof above.

\section{Proving the Plug-In Lemma}
\label{apx:pluginproof}

We recall the formal statement of the lemma.

\begin{lemma}[Lemma~\ref{lem:plugin}, restated]
	Let $\vec{Y} = (Y_1, \ldots, Y_t)$ be a joint distribution over $t$ classical random variables. Let $\vec{y}$ be distributed according to $\vec{Y}$. Let $\qv{s}$ be an $\ell$-qubit random variable that has arbitrary dependence on $\vec{y}$.	
We let $\vec{y}_i$ denote the prefix $\vec{y}_i = (y_1, \ldots, y_i)$ for $1 \le i \le t$, and $\vec{y}_0$ is the empty vector (and likewise for $\vec{Y}$).
Let $J$ be the uniform distribution over $[t]$ and let $j \gets J$. Define $y' \gets Y_J | (\vec{Y}_{j-1} = \vec{y}_{j-1})$.
Then it holds that
	\begin{align}
		\td( (j, \vec{y}_{j-1}, y_j, \qv{s}), (j, \vec{y}_{j-1}, {y}', \qv{s}) ) \le \sqrt{\ell/(2t)}~.
	\end{align}
\end{lemma}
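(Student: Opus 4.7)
The plan is to view the setup as a classical-quantum state $\rho_{\vec{Y}S} = \sum_{\vec{y}} \Pr[\vec{Y}=\vec{y}]\, \ketbra{\vec{y}} \otimes \rho_{\qv{s}|\vec{y}}$ and argue via an averaged mutual-information bound combined with (quantum) Pinsker's inequality. The first observation I would make is that the trace distance the lemma asks us to bound decomposes cleanly by conditioning on $(j, \vec{y}_{j-1})$: given these values, distribution 1 presents $(y_j, \qv{s})$ sampled jointly from the conditional $\rho_{Y_j S \mid \vec{y}_{j-1}}$, while distribution 2 presents $y'$ drawn from the marginal $\rho_{Y_j \mid \vec{y}_{j-1}}$ together with $\qv{s}$ from the marginal $\rho_{S \mid \vec{y}_{j-1}}$, independently. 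Hence
\begin{align*}
\td\bigl((j,\vec{y}_{j-1},y_j,\qv{s}),(j,\vec{y}_{j-1},y',\qv{s})\bigr)
= \Ex_{j,\vec{y}_{j-1}}\td\!\left(\rho_{Y_j S \mid \vec{y}_{j-1}},\; \rho_{Y_j \mid \vec{y}_{j-1}} \otimes \rho_{S \mid \vec{y}_{j-1}}\right).
\end{align*}

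Next I would invoke the quantum Pinsker inequality, which for any classical-quantum state gives $\td(\rho_{XS}, \rho_X \otimes \rho_S)^2 \le \tfrac{1}{2}I(X:S)$ (with mutual information in nats; the base-2 version loses only a factor of $\ln 2 < 1$). Applying this for each $j$ and $\vec{y}_{j-1}$ and using concavity of $\sqrt{\cdot}$ (Jensen) to pull the expectation inside, the RHS above is at most $\sqrt{\tfrac{1}{2}\,\Ex_{j}I(Y_j : S \mid \vec{Y}_{j-1})}$.

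To bound this average mutual information, I would apply the chain rule for quantum mutual information
$$I(Y_1 \cdots Y_t : S) = \sum_{j=1}^t I(Y_j : S \mid Y_1 \cdots Y_{j-1}),$$
combined with the fact that for classical $\vec{Y}$ and an $\ell$-qubit $S$, the left-hand side is upper-bounded by $S(\rho_S) \le \ell$ bits. Averaging over the uniform choice of $j \in [t]$ therefore gives $\Ex_{j}\, I(Y_j:S \mid \vec{Y}_{j-1}) \le \ell/t$, and plugging this into the Pinsker bound yields the claimed $\sqrt{\ell/(2t)}$.

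The only real subtlety I anticipate is making sure the conditional decomposition of trace distance is written correctly for classical-quantum states (the classical register $J\vec{Y}_{j-1}$ factors out by convexity of trace distance, exactly as in the classical case), and keeping the units of entropy/mutual-information consistent so that "$\ell$ qubits" really yields the stated constant — a minor bookkeeping issue, since the $\ln 2$ factor from converting between bits and nats only improves the constant. Everything else is a routine combination of Pinsker, Jensen, and the chain rule.
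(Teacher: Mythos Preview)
Your proposal is correct and follows essentially the same route as the paper: decompose the trace distance by conditioning on the classical registers $(j,\vec{y}_{j-1})$, apply quantum Pinsker to each conditional, pull the expectation inside the square root via Jensen, and bound the averaged conditional mutual information by $\ell/t$ using the chain rule together with $I(\vec{Y}:\qv{s})\le H(\qv{s})\le\ell$ (which uses that $H(\qv{s}\mid\vec{Y})\ge 0$ since $\vec{Y}$ is classical). The paper phrases the last step as a telescoping sum of conditional entropies rather than directly invoking the chain rule for $I$, but this is the same computation.
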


We prove the lemma using tools from (quantum) information theory.
We recall the basic notions below and refer to \cite{QCQI,WatrousQIT} for additional reference. We let $H(\cdot)$ denote the entropy function both in the classical case (Shannon entropy) and in the quantum case (von Neumann entropy).

We denote the mutual information function by $I(\cdot:\cdot)$ both in the classical and in the quantum setting. 
We note that entropy or mutual information are only well defined for variables that have a well-defined joint density matrix.\footnote{Recall the following information-theoretic identities that hold both in the classical and quantum settings. Mutual information: $I(X:Y) = H(X)+H(Y)-H(X,Y)$. Conditional entropy: $H(X|Y) = H(X,Y)-H(Y)$. Conditional mutual information: $I(X:Y|Z) = H(X|Z)+H(Y|Z)-H(X,Y|Z) = H(X|Z) - H(X|Y,Z)$.}

Let $X$ and $Y$ be variables with a joint density matrix $\rho_{XY}$. We say that $Y$ is a classical random variable if its reduced density matrix is diagonal. If $Y$ is a classical variable then 
\begin{align}
	\rho_{XY} = \sum_{y} p_{y} \rho_{X|y} \otimes \ketbra{y}~.
\end{align}

In this case we refer to $\rho_{X|y}$ as the conditional density matrix of $X$ given $Y=y$, and refer to $X|y$ as the variable with this density matrix. In such a case it holds that 
$$H(X|Y) = \Ex_{y}[H(X|y)],
$$
where $y$ is distributed according to the classical distribution of $Y$.
We provide a proof for the sake of completeness.

\begin{proposition}\label{prop:classicalY}
	Let $X,Y$ be random variables with density matrix $\rho_{XY} = \sum_{y} \alpha_{y} \rho_{X|y} \otimes \ketbra{y}$. Then it holds that $H(X|Y) = \Ex_{y \sim Y}[H(X|y)]$.
\end{proposition}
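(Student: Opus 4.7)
The plan is to unfold the definition $H(X|Y) = H(X,Y) - H(Y)$ and compute each term directly from the spectral structure of $\rho_{XY}$. Since the statement is a standard fact in quantum information, the proof is essentially a short calculation, so the ``plan'' is really just to lay out that calculation cleanly.

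First, I would observe that because $\rho_{XY}$ is block-diagonal in the $Y$-basis, its reduced density matrix on $Y$ is $\rho_Y = \sum_y \alpha_y \ketbra{y}$, which is diagonal with eigenvalues $\{\alpha_y\}$. Hence $H(Y) = -\sum_y \alpha_y \log \alpha_y$, exactly the Shannon entropy of the classical distribution induced by $Y$.

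Next, for each $y$, let $\{\lambda_{y,k}\}_k$ be the eigenvalues of the conditional density matrix $\rho_{X|y}$, so $\sum_k \lambda_{y,k} = 1$. Because $\rho_{XY}$ is a direct sum over $y$ of the blocks $\alpha_y\, \rho_{X|y}$, its eigenvalues are exactly $\{\alpha_y \lambda_{y,k}\}_{y,k}$. Substituting into the von Neumann entropy formula gives
\begin{align*}
H(X,Y) &= -\sum_{y,k} \alpha_y \lambda_{y,k} \bigl(\log \alpha_y + \log \lambda_{y,k}\bigr) \\
&= -\sum_y \alpha_y \log \alpha_y - \sum_y \alpha_y \sum_k \lambda_{y,k} \log \lambda_{y,k} \\
&= H(Y) + \sum_y \alpha_y\, H(X \mid y),
\end{align*}
where the last step uses $\sum_k \lambda_{y,k} = 1$ and the definition $H(X|y) = -\sum_k \lambda_{y,k} \log \lambda_{y,k}$.

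Finally, subtracting yields $H(X|Y) = H(X,Y) - H(Y) = \sum_y \alpha_y H(X|y) = \Ex_{y \sim Y}[H(X|y)]$, as claimed. There is no real obstacle here: the only subtle point to flag explicitly is that the block-diagonal structure (coming from $Y$ being classical, i.e.\ $\rho_Y$ diagonal in the given basis) is exactly what lets us read off the joint eigenvalues as products $\alpha_y \lambda_{y,k}$ and separate the logarithm additively.
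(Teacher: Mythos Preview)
Your proof is correct and follows essentially the same route as the paper: both establish $H(X,Y) = H(Y) + \Ex_{y\sim Y}[H(X|y)]$ and then subtract $H(Y)$. The only difference is that the paper cites this identity from \cite[Theorem 11.8(5)]{QCQI}, whereas you prove it directly via the block-diagonal eigenvalue decomposition.
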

\begin{proof}
	As stated in \cite[Theorem 11.8 (5)]{QCQI}, 
	it holds that $H(X,Y) = H(Y)+\Ex_{y \sim Y}[H(X|y)]$. Recalling that $H(X|Y) = H(X,Y) - H(Y)$, the proposition follows.
\end{proof}

\begin{corollary}\label{cor:posent}
	Letting $X,Y$ be as in Proposition~\ref{prop:classicalY} then $H(X|Y) \ge 0$.
\end{corollary}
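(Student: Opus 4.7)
The plan is to invoke Proposition~\ref{prop:classicalY} directly and then use the basic fact that the von Neumann entropy of any density matrix is non-negative. Specifically, the proposition gives $H(X|Y) = \Ex_{y \sim Y}[H(X|y)]$, where each $H(X|y)$ is the von Neumann entropy of the conditional density matrix $\rho_{X|y}$. Since $\rho_{X|y}$ is a valid density matrix (trace one, positive semidefinite), its eigenvalues lie in $[0,1]$, and hence $H(\rho_{X|y}) = -\sum_i \lambda_i \log \lambda_i \ge 0$ (with the usual convention $0\log 0 = 0$). Taking the expectation over $y$ preserves non-negativity, which yields the claim.

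The only subtlety worth flagging is to make clear that this non-negativity crucially uses the assumption from Proposition~\ref{prop:classicalY} that $Y$ is classical (i.e., $\rho_Y$ is diagonal). Without this assumption, quantum conditional entropy can be negative (as in the case of maximally entangled states), so one cannot just appeal to ``entropy is non-negative'' in full generality. Once Proposition~\ref{prop:classicalY} has reduced $H(X|Y)$ to an average of ordinary von Neumann entropies of quantum states indexed by classical outcomes $y$, the obstruction disappears and the result is immediate. There is no technical difficulty beyond this observation, so the proof is expected to be one or two lines.
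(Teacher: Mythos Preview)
Your proposal is correct and matches the paper's proof essentially line for line: invoke Proposition~\ref{prop:classicalY} to write $H(X|Y)$ as an expectation of von Neumann entropies $H(X|y)$, note each of these is non-negative, and conclude. Your added remark about why the classicality of $Y$ is essential is a nice clarification, but the core argument is identical to the paper's one-line proof.
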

\begin{proof}
	The corollary follows since for each $y$ it holds that $X|y$ is just a quantum variable and therefore it has non-negative entropy. The expectation over non-negative values remains non-negative.
\end{proof}

The following simple application of the chain-rule will be useful for us.
\begin{lemma}\label{lem:boundmutual}
	Let $\vec{y} = (y_1, \ldots, y_t)$ be a vector of arbitrarily distributed classical random variables, and let $\qv{s}$ be an $\ell$-qubit random variable that has arbitrary dependence on $\vec{y}$. 
	Recall that we denote $\vec{y}_i = (y_1, \ldots, y_i)$ for $1 \le i \le t$, and $\vec{y}_0$ is the empty vector. 
	Then for a uniformly distributed $J\leftarrow[t]$,  
	\begin{align}
		I(\qv{s}: y_J | \vec{y}_{J-1}, J) \le \frac{\ell}{t}~.
	\end{align}
\end{lemma}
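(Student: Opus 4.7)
The plan is a short information-theoretic argument. First, I would observe that because the $y_i$'s are classical and $J$ is independent of $(\qv{s},\vec{y})$, the quantity of interest averages the per-step conditional mutual information:
\begin{align*}
I(\qv{s}: y_J \mid \vec{y}_{J-1}, J) \;=\; \frac{1}{t}\sum_{i=1}^{t} I(\qv{s}: y_i \mid \vec{y}_{i-1}).
\end{align*}
This uses only the definition of mutual information when conditioning on a classical uniformly distributed random variable $J$ that is product with the rest.

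Next, I would apply the chain rule for conditional mutual information (which holds in the quantum setting as well, provided the joint state is well-defined; here it is, since $\vec{y}$ is classical) to recognize the right-hand sum as a telescoping expression:
\begin{align*}
\sum_{i=1}^{t} I(\qv{s}: y_i \mid \vec{y}_{i-1}) \;=\; I(\qv{s}: \vec{y}).
\end{align*}

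Finally, I would bound $I(\qv{s}:\vec{y}) \le \ell$. Since $\vec{y}$ is classical, the joint state has the CQ form $\rho_{\qv{s}\vec{y}} = \sum_{\vec{y}} p_{\vec{y}}\, \rho_{\qv{s}|\vec{y}} \otimes \ketbra{\vec{y}}$, so by Proposition~\ref{prop:classicalY} and Corollary~\ref{cor:posent}, the conditional entropy satisfies $H(\qv{s}\mid \vec{y}) = \Ex_{\vec{y}}[H(\qv{s}\mid \vec{y})] \ge 0$. Hence
\begin{align*}
I(\qv{s}:\vec{y}) \;=\; H(\qv{s}) - H(\qv{s}\mid \vec{y}) \;\le\; H(\qv{s}) \;\le\; \ell,
\end{align*}
where the last inequality is the standard bound that an $\ell$-qubit system has von Neumann entropy at most $\ell$. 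Combining the three steps yields the claimed bound $\ell/t$.

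There is no real obstacle here; the only subtlety is to invoke the non-negativity of conditional entropy correctly. This is valid in our setting precisely because the conditioning variable $\vec{y}$ is classical, so the CQ structure makes $H(\qv{s}\mid\vec{y})$ an average of genuine (non-negative) quantum entropies rather than a potentially negative quantity; we avoid any use of general quantum conditional entropy where negativity could appear. With this in mind the proof fits in a few lines and uses only the tools already set up earlier in the section.
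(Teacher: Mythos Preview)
Your proposal is correct and follows essentially the same approach as the paper: average over $J$, telescope via the chain rule to obtain $H(\qv{s})-H(\qv{s}\mid\vec{y})$ (equivalently $I(\qv{s}:\vec{y})$), then invoke Corollary~\ref{cor:posent} and the $\ell$-qubit entropy bound. The only cosmetic difference is that the paper writes the telescoping directly in terms of conditional entropies rather than naming it as the chain rule for mutual information.
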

\begin{proof}
	We have that
	\begin{align}
		I(\qv{s}: y_J | \vec{y}_{J-1}, J) & = \Ex_{j} [I(\qv{s}: y_j | \vec{y}_{j-1}) ]\\
		& = \frac{1}{t} \sum_{j \in [t]} I(\qv{s}: y_j | \vec{y}_{j-1}) \\
		\text{(By definition)} \qquad	& = \frac{1}{t} \sum_{j \in [t]} (H(\qv{s} | \vec{y}_{j-1}) - H(\qv{s} | \vec{y}_{j}) )\\
		\text{(Telescopic sum)} \qquad	& = \frac{1}{t} (H(\qv{s}) - H(\qv{s} | \vec{y}))\\
		\text{(Corollary~\ref{cor:posent})} \qquad	& \le \frac{H(\qv{s})}{t}\\
		\text{($\qv{s}$ is $\ell$-qubits)} \qquad	& \le \frac{\ell}{t}~.
	\end{align}
\end{proof}

\begin{proposition}\label{prop:condtd}
	Let $Z$ be a classical variable and let $X,Y$ be quantum variables with arbitrary dependence on $Z$. Then it holds that
	\begin{align}
		\td(XZ, YZ) = \Ex_{z\sim Z}[ \td(X|z, Y|z)]~.
	\end{align}
\end{proposition}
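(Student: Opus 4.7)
The plan is to exploit the block-diagonal structure that arises from $Z$ being classical. Since $Z$ is classical and $X, Y$ have arbitrary dependence on $Z$, their joint density matrices can be written in the form
\begin{align*}
\rho_{XZ} = \sum_{z} p_z\, \rho_{X|z} \otimes \ketbra{z}, \qquad \rho_{YZ} = \sum_{z} p_z\, \rho_{Y|z} \otimes \ketbra{z},
\end{align*}
where $p_z = \Pr[Z=z]$ and $\rho_{X|z}, \rho_{Y|z}$ are the conditional density matrices. Note that the marginal on $Z$ is the same in both distributions, which is crucial (it is implicitly part of the assumption that $\td$ is well-defined and finite and that both $X$ and $Y$ live over the same $Z$ distribution).

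Subtracting, I obtain
\begin{align*}
\rho_{XZ} - \rho_{YZ} = \sum_z p_z \,(\rho_{X|z} - \rho_{Y|z}) \otimes \ketbra{z},
\end{align*}
which is block-diagonal with respect to the orthonormal basis $\{\ket{z}\}$ of the $Z$-register. The next step is to invoke the standard fact that the trace norm of a block-diagonal Hermitian operator is the sum of the trace norms of its blocks: for $M = \bigoplus_z M_z$, one has $\|M\|_1 = \sum_z \|M_z\|_1$. This follows because the absolute value $|M|$ (computed via spectral decomposition) is itself block-diagonal with blocks $|M_z|$, so $\|M\|_1 = \Tr|M| = \sum_z \Tr|M_z| = \sum_z \|M_z\|_1$.

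Applying this to $M_z = p_z(\rho_{X|z} - \rho_{Y|z})$ and using $p_z \ge 0$, I get
\begin{align*}
\td(XZ, YZ) = \tfrac{1}{2}\,\|\rho_{XZ} - \rho_{YZ}\|_1 = \tfrac{1}{2}\sum_z p_z\, \|\rho_{X|z} - \rho_{Y|z}\|_1 = \sum_z p_z\, \td(X|z, Y|z),
\end{align*}
which is exactly $\Ex_{z\sim Z}[\td(X|z, Y|z)]$, as required. There is no real obstacle here; the only conceptual point worth stressing is that classicality of $Z$ is what makes the two density matrices simultaneously block-diagonalizable in the $Z$-basis, so their difference inherits this structure and the trace norm decouples additively across $z$. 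Without classicality of $Z$ the identity becomes only an inequality (the monotonicity direction $\td(XZ, YZ) \le \Ex_z[\td(X|z,Y|z)]$ still holds via joint convexity, but equality can fail).
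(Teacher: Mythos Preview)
Your proof is correct and follows essentially the same approach as the paper: both write $\rho_{XZ}$ and $\rho_{YZ}$ in block-diagonal form over the classical $Z$-basis, use that the trace norm of a block-diagonal Hermitian operator is the sum of the trace norms of its blocks, and conclude directly. Your justification of the block-norm fact via $|M|$ being block-diagonal is slightly more explicit than the paper's one-line invocation, but the argument is otherwise identical.
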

\begin{proof}
	Since $Z$ is classical, the density matrices of $XZ$ and $YZ$ can be written as block-diagonal: $\rho_{XZ}= \sum_z p_z \rho_{X|z} \otimes \ketbra{z}$ and $\rho_{YZ}=\sum_z p_z \rho_{Y|z} \otimes \ketbra{z}$, where $p_z = \Pr[Z=z]$.
	
	Recall that the $\ell_p$ norm of a block-diagonal matrix is simply the sum of norms of the blocks (since each block can be individually diagonalized). We therefore have
	\begin{align}
		\td(XZ, YZ) & = \tfrac{1}{2} \|\rho_{XZ} - \rho_{YZ}\|_1\\
		&= \tfrac{1}{2} \|\sum_z p_z (\rho_{X|z} - \rho_{Y|z}) \otimes \ketbra{z}\|_1\\
		&=\tfrac{1}{2} \sum_z p_z \| \rho_{X|z} - \rho_{Y|z}\|_1\\
		& = \Ex_{z\sim Z}[ \td(X|z, Y|z)]~.
	\end{align}
\end{proof}

We use the following lemma which follows straightforwardly from quantum Pinsker inequality.

\begin{lemma}\label{lem:qpinsker}
	Let $X,Y$ be arbitrary quantum variables with a joint density matrix $\rho_{XY}$ and reduced density matrices $\rho_X, \rho_Y$ respectively.\footnote{Recall that the reduced density matrix of $\rho_{X,Y}$ corresponding to $X$ is $\rho_X=\tr_Y(\rho_{XY})$, where $\tr_Y$ is the linear operator that satisfies that  $\tr_Y(|x_1\rangle\langle x_2|\otimes|y_1\rangle\langle y_2|)= |x_1\rangle\langle x_2|\tr(|y_1\rangle\langle y_2|)$ for any $|x_1\rangle$ and $|x_2\rangle$ in $X$ and $|y_1\rangle$ and $|y_2\rangle$ in $Y$. 
	}
	Then %
	\begin{align}
		\td(\rho_{XY}, \rho_X \otimes \rho_Y) \le \sqrt{ \tfrac{\ln(2)}{2} \cdot I(X:Y)}\le \sqrt{I(X:Y)/2}~,
	\end{align}
	where $\td$ denotes the trace distance.
\end{lemma}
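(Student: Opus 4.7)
The plan is to reduce the trace-distance bound to an information-theoretic statement and then invoke the two preparatory lemmas already in place. I would first re-express the two distributions in the conclusion as follows: the classical variable $(j,\vec{y}_{j-1})$ is shared between them, and conditioned on a fixed value $(j,\vec{y}_{j-1})$, both $y_j$ and $y'$ have the same marginal (by the definition of $y'$), the only difference being that in the first distribution $(y_j,\qv{s})$ are jointly distributed according to the conditional joint law induced by $\vec{Y}$, whereas in the second distribution $y'$ is drawn from the conditional marginal of $Y_j$ and is independent of $\qv{s}$ conditioned on $(j,\vec{y}_{j-1})$.

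Next I would pull the classical variables out using Proposition~\ref{prop:condtd}, which gives
\begin{align*}
\td\bigl((j,\vec{y}_{j-1},y_j,\qv{s}),(j,\vec{y}_{j-1},y',\qv{s})\bigr) = \Ex_{j,\vec{y}_{j-1}}\!\left[\td\bigl(\rho_{Y_j\qv{S}\mid j,\vec{y}_{j-1}},\,\rho_{Y_j\mid j,\vec{y}_{j-1}}\otimes \rho_{\qv{S}\mid j,\vec{y}_{j-1}}\bigr)\right].
\end{align*}
Each summand is the trace distance between a joint state and the product of its marginals, which is exactly what Lemma~\ref{lem:qpinsker} (quantum Pinsker) bounds in terms of the conditional mutual information $I(Y_j:\qv{S}\mid j,\vec{y}_{j-1})$.

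The final step is to apply Jensen's inequality to move the expectation inside the square root, obtaining an upper bound of the form $\sqrt{\tfrac{1}{2}\,I(Y_J:\qv{S}\mid J,\vec{Y}_{J-1})}$, and then invoke Lemma~\ref{lem:boundmutual} to conclude that this conditional mutual information is at most $\ell/t$. Putting these pieces together yields the desired $\sqrt{\ell/(2t)}$ bound.

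I do not expect any serious obstacle here; all the nontrivial work is packaged in the three preparatory results. The only small care points are (i) making sure that applying Proposition~\ref{prop:condtd} is legitimate, which requires verifying that $(j,\vec{y}_{j-1})$ is a genuine classical variable in both joint states (it is, since $j$ and the classical prefix are sampled classically and shared between the two experiments), and (ii) justifying the use of Jensen's inequality in lifting $\Ex[\sqrt{\,\cdot\,}]$ to $\sqrt{\Ex[\cdot]}$, which holds by concavity of the square root applied to the nonnegative conditional mutual information.
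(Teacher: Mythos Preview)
Your proposal proves the wrong statement. The lemma you were asked to prove is Lemma~\ref{lem:qpinsker}, the quantum Pinsker-type bound $\td(\rho_{XY},\rho_X\otimes\rho_Y)\le\sqrt{I(X:Y)/2}$ for arbitrary quantum variables $X,Y$. What you wrote is instead a proof of the Plug-In Lemma (Lemma~\ref{lem:plugin}): you introduce $j,\vec{y}_{j-1},y_j,y',\qv{s}$, which appear nowhere in the statement of Lemma~\ref{lem:qpinsker}, and you even invoke Lemma~\ref{lem:qpinsker} itself as a step in your argument. So as a proof of the stated lemma, this is circular at best and simply off-target.

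The paper's proof of Lemma~\ref{lem:qpinsker} is a one-line citation: it is the standard quantum Pinsker inequality relating trace distance and relative entropy, combined with the identity $D(\rho_{XY}\,\|\,\rho_X\otimes\rho_Y)=I(X:Y)$. No decomposition over classical side information, no Jensen, and no chain rule are needed here; those tools belong to the proof of Lemma~\ref{lem:plugin}, for which your write-up would indeed be essentially correct and matches the paper's argument there.
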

\begin{proof}
	This is a direct application of quantum Pinsker inequality \cite[Theorem 5.38]{WatrousQIT}, 
	when bearing in mind the connection between quantum divergence and mutual information as expressed in \cite[Eq.~(5.110)]{WatrousQIT}.
\end{proof}

We can finally prove the plug-in lemma.
\begin{proof}[Proof of Lemma~\ref{lem:plugin}]
	For convenience, we denote ${y}'_{j} = y'$.
	We start by noticing that by definition, conditioned on $j, \vec{y}_{j-1}$, the value $({y}'_{j}, \qv{s})$ is simply the product distribution of the marginals of $({y}_{j}, \qv{s})$.
	
	Thus,
	\begin{align*}
		\td( (j, \vec{y}_{j-1}, y_j, \qv{s}), (j, \vec{y}_{j-1}, {y}'_{j}, \qv{s}) ) &= \Ex_{j, \vec{y}_{j-1}}\Big[ \td( (\qv{s}, y_j)|(j, \vec{y}_{j-1}), (\qv{s}, {y}'_{j} )|(j, \vec{y}_{j-1}) )\Big]\\
		\text{(Proposition~\ref{prop:condtd})}  & \le \Ex_{j, \vec{y}_{j-1}}\Big[ \sqrt{ \tfrac{1}{2} I ( \qv{s}|(j, \vec{y}_{j-1}) : y_{j}|(j, \vec{y}_{j-1}) )}\Big]\\
		\text{(Lemma~\ref{lem:qpinsker})} & \le \sqrt{ \tfrac{1}{2} \Ex_{j, \vec{y}_{j-1}}[ I ( \qv{s}|(j, \vec{y}_{j-1}) : y_{j}|(j, \vec{y}_{j-1}) )]}\\
		\text{(Convexity (Jensen's Inequality))}& = \sqrt{ \tfrac{1}{2} I ( \qv{s}: y_{j}|j, \vec{y}_{j-1} )]}\\
		\text{(Lemma~\ref{lem:boundmutual})}&\le \sqrt{\ell/(2t)}~.
	\end{align*}
\end{proof}

\newcommand{\etalchar}[1]{$^{#1}$}

\end{document}